\theoremstyle{definition}
\newtheorem{prop}{Proposition}
\newtheorem{lemm}[prop]{Lemma}
\newtheorem{theo}[prop]{Theorem}
\newtheorem{rema}[prop]{Remark}
\newtheorem{assu}[prop]{Assumption}
\theoremstyle{remark}
 \patchcmd{\epigraph}{\@epitext{#1}}{\itshape\@epitext{#1}}{}{}
\date{Draft manuscript, December 2025}
\title{Partially Bayes p-values for large scale inference}
\author{
Nikolaos Ignatiadis\\
 \texttt{ignat@uchicago.edu}
 \and
 Li Ma \\
 \texttt{li.ma@uchicago.edu}
}
\begin{document}

\maketitle

\begin{abstract}
We seek to conduct statistical inference for a large collection of primary parameters, each with its own nuisance parameters. Our approach is partially Bayesian, in that we treat the primary parameters as fixed while we model the nuisance parameters as random and drawn from an unknown distribution which we endow with a nonparametric prior. We compute partially Bayes p-values by conditioning on nuisance parameter statistics, that is, statistics that are ancillary for the primary parameters and informative about the nuisance parameters. The proposed p-values have a Bayesian interpretation as tail areas computed with respect to the posterior distribution of the nuisance parameters. Similarly to the conditional predictive p-values of Bayarri and Berger, the partially Bayes p-values avoid double use of the data (unlike posterior predictive p-values). A key ingredient of our approach is that we model nuisance parameters hierarchically across problems; the sharing of information across problems leads to improved calibration.
We illustrate the proposed partially Bayes p-values in two applications: the normal means problem with unknown variances and a location-scale model with unknown distribution shape. We model the scales via Dirichlet processes in both examples and the distribution shape via {P\'olya} trees in the second. Our proposed partially Bayes p-values increase power
and calibration compared to purely frequentist alternatives.
\\

\noindent \textbf{Keywords:} Nuisance parameters, Bayesian nonparametrics, compound p-values, Dirichlet process mixture models, {P\'olya} trees

\end{abstract}

\section{Introduction} 
\label{sec:introduction}

We study the following common scenario in large-scale inference. We are faced with $n$ parallel statistical tasks pertaining to $n$ units of interest. For the $i$-th unit, we observe data $\dataset_i$ whose distribution is parameterized by a primary parameter $\primary_i \in \primaryspace$ and a nuisance parameter $\nuisance_i \in \nuisancespace$. Our goal is to conduct statistical inference for the primary parameters, $\primary_1,\dotsc,\primary_n$, by testing the null hypotheses $H_{1}: \primary_1=\primary_0, \dotsc, H_{n}: \primary_n = \primary_0$, where $\primary_0 \in \primaryspace$ is a pre-determined value. We seek to do so, by  effectively accounting for uncertainty in the nuisance parameters $\nuisance_1,\dotsc,\nuisance_n$ and by sharing information about the nuisance parameters across the $n$ units (but not the primary parameters). We achieve this goal by leveraging Bayesian nonparametrics and generalizing the conditional predictive p-values of~\citet{bayarri2000values} and the related secondarily Bayes p-values of~\citet{brown1965secondarily}.

As a starting point for our proposal, we summarize the $i$-th dataset $\dataset_i$  as  $\dataset_i \mapsto (T_i, U_i)$ for two statistics $T_i$ and $U_i$ (Fig.~\ref{fig:conditional_pred_pvalues}A). The statistic $T_i=T(\dataset_i) \in \RR$ is such that large values of $\abs{T_i}$ represent increasing evidence against the null $H_i: \primary_i = \primary_0$. The distribution of $T_i$ may depend on both $\primary_i$ and $\nuisance_i$. The statistic $U_i = U(\dataset_i) \in \nuisancestatisticspace$ is ancillary for the primary parameter $\primary_i$, that is, its distribution only depends on the nuisance parameter $\nuisance_i$. We call $T_i$ the test statistic and $U_i$ the nuisance parameter statistic. We then posit the following hierarchical model:
\begin{subequations}
\label{eq:hierarchical}
\begin{align} 
(T_i, U_i) \cond \primary_i, \nuisance_i \;\; &\simindep \;\; p(t, u \cond \primary_i, \nuisance_i ), \label{eq:hierarchy1}  \\ 
\nuisance_i \cond G \;\; & \simiid \;\;  G, \label{eq:hierarchy2} \\ 
G \;\; &\sim \;\; \Pi. \label{eq:hierarchy3}
\end{align}
\end{subequations}
\noindent Above, $p(t,u \mid \primary_i,\nuisance_i)$ denotes the density of $(T_i, U_i)$ given the unknown parameters $\primary_i,\nuisance_i$. 
Our approach to inference is partially Bayesian~\citep{cox1975note, mccullagh1990note}, since we treat the primary parameters $\primary_1,\dotsc,\primary_n$ as fixed while we model the nuisance parameters $\nuisance_1,\dots,\nuisance_n$ as random draws from a distribution $G$. The distribution $G$
is itself modeled as a draw from a prior $\Pi$, usually specified via a Bayesian nonparametric process to allow $G$ to take a variety of forms, although $\Pi$ may also be a parametric prior for some applications. Given the hierarchical model in~\eqref{eq:hierarchical}, we propose to compute partially Bayes (PB) p-values for all units $i=1,\dotsc,n$ by evaluating the null tail area of $T_i$ conditional on all nuisance parameter statistics, $U_1,\dotsc,U_n$,

\begin{equation}
\label{eq:conditional_pred_pvalue}
\begin{aligned}
&\cpvalue_i := \cpvaluefun_i(T_i, (U_1, \dotsc, U_n),\, \Pi),\;\;\\
&\cpvaluefun_i(t, (u_1,\dotsc,u_n),\, \Pi) := \Pi(\abs{T_i'} \geq \abs{t} \,\mid \, U_1 = u_1,\dotsc, U_n=u_n),
\end{aligned}
\end{equation}
where $T_i'$ is an identically distributed copy of $T_i$ under the null, $\primary_i = \primary_0$, conditional on $U_1,\dotsc,U_n$. For intuition, consider the case wherein in~\eqref{eq:hierarchy1}, $U_i$ is independent of $T_i$ conditional on $\primary_i, \nuisance_i$. Then, $T_i'$ may be generated as follows: sample $\nuisance_i'$ from the posterior distribution of $\nuisance_i$ conditional on \emph{all} the nuisance parameter statistics $U_1,\dotsc,U_n$ and then draw $T_i'$ from the conditional distribution of $T_i$ given $\primary_i=\primary_0$ and $\nuisance_i = \nuisance_i'$.

The computation of~\eqref{eq:conditional_pred_pvalue} may be conceptualized as consisting of two distinct steps illustrated in Fig.~\ref{fig:conditional_pred_pvalues}B,C: first, conduct Bayesian inference to compute the tail-area function of $T_i$ under the null, conditional on $U_1,\dotsc,U_n$; this step operationalizes sharing of information across units for the nuisance parameters. Second, use the derived tail area function in a frequentist fashion by evaluating it at the observed value of the test statistic $T_i$.  

\begin{figure}
\centering
\includegraphics[width=0.9\textwidth]{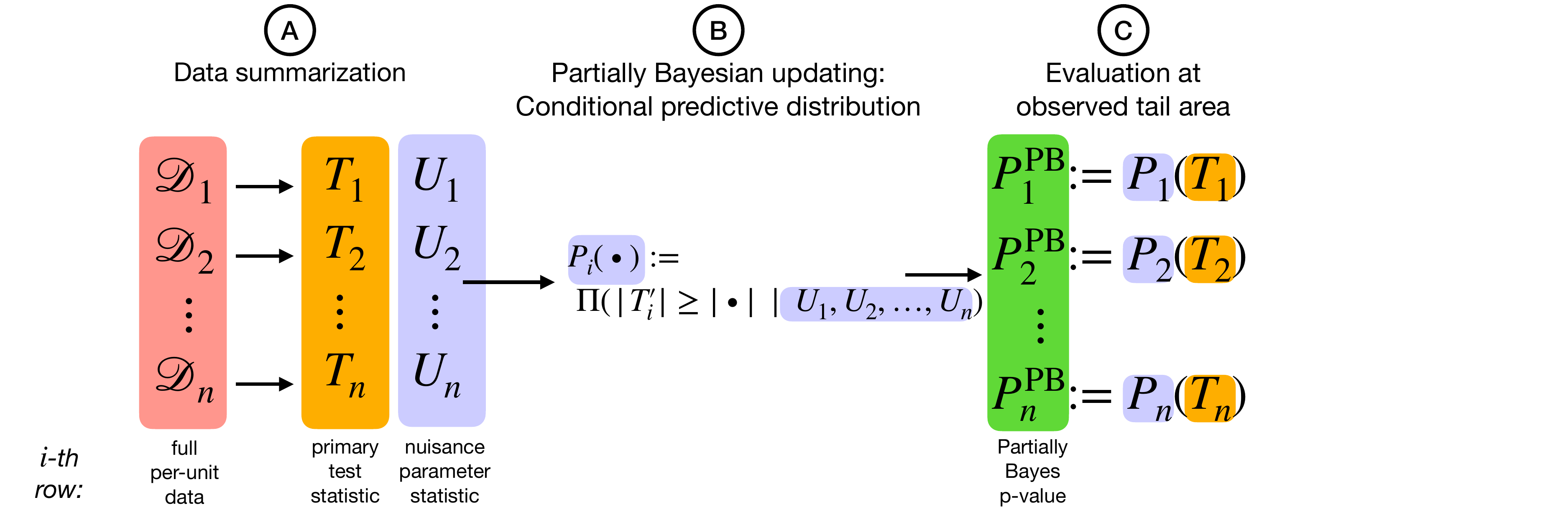}
\caption{Illustration of the three steps for computing partially Bayes p-values. (A) First, the data for each unit $i$ is summarized into a test statistic $T_i$ and a nuisance parameter statistic $U_i$. (B) Second, we conduct Bayesian inference, pooling information from all nuisance statistics $U_1, \dots, U_n$, to compute the tail-area function $\cpvaluefun_i(\cdot, (U_1, \dotsc, U_n),\, \Pi)$ in~\eqref{eq:conditional_pred_pvalue}. (C) Finally, the $i$-th partially Bayes p-value $\cpvalue_i$ is obtained by evaluating this function at the observed test statistic $T_i$.}
\label{fig:conditional_pred_pvalues}
\end{figure}

The proposed p-values are neither fully frequentist nor Bayesian; the proposal is a pragmatic and practical attempt at getting benefits of both worlds. From the frequentist perspective, our goal is to develop a new practical and general approach for constructing  powerful (partially Bayes) p-values that are approximately calibrated~\citep{rubin1984bayesianly, meng1994posterior, robins2000asymptotic} under the null in the presence of nuisance parameters. By calibration here we mean approximate uniformity, that is, for $i$ such that $\primary_i = \primary_0$, $\mathbb P[ \cpvalue_i \leq \alpha] \approx \alpha$ for all $\alpha \in [0,1]$. The calibration may hold conditional on nuisance parameters or not, and we defer a more detailed discussion to Section~\ref{sec:calibration}. What is important here, is that calibration will hold under two asymptotic regimes: as the per-unit sample size increases (that is, as we collect more data regarding each individual $\primary_i$), and also as the number of units $n$ increases with the per-unit sample size
remaining fixed. Existing approaches, e.g., the conditional predictive p-values of~\citet{bayarri2000values}, have guarantees under the former asymptotics~\citep{robins2000asymptotic}, but not under the latter.
From the Bayesian perspective, our goal is to gain power through principled information sharing across units by leveraging flexible data-adaptive modeling techniques developed in Bayesian nonparametrics~\citep{muller2015bayesian, ghosal2017fundamentals}.

\subsection{Motivation: Industrialist p-values}
\label{subsec:industrialist}
The motivation for this work stems from the widespread usage of p-values in high-throughput biological experiments as a convenient device for screening thousands of hypotheses; an intermediate step to be followed up by futher experimentation.
\citet{huber2016clash} calls such p-values ``industrialist,'' contrasting them to ``craftsperson'' p-values reported as the final statistical output of an analysis seeking to answer a single predefined question. Practitioners are used to reasoning about multiple testing procedures via p-values, about their approximate uniformity under the null, and to conducting visual inspections of p-value histograms and qq-plots, as explained in e.g.,~\citet{li2013statistical, robinson2014how}, \citet{ignatiadis2016datadriven},~\citet{breheny2018pvalue}, and~\citet[Chapter 6.9.1]{holmes2019modern}.

Our main thrust is that given their exploratory nature, industrialist p-values do not necessarily need to conform to the desiderata of purely frequentist p-values, that is, to require uniformity conditional on any value of the nuisance parameters. By permitting weaker notions of approximate uniformity, we can gain power and flexibility in large-scale inference, and may be able to construct p-values in settings wherein purely frequentist p-values are not available or would be overly conservative. 

As a case in point,
one of the standard ways of computing p-values in high-throughput biology, e.g., for microarray and RNA-Seq studies, is via the limma R package~(\citealp{smyth2004linear, ritchie2015limma}, $>$50,000 combined citations). 
We will explain further below that the way limma computes p-values  is conceptually very close to the partially Bayes p-values we propose; the main difference is that the information pooling in~\eqref{eq:hierarchical} is achieved via empirical Bayes rather than (nonparametric) hierarchical Bayes. Following~\citet{ignatiadis2024empirical}, we use the terminology ``empirical partially Bayes'' for methods such as limma.

For illustration, we consider the study of
\citet{palmieri2015genomewide} who collected samples from patients with Crohn's disease, and performed a genome-wide analysis comparing gene expression from inflamed vs.\ noninflamed colonic mucosa. Gene expression was measured with Affymetrix microarrays. We preprocessed the dataset following~\citet{klaus2018end}, leading to measurements for $n=16,125$ genes in $24$ samples (12 patients, inflamed and noninflamed sample per patient). A paired difference for the $i$-th gene
(pairing each patient's inflamed vs. noninflamed sample) yielded measurements $Z_{i1}, \dotsc, Z_{iK}$ with $K=12$, which we model as,
\begin{equation}
\label{eq:location_unknown_shape_intro}
\dataset_i = \cb{Z_{i1}, \dotsc, Z_{iK_i}},\;\;\; Z_{ij} \mid F_i  \simiid  F_i.
\end{equation}
The gene-specific distribution $F_i$ is modeled as $F_i(\cdot) = W_i(\cdot - \primary_i)$, where the primary parameter $\primary_i$ is the mean of $F_i$ and the nuisance parameter $\nuisance_i = W_i$ is a centered distribution with mean $0$. To screen for differentially expressed genes, we test the null hypotheses $H_i: \primary_i = 0$ for all $i=1,\dotsc,n$, which asks whether the expected pairwise difference in gene expression is zero.
We illustrate three strategies for computing p-values.
\begin{enumerate}[leftmargin=*]
\item \textbf{Standard t-test p-values:} We make the normality assumption that $W_i = \mathrm{N}(0, \sigma_i^2)$. Hence we are effectively testing $H_i: \theta_i=0$ based on \smash{$Z_{i1},\dotsc,Z_{iK} \simiid \mathrm{N}(\theta_i, \sigma_i^2)$}. We compute the p-value for $H_i$ using the standard two-sided t-test. The original study~\citep{palmieri2015genomewide} used t-tests, reporting genes with p-values $\leq 0.001$ as significant. 
\item \textbf{Partially Bayes p-values under normality:} We continue to make the same normality assumption as above. Following our partially Bayes framework, we treat $\theta_i$ as the primary parameter and \smash{$\sigma_i^2$} as the nuisance parameter. We let the primary test statistic $T_i$ be the sample average $\bar{Z}_i$ of the $Z_{ij}$, and use the sample variance $S_i^2$ of the $Z_{ij}$ as the nuisance parameter statistic $U_i$ (that is, \smash{$T_i = \bar{Z}_i$}, \smash{$U_i=S_i^2$}). We choose $\Pi$ in~\eqref{eq:hierarchy3} as a Dirichlet Process on $\mathbb R_+$. We postpone a more detailed description to Section~\ref{sec:normal_means} and for now only mention that the approach is very similar to limma with two modifications: a nonparametric model (Dirichlet Process) replaces a conjugate parametric prior and hierarchical Bayes replaces empirical Bayes. 
\item \textbf{Partially Bayes p-values with unknown distribution shape:} We now lean more closely to the general specification in~\eqref{eq:location_unknown_shape_intro} and pursue the following opportunity: instead of positing a normal noise model (as above), we seek to also learn the noise model from the data. To this end, we keep the same primary test statistic $T_i = \bar{Z}_i$ and let the nuisance parameter statistic $U_i$ be equal to the configuration $(Z_{i1}-T_i,\dotsc,Z_{iK}-T_i)$, whose distribution does not depend on $\primary_i$. The nuisance parameters now are given by the centered distributions $\nuisance_i = W_i$ and we model these as $W_i = W(\cdot / \tau_i)$, where $W$ is the same for all $i$ and drawn from a symmetrized P\'olya tree \citep{lavine1992aspects,walker1999bayesian}. Meanwhile, we model heteroscedasticity via the random scales $\tau_i$, which are distributed according to a distribution drawn from a Dirichlet Process. See Section~\ref{sec:unknown_shape} for details. 
\end{enumerate}

\begin{figure}
  \centering
  \begin{tabular}{c}
  \includegraphics[width=0.55\linewidth]{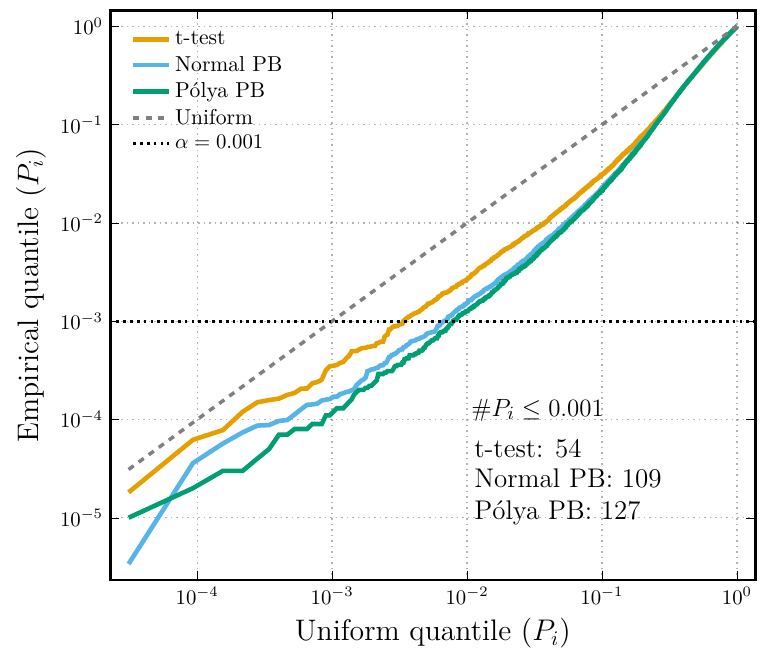}  
  \end{tabular}
  \caption{Reanalysis of the study of~\citet{palmieri2015genomewide}. The qq-plot compares the quantiles of the t-test p-values and the two types of partially Bayes p-values against the uniform quantiles. We observe that for large p-values all three methods are approximately uniform, but the left tail shows differences: the partially Bayes p-values are smaller than the t-test p-values, indicating higher power. The panel also indicates the number of p-values $\leq 0.001$ for each method.}
  \label{fig:palmieri}
\end{figure}

In Fig.~\ref{fig:palmieri} we show a qq-plot of all three p-values computed over all $n=16,125$ genes versus the uniform quantiles. We also report the number of p-values (from each method) that are $\leq 0.001$ (the threshold used in the original study).
We observe that both partially Bayes p-values are more powerful than the standard t-test p-values and that the partially Bayes p-values with unknown distribution shape are more powerful than the partially Bayes p-values under normality.

\subsection{Summary of contributions}
Our main contribution is the introduction of partially Bayes p-values, a new method for large-scale inference that handles nuisance parameters by pooling information using Bayesian nonparametric models.
In Section~\ref{sec:related_work}, we situate our proposal within the literature on conditional predictive p-values and existing partially Bayes methods.
Section~\ref{sec:calibration} provides a detailed theoretical analysis of the calibration properties of our proposed p-values under different sampling frames and asymptotic regimes. We establish asymptotic calibration as the per-unit sample size grows (Theorem~\ref{theo:K_to_infty}) and, more importantly for large-scale inference, as the number of units grows, both in the empirical Bayes (Theorem~\ref{theo:calibration_n_to_infty_empirical_bayes}) and frequentist (Theorem~\ref{theo:calibration_n_to_infty_frequentist}) frames.
In Section~\ref{sec:normal_means}, we apply our method to the normal means problem with unknown variances, using a Dirichlet Process prior; this corresponds to the ``Partially Bayes p-values under normality'' method in our introductory example, which shows substantial power gains (Figure~\ref{fig:palmieri}).
Section~\ref{sec:unknown_shape} extends the framework to the more challenging setting of location problems with unknown distributional shape, which we model using a P\'olya tree prior for the distribution shape and a Dirichlet Process for the noise scales. 
Section~\ref{sec:partially_bayes_summarization} introduces a practical method for visualizing the resulting decision boundaries. In Section~\ref{sec:numerical_study}, we validate our approach through simulation studies. Finally, in Section~\ref{sec:real_data_application}, we demonstrate the practical utility of our approach by revisiting the Crohn's disease gene expression study from Section~\ref{subsec:industrialist} and by analyzing a study on the health effects of low-frequency magnetic fields.

\section{Related work}
\label{sec:related_work}
To contextualize our proposal, we first provide a historical remark on the secondarily Bayes approach introduced in the PhD thesis of~\citet{brown1965secondarily} (also see~\citet{brown1967twomeans}) following a suggestion and supervision by John Tukey. To quote~\citet{brown1965secondarily}: ``Parameters in a problem may often be divided into two categories; those of direct interest (primary) and those required to estimate the precision of the estimators of the primary parameter. The latter type of parameter may be called secondary parameteters. The \emph{fully} Bayes approach to a problem requires that a prior distribution be assumed on \emph{all} parameters of interest in a problem; the \emph{secondarily} Bayes approach requires that a prior be assumed only on the \emph{secondary parameters}. The effect on the distribution of a test statistic caused by a secondary prior (on the secondary parameters) is normally less than would be caused by a similar appearing prior assumed on the primary parameters. The estimate of the primary parameter is usually obtained without regard to the secondary prior; only the estimate of the precision of this primary estimator will be affected.'' As one example,~\citet{brown1965secondarily} studies the normal means problem with unknown variance that we will consider in detail in Section~\ref{sec:normal_means}. Using our notation, the p-values of Morton Brown may be written as $P_i = \Pvalfunoracle(T_i, U_i, G)$, where (the abbreviation ``or'' refers to oracle and will be explained in Section~\ref{sec:calibration})
\begin{equation}
\Pvalfunoracle(t, u, G) := \PP[G]{\abs{T_i'} \geq \abs{t} \, \mid \, U_i=u} = \PP{\abs{T_i'} \geq \abs{t} \, \mid \, U_i=u, G}.
\label{eq:oracle_G_pvalue}
\end{equation}
In words, these p-values are analogous to our proposed partially Bayes p-values in~\eqref{eq:conditional_pred_pvalue}, but with the difference that they only account for the first two levels of the hierarchy in~\eqref{eq:hierarchical}, that is~\eqref{eq:hierarchy1} and~\eqref{eq:hierarchy2}, and require the analyst to specify a prior $G$ on the nuisance parameter rather than specifying a prior $\Pi$ on $G$. Thus, our proposed p-values are a direct extension of secondarily Bayes to large scale inference that allows for learning the nuisance parameter distribution $G$ from the hierarchical structure of the data. On the terminology front, we depart from the term ``secondarily Bayes'' and use the term ``partially Bayes'' as suggested by~\citet{cox1975note} and call the secondary parameters, nuisance parameters. 

From a different perspective, p-values akin to the ones in~\eqref{eq:oracle_G_pvalue} are proposed by~\citet{bayarri1999quantifying, bayarri2000values} under the name conditional predictive p-values. Compared to prior predictive p-values~\citep{box1980sampling} and posterior predictive p-values~\citep{guttman1967use, meng1994posterior}, these p-values enjoy several appealing theoretical properties: (i) for proper $G$, they arise naturally as the conditional distribution of the test statistic given the conditioning statistic under the prior predictive measure; (ii) their conditioning structure prevents double use of the data by separating the computation of the nuisance parameter posterior from the tail probability calculation (cf. Fig.\ref{fig:conditional_pred_pvalues}); (iii) this conditioning also ensures that the p-values primarily capture surprise in the data under the model, with the prior on nuisance parameters playing only a secondary role; and (iv) unlike prior predictive p-values, their construction remains valid even with improper prior $G$. While their approach treats each testing problem individually using non-informative priors, our work extends these ideas to the multiple testing setting by endowing $G$ with a nonparametric prior, allowing us to learn a proper prior $G$ by pooling information across the many hypotheses.

This work closely leans on the empirical partially Bayes p-values studied by~\citet{ignatiadis2024empirical}. Therein, the authors study parallel statistical decisions as in~\eqref{eq:hierarchical}, keeping only the first two levels of the hierarchical model, that is,~\eqref{eq:hierarchy1} and~\eqref{eq:hierarchy2}. The nuisance parameter distribution \smash{$G$} is treated as unknown and estimated as \smash{$\widehat{G} = \widehat{G}(U_1,\dotsc,U_n)$} via the empirical Bayes principle~\citep{robbins1956empirical,efron2019bayes}. The empirical partially Bayes p-values are then computed via the plug-in principle as \smash{$P_i^{\text{EPB}} = \Pvalfunoracle(t, u, \widehat{G})$} with $\Pvalfunoracle$ defined in~\eqref{eq:oracle_G_pvalue}. Here we propose a hierarchical Bayesian counterpart to the empirical Bayes approach.

We finally note that several authors have proposed approaches toward computing Bayesian p-values that are approximately calibrated (uniform) under the null hypothesis, for instance~\citet{hjort2006postprocessing, li2022calibrated, moran2023holdout}. We contribute to this literature by proposing a new construction suitable for the large scale inference problems.~\citet{cademartori2023joint} considers Bayesian p-values when the researcher can construct multiple test statistics for the same hypothesis; by contrast we have multiple hypotheses each with its own test statistic.  In Section~\ref{sec:calibration} below will seek to explain why procedures constructed only imposing a prior on the nuisance parameters may be of interest, and the mechanisms by which they lead to calibrated inference.

\section{Calibration: Sampling frames and asymptotic regimes}
\label{sec:calibration}
\epigraph{``The applied statistician should be Bayesian in principle
and calibrated to the real world in practice---appropriate frequency calculations
help to define such a tie.''}{--- \textup{Donald B. Rubin}, \citeyear{cox1975note}}

In using and intepreting our proposed partially Bayes p-values, an important question is whether they are calibrated, that is, whether $\cpvalue_i \approx \mathrm{Unif}[0,1]$ for $i \in \Hnull$. Answering this question depends on the sampling frame and asymptotic regime in which we study the problem. Here, by sampling frame we mean the following: which of the three levels in the hierarchical model~\eqref{eq:hierarchical} do we treat as fixed in our analysis? 
\begin{itemize}[leftmargin=*]
\item \textbf{Frequentist frame:} We only account for randomness in the first level of the hierarchical specification, i.e., in~\eqref{eq:hierarchy1} and treat all nuisance parameters $\nuisancevector =(\nuisance_1,\dotsc,\nuisance_n)$, as well as the primary parameters $\primaryvector = (\primary_1,\dotsc,\primary_n)$ as fixed parameters. When we derive results in this frame, we use the notation $\nuisancevectortrue$ and $\primaryvectortrue$ for the data-generating parameters and denote probabilities and expectations by $\PP[\nuisancevectortrue, \primaryvectortrue]{\cdot}$ and $\EE[\nuisancevectortrue, \primaryvectortrue]{\cdot}$ (or $\PP[\nuisancetrue_i, \primarytrue_i]{\cdot}$, $\EE[\nuisancetrue_i, \primarytrue_i]{\cdot}$ when the expressions only involve the $i$-th unit).
\item \textbf{Empirical Bayes frame:} Here we also account for randomness in the second level of the hierarchical specification, i.e., \eqref{eq:hierarchy1} and~\eqref{eq:hierarchy2} and treat the nuisance parameters $\nu_i$ as draws from their frequency distribution $G$ which we denote by $\Gstar$. Primary parameters $\primaryvectortrue$ are treated as fixed. We denote probabilities and expectations by $\PP[\Gstar, \primaryvectortrue]{\cdot}$ and $\EE[\Gstar, \primaryvectortrue]{\cdot}$. (The frequentist analysis of Bayesian nonparametric models typically refers to this sampling frame and treats the distribution $\Gstar$ of the latent parameters $\nuisance_i$ as fixed.)
\item \textbf{Hierarchical Bayes frame:} Here we account for randomness in all levels of the hierarchical specification, i.e., in~\eqref{eq:hierarchy1},~\eqref{eq:hierarchy2}, and~\eqref{eq:hierarchy3}. We continue to treat the primary parameters $\primaryvectortrue$ as fixed, denote $\Pi$ by $\Pi^\star$ and write probabilities and expectations as $\PP[\Pi^\star,\primaryvectortrue]{\cdot}$ and $\EE[\Pi^\star,\primaryvectortrue]{\cdot}$.
\end{itemize}
Below, we will study the calibration of the partially Bayes p-values in all of the above frames separately.\footnote{We emphasize that their computation (and definition in~\eqref{eq:conditional_pred_pvalue}) is the same in all cases.}  By iterated expectation, calibration in the frequentist frame implies calibration in the empirical Bayes frame and calibration in the empirical Bayes frame implies calibration in the fully Bayes frame.

Throughout we assume that the test statistics $T_i$ have a continuous distribution.
\begin{assu}[Continuous test statistics]
\label{assu:test_statistics}
The test statistic $T_i$ is such that for all $\nuisance_i \in \nuisancespace$ and $u_i \in \mathcal{U}$, the distribution of $T_i$ conditional on $\nuisance_i$ and $U_i=u_i$ is absolutely continuous with respect to the Lebesgue measure on $\mathbb R$.
\end{assu}
We note that this assumption precludes situations with discrete data. While it is not fundamental to our approach, this assumption will enable us to state results on calibration by comparison to exactly uniformly distributed p-values.
As our first result, we record that
calibration in the fully Bayes frame holds automatically under Assumption~\ref{assu:test_statistics}.
\begin{prop}[Calibration in the fully Bayes frame]
\label{prop:calibration_fully_bayes}
Suppose that data is generated as in the hierarchical model in~\eqref{eq:hierarchical} with $\primaryvectortrue$ fixed and $\Pi= \Pi^{\star}$ and that Assumption~\ref{assu:test_statistics} holds. Then, 
$ \max_{i \in \Hnull} \sup_{\alpha \in [0,1]}\abs{ \PP[\Pi^\star, \primaryvectortrue]{\cpvalue_i \leq \alpha} - \alpha} = 0.$
\end{prop}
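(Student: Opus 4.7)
The plan is to establish the stronger conditional claim that for every $i \in \Hnull$, the law of $\cpvalue_i$ given $U_1,\dotsc,U_n$ is $\mathrm{Unif}[0,1]$ under $\PP[\Pi^\star, \primaryvectortrue]{\cdot}$; integrating out $U_1,\dotsc,U_n$ then yields the stated unconditional uniformity and the supremum in the display equals zero. Define $\bar{F}_i(t) := \Pi^\star(\abs{T_i'} \geq t \mid U_1, \dotsc, U_n)$, so that by construction $\cpvalue_i = \bar{F}_i(\abs{T_i})$. The defining property of $T_i'$ in~\eqref{eq:conditional_pred_pvalue} is that it is identically distributed to $T_i$ under $\primary_i = \primary_0$ conditional on $U_1,\dotsc,U_n$; hence for $i \in \Hnull$ (where $\primarytrue_i = \primary_0$), $\bar{F}_i$ is also the conditional survival function of $\abs{T_i}$ itself.

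The second step is to verify that this conditional distribution is absolutely continuous, so that the probability integral transform applies. I would decompose
\begin{equation*}
p(t_i \mid u_1,\dotsc,u_n) \;=\; \int_{\nuisancespace} p(t_i \mid \primary_0, \nuisance_i, u_i)\, d \Pi^\star(\nuisance_i \mid u_1, \dotsc, u_n),
\end{equation*}
where Assumption~\ref{assu:test_statistics} ensures the integrand is a Lebesgue density in $t_i$ for every $\nuisance_i$; Fubini--Tonelli then certifies that the mixture is itself a Lebesgue density for $T_i$ given $U_1,\dotsc,U_n$. Consequently $\abs{T_i}$ has a continuous conditional c.d.f.\ (continuity at zero is automatic since the conditional mass at zero vanishes).

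Applying the probability integral transform then gives $\bar{F}_i(\abs{T_i}) \mid U_1,\dotsc,U_n \sim \mathrm{Unif}[0,1]$, which is the desired conclusion. The only subtlety worth flagging is the absolute continuity step, which leverages Assumption~\ref{assu:test_statistics} directly; without it one would only obtain super-uniformity (in which case the absolute value in the supremum could be dropped and the result weakened accordingly). Notably, the fully Bayes frame requires no further conditions on $\Pi^\star$ or on the joint dependence of $T_i$ and $U_i$ given $(\primary_i, \nuisance_i)$---the p-value achieves exact null uniformity essentially by construction, consistent with its interpretation as a posterior tail area of an exchangeable copy.
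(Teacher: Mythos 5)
Your proof is correct and follows essentially the same route as the paper's: the paper's one-line argument is precisely the probability integral transform applied to the conditional law of $T_i'$ given $U_1,\dotsc,U_n$. You merely make explicit the absolute-continuity step (mixing Assumption~\ref{assu:test_statistics} over the posterior of $\nuisance_i$) that the paper leaves implicit, which is a sound and welcome elaboration rather than a different approach.
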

\begin{proof}
The proof directly follows by applying the probability integral transform to the distribution of $T_i'$ conditional on $U_1,\dotsc,U_n$. 
\end{proof}
For the result above, it is important that the $\Pi$ in the definition of the partially Bayes p-values in~\eqref{eq:conditional_pred_pvalue} is identical to the data-generating $\Pi^{\star}$.
Proposition~\ref{prop:calibration_fully_bayes} evaluates calibration of the p-values under replications of the data under the full Bayesian hierarchical specification in~\eqref{eq:hierarchical} and assesses the type-I error integrating over the prior predictive distribution. Thus, the result of the theorem is similar in spirit to Theorem 1 and Lemma 1 for the predictive p-values of~\citet{meng1994posterior} with two key differences: incorporating a hierarchical structure with one more level, and getting exact uniformity due to the partial Bayesian conditioning rather than full Bayesian conditioning. 

In so far as the replications assumed in Proposition~\ref{prop:calibration_fully_bayes} are not realistic for practical applications (since we do not expect $\Pi$ to correspond to a data-generating mechanism), we turn to the more interesting question of calibration in the other two sampling frames. For these, we may only hope for calibration to hold asymptotically. Below, we consider two asymptotic regimes under which calibration holds. Our goal is to provide conceptual insights into the asymptotic calibration of the partially Bayes p-values.

\subsection{Representations of partially Bayes p-values}
\label{subsec:representations}
To motivate the asymptotic calibration results, we provide two representations of the partially Bayes p-values in terms of two types of ``oracle'' p-value function. First, consider the oracle p-value function for the $i$-th unit if we knew the precise value of the nuisance parameter $\nuisance_i$~\citep{meng1994posterior},
\begin{equation}
\Pvalfunoracle(t, u, \nuisance) := \PP{\abs{T_i'} \geq \abs{t} \, \mid \, U_i=u,\, \nuisance_i=\nuisance}.
\label{eq:oracle_nuisance_pvalue}
\end{equation}
It is immediate that this oracle p-value function satisfies the following calibration property.
\begin{prop}
  \label{prop:oracle_nuisance_pvalue}
Let $i \in \Hnull$. Suppose that $(T_i, U_i)$ is generated as in the first level of the hierarchical model in~\eqref{eq:hierarchy1} with $\primarytrue_i=\primary_0$ and nuisance parameter $\nuisancetrue_i$ and that Assumption~\ref{assu:test_statistics} holds. Then
$\sup_{\alpha \in [0,1]}\abs{ \PP[\nuisancetrue_i, \primarytrue_i]{\Pvalfunoracle(T_i, U_i, \nuisancetrue_i) \leq \alpha} - \alpha} = 0.$
\end{prop}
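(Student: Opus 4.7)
The plan is to prove Proposition~\ref{prop:oracle_nuisance_pvalue} by a direct application of the probability integral transform (PIT), reducing to a univariate continuous distribution problem by first conditioning on $U_i$. The strategy is: (i) show that under the null, the conditional law of $|T_i|$ given $U_i = u, \nuisance_i = \nuisancetrue_i$ is continuous, so its survival function has no jumps; (ii) rewrite $\Pvalfunoracle(T_i, U_i, \nuisancetrue_i)$ as that continuous survival function evaluated at its own argument; (iii) apply the PIT conditional on $U_i$; and (iv) use the tower property to remove the conditioning.

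More concretely, I would begin by fixing $u \in \nuisancestatisticspace$ and considering the conditional distribution of $T_i$ given $U_i = u$, $\nuisance_i = \nuisancetrue_i$, $\primary_i = \primary_0$. Assumption~\ref{assu:test_statistics} gives absolute continuity of this law with respect to Lebesgue measure, so in particular there is no atom at $0$, and the push-forward law of $|T_i|$ under this conditional distribution is likewise continuous on $[0, \infty)$. Defining $\bar F_u(s) := \PP{\abs{T_i'} \geq s \mid U_i = u, \nuisance_i = \nuisancetrue_i}$, we then have that $\bar F_u$ is a continuous non-increasing function, and by definition $\Pvalfunoracle(T_i, U_i, \nuisancetrue_i) = \bar F_{U_i}(|T_i|)$.

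Because $T_i'$ in the definition of $\Pvalfunoracle$ is an i.i.d.\ copy of $T_i$ under the null, the conditional law of $|T_i|$ given $U_i = u, \nuisance_i = \nuisancetrue_i$ under $\PP[\nuisancetrue_i, \primarytrue_i]{\cdot}$ coincides with the law defining $\bar F_u$. The PIT (in its one-complement form for continuous distributions) then gives that $\bar F_u(|T_i|) \sim \mathrm{Unif}[0,1]$ conditional on $U_i = u$, so
\begin{equation*}
\PP[\nuisancetrue_i, \primarytrue_i]{\Pvalfunoracle(T_i, U_i, \nuisancetrue_i) \leq \alpha \,\bigm|\, U_i = u} \;=\; \alpha, \qquad \text{for every } \alpha \in [0,1].
\end{equation*}
Integrating over the marginal distribution of $U_i$ under $\PP[\nuisancetrue_i, \primarytrue_i]{\cdot}$ removes the conditioning and yields the advertised equality uniformly in $\alpha$.

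There is no real obstacle here; the only point worth being explicit about is the passage from continuity of the conditional law of $T_i$ to continuity of the conditional law of $|T_i|$, which is needed to conclude that $\bar F_u$ is continuous and hence that the PIT produces an exactly uniform variable (as opposed to a stochastically larger one, which is what would happen in the presence of atoms). Absolute continuity of $T_i \mid U_i, \nuisance_i$ rules out any point mass, in particular at $0$, making this step immediate and giving exact (not merely asymptotic) calibration.
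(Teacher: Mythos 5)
Your argument is correct and is essentially the paper's own proof: the paper (in a footnote) states that Propositions~\ref{prop:oracle_nuisance_pvalue} and~\ref{prop:oracle_G_pvalue} follow by applying the probability integral transform to the distribution of $T_i'$ conditional on $U_i$ and $\nuisancetrue_i$, and omits the details. You simply fill in those details---continuity of the conditional survival function of $\abs{T_i}$ from Assumption~\ref{assu:test_statistics}, the PIT conditional on $U_i=u$, and the tower property to integrate out $U_i$---all of which is accurate.
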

Earlier, in~\eqref{eq:oracle_G_pvalue}, we also defined the oracle p-value function $\Pvalfunoracle(t, u, G)$ for fixed $G$.
We use the same notation for the oracle p-value functions in~\eqref{eq:oracle_nuisance_pvalue} and~\eqref{eq:oracle_G_pvalue} and distinguish them according to whether the third function argument is a nuisance parameter $\nuisance \in \mathcal{V}$ or a distribution $G$ over the nuisance parameter. This notation is justified by the fact that $\Pvalfunoracle(t, u, \delta_{\nu})  = \Pvalfunoracle(t, u, \nu)$ where $\delta_{\nu}$ is the Dirac measure at $\nu$. Moreover, the two oracle p-value functions are related to each other via
$
\Pvalfunoracle(t, u_i, G) = \EE[G]{\Pvalfunoracle(t,u_i,\nuisance_i) \mid U_i=u_i}.
$

We next record the calibration property of the oracle p-value function in~\eqref{eq:oracle_G_pvalue}.\footnote{The proof of Propositions~\ref{prop:oracle_nuisance_pvalue} and~\ref{prop:oracle_G_pvalue} follows by applying the probability integral transform to the distribution of $T_i'$ conditional on $U_i$ and $\nuisancetrue_i$, respectively $U_i$ and $\Gstar$. We omit details.
}
\begin{prop}
  \label{prop:oracle_G_pvalue}
Let $i \in \Hnull$. Suppose that $(T_i, U_i)$ is generated as in the first two levels of the hierarchical model in~\eqref{eq:hierarchical} with $\primarytrue_i=\primary_0$, $\nuisance_i \sim \Gstar$ and that Assumption~\ref{assu:test_statistics} holds. Then,
$\sup_{\alpha \in [0,1]}\abs{ \PP[\Gstar, \primarytrue_i]{\Pvalfunoracle(T_i, U_i, \Gstar) \leq \alpha} - \alpha} = 0.$
\end{prop}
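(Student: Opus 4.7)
My plan is to apply the probability integral transform (PIT) to the conditional distribution of $T_i$ given $U_i$ under the data-generating measure $\PP[\Gstar, \primarytrue_i]{\cdot}$ with $\primarytrue_i = \primary_0$. The structural observation driving the proof is that when the $G$ used inside the oracle p-value function matches the $\Gstar$ that generates $\nuisance_i$, the hypothetical copy $T_i'$ in the definition~\eqref{eq:oracle_G_pvalue} has, conditional on $U_i = u$, exactly the same law as the observed $T_i$. So the p-value is the survival function of $|T_i| \mid U_i$ evaluated at $|T_i|$, which is textbook PIT territory.

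First, I would verify that the conditional distribution of $|T_i|$ given $U_i = u$ is continuous under $\PP[\Gstar, \primarytrue_i]{\cdot}$. By Assumption~\ref{assu:test_statistics}, the conditional distribution of $T_i$ given $\nuisance_i = \nu$ and $U_i = u$ is absolutely continuous with respect to Lebesgue measure. Integrating this against the posterior distribution of $\nuisance_i \mid U_i = u$ induced by $\Gstar$ preserves absolute continuity (a routine Tonelli/dominated-convergence argument), so the marginal conditional law of $T_i$, and hence of $|T_i|$, given $U_i = u$ is continuous.

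Second, I would formalize the PIT step. Define the conditional null survival function $S_u(t) := \PP[\Gstar]{|T_i'| \geq t \mid U_i = u}$, which by the previous step is continuous and non-increasing in $t$ for $\Gstar$-almost every $u$. By construction $\Pvalfunoracle(T_i, U_i, \Gstar) = S_{U_i}(|T_i|)$. Applying the standard PIT to $|T_i| \mid U_i = u$ gives $S_{U_i}(|T_i|) \mid U_i = u \sim \mathrm{Unif}[0,1]$ for almost every $u$; marginalizing over $U_i$ then yields $\Pvalfunoracle(T_i, U_i, \Gstar) \sim \mathrm{Unif}[0,1]$ under $\PP[\Gstar, \primarytrue_i]{\cdot}$, so the supremum in the claim is identically zero.

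The only subtle point — and really the main obstacle, such as it is — lies in confirming that mixing the $\nu$-conditional continuous densities against the posterior of $\nuisance_i \mid U_i$ produces a continuous, rather than merely measurable, distribution for $|T_i| \mid U_i$. Once this measure-theoretic detail is in hand, the proof is a direct conditional PIT argument and parallels the one the authors indicate for Proposition~\ref{prop:oracle_nuisance_pvalue}, with the role of a single nuisance value $\nuisancetrue_i$ replaced by the entire nuisance distribution $\Gstar$.
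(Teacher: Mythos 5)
Your proposal is correct and follows essentially the same route as the paper, which proves this proposition (in a footnote) by applying the probability integral transform to the distribution of $T_i'$ conditional on $U_i$ and $\Gstar$. Your additional care in checking that mixing the $\nu$-conditional absolutely continuous laws against the posterior of $\nuisance_i \mid U_i$ preserves continuity is a correct and worthwhile detail that the paper omits.
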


The posterior mean of the distribution $G$ given $U_1,\dotsc,U_n$ under the prior $\Pi$ is the probability measure on $\nuisancespace$ that assigns to each measurable set $A \subseteq \nuisancespace$ the probability
\begin{equation}G_{\Pi}[u_1,\ldots,u_n](A) := \EE[\Pi]{G(A) \mid U_1 = u_1,\ldots,U_n=u_n}. 
    \label{eq:posterior_G}
\end{equation}
The main result of this section is that the partially Bayes p-values may be represented in terms of the oracle p-value functions in~\eqref{eq:oracle_G_pvalue} and~\eqref{eq:oracle_nuisance_pvalue}. These representations will be crucial for understanding the asymptotic behavior of the proposed partially Bayes p-values.

\begin{theo}[Partially Bayes and oracle p-value functions]
\label{theo:oracle_representation}
The $i$-th partially Bayes p-value function $\cpvaluefun_i(t, (u_1,\dotsc,u_n);\, \Pi)$ is equal to all of the following three expressions:
\begin{enumerate}[label=(\alph*)]
  \item $\EE[\Pi]{\Pvalfunoracle(t, u_i, \nu_i) \,\mid \, U_1 = u_1,\dotsc, U_n=u_n}$,
  \item  $\EE[\Pi]{\Pvalfunoracle(t, u_i, G) \,\mid \, U_1 = u_1,\dotsc, U_n=u_n}$,
  \item[(b')] $\Pvalfunoracle(t, u_i, G_{\Pi}[u_1,\ldots,u_{i-1},u_{i+1}, \ldots,u_n])$.
\end{enumerate}
\end{theo}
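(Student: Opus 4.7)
The plan is to establish (a) and (b) by routine tower-property calculations, and then to reduce (b') to (a) via a leave-one-out identity for the posterior of $\nuisance_i$ given $(U_1,\ldots,U_n)$ under $\Pi$.

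For (a), the hierarchical specification~\eqref{eq:hierarchical} implies that, conditional on $\nuisance_i$, the pair $(T_i',U_i)$ is independent of $(G,\nuisance_{-i},U_{-i})$ under the joint distribution induced by $\Pi$ with $\primary_i=\primary_0$. Combined with the definition of $\Pvalfunoracle(\cdot,\cdot,\nuisance)$, this gives
\begin{equation*}
\Pi\bigl(|T_i'|\geq|t|\,\bigm|\,U_1=u_1,\ldots,U_n=u_n,\,\nuisance_i\bigr) = \Pvalfunoracle(t,u_i,\nuisance_i),
\end{equation*}
and taking a conditional expectation with respect to $U_1,\ldots,U_n$ yields (a). Conditioning on $G$ rather than on $\nuisance_i$, and using that given $G$ the pair $(T_i',U_i)$ is independent of $U_{-i}$ (since $(T_j',U_j,\nuisance_j)$ are i.i.d.\ across $j$ given $G$), the same argument delivers (b).

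For (b'), the substantive content is the following leave-one-out identity: under $\Pi$, the conditional distribution of $\nuisance_i$ given $U_1=u_1,\ldots,U_n=u_n$ equals the conditional distribution of $\nuisance_i$ given $U_i=u_i$ in the \emph{single-level} Bayesian model in which the prior on $\nuisance_i$ is $G_\Pi[u_{-i}] := G_\Pi[u_1,\ldots,u_{i-1},u_{i+1},\ldots,u_n]$. I would prove this via Bayes' rule: letting $p(u\mid\nuisance)$ denote the density of $U_i$ given $\nuisance_i=\nuisance$, combining
\begin{equation*}
p_\Pi(\nuisance_i\mid G, u_{1:n}) \propto p(u_i\mid\nuisance_i)\,dG(\nuisance_i) \quad\text{and}\quad d\Pi(G\mid u_{1:n}) \propto \Bigl[\,\textstyle\int p(u_i\mid\nuisance)\,dG(\nuisance)\Bigr]\,d\Pi(G\mid u_{-i})
\end{equation*}
produces a cancellation of the bracketed normalizer. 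Integrating out $G$ against $\Pi(\cdot\mid u_{-i})$ then collapses the $G$-integral onto $G_\Pi[u_{-i}]$ by the definition of the posterior mean in~\eqref{eq:posterior_G}, yielding $p_\Pi(\nuisance_i\mid u_{1:n}) \propto p(u_i\mid\nuisance_i)\,dG_\Pi[u_{-i}](\nuisance_i)$, as claimed. Given the identity, (b') follows from (a) together with the elementary relation $\Pvalfunoracle(t,u,G)=\EE[G]{\Pvalfunoracle(t,u,\nuisance_i)\mid U_i=u}$ stated just after~\eqref{eq:oracle_nuisance_pvalue}.

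The main obstacle is the leave-one-out identity in (b'); parts (a) and (b) are bookkeeping once the conditional-independence structure in~\eqref{eq:hierarchical} has been extracted. The delicate points in (b') are to manipulate densities rigorously---which, beyond Assumption~\ref{assu:test_statistics}, requires a dominating-measure hypothesis on $U_i$---and to invoke Fubini to swap the posterior-over-$G$ integral with the $\nuisance_i$-integral.
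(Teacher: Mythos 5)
Your proposal is correct and follows essentially the same route as the paper: (a) and (b) by the tower property after extracting the conditional-independence structure of the hierarchy, and (b') by updating $G$ on the leave-one-out data $U_j$, $j\neq i$, and observing that the normalizer $f(u_i;G)$ cancels so that the $G$-integral collapses onto the posterior mean $G_{\Pi}[u_1,\ldots,u_{i-1},u_{i+1},\ldots,u_n]$. The paper states this in two sentences; your write-up simply supplies the Bayes-rule cancellation that the paper leaves implicit.
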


\begin{proof}
The results in (a) and (b) follow by iterated expectation, conditioning on $U_1,\dotsc,U_n, \nu_i$ and $U_1,\dotsc,U_n, G$ respectively. The result in (b') follows by first updating $G$ conditional on $U_j, j \neq i$ (leaving $U_i$ out) and then evaluating the oracle p-value function $\Pvalfunoracle(t, u_i, G)$ at the posterior mean distribution in~\eqref{eq:posterior_G}. 
\end{proof}
For (a), the partially Bayes p-value $\cpvalue_i$ may be interpreted as the expectation of the oracle p-value function $\Pvalfunoracle(t, u_i, \nu_i)$ with respect to the posterior distribution of the $i$-th nuisance parameter $\nu_i$. 
The results in (b) and (b') are conceptually similar and relate the partially Bayes p-values to the empirical partially Bayes p-values studied by~\citet{ignatiadis2024empirical}. Therein, one first estimates $G$ by \smash{$\widehat{G}$}, and then 
computes \smash{$\ebpvaluefun_i = \Pvalfunoracle(T, U_i, \widehat{G})$}. In this way, following the Bayes empirical Bayes maxim of~\citet{deely1981bayes}, one could have used the terminology ``Bayes empirical partially Bayes p-values'' instead of partially Bayes p-values. We prefer the latter as it is shorter and our proposal may be interpreted without reference to the empirical Bayes principle.

\subsection{Asymptotic regime I: Information-rich statistical tasks}

As previewed earlier, we consider two asymptotic regimes for our calibration results.
In the first regime, we keep the tasks $i=1,\dotsc,n$ fixed and consider asymptotics in which the information for the tasks increases. To be concrete, we write \smash{$\dataset_i = \dataset_i^{K}$} where $K \in \mathbb N$ is a parameter that controls the amount of information in the data and we will let $K \to \infty$. If $\dataset_i$ consists of iid observations then $K$ would be the sample size, but we allow for more general interpretation of $K$. We use $K$ as a superscript for the primary and nuisance parameter statistics $T_i = T_i^{K}$,  $U_i = U_i^{K}$, and also allow the space in which $U_i$ takes values to depend on $K$, that is, $U_i \in \mathcal{U}^K$. Finally, we also use $K$ as a subscript for the oracle p-value functions $\Pvalfunoracle(t,u,\nu) = \Pvalfunoracle[K](t,u,\nu)$. The parameters $\primary_i$ and $\nuisance_i$ do not depend on $K$.

In this regime, we may expect the partially Bayes p-values to be calibrated as $K \to \infty$ in the frequentist frame by the following reasoning. Suppose the posterior for $\nuisance_i$ given \smash{$U_1^{K},\dotsc,U_n^{K}$} concentrates around $\nuisancetrue_i$. For instance, a Bernstein-von Mises theorem may indicate that the posterior distribution would concentrate around \smash{$\hat{\nuisance}_i = \hat{\nuisance}(U_i)$}, the maximum likelihood estimator of $\nuisance_i$ based on \smash{$U_i = U_i^K$}. Then, we may expect the following approximate equalities,
$$
\EE[\Pi]{\Pvalfunoracle(t, U_i, \nu_i) \,\mid \, U_1 ,\dotsc, U_n} \approx \Pvalfunoracle(t, U_i, \hat{\nuisance}_i) \approx \Pvalfunoracle(t, U_i,\nuisancetrue_i),
$$
where the first approximate equality follows from the assumed concentration of the posterior distribution of $\nuisance_i$ around $\hat{\nuisance}_i$, and the second assumes that $\hat{\nuisance}_i$ is a good estimate of $\nuisancetrue_i$ (because $K$ is large). Thus, using Theorem~\ref{theo:oracle_representation}a), we get that $\cpvalue_i \approx \Pvalfunoracle(T_i, U_i,\nuisancetrue_i)$, and the latter is calibrated by Proposition~\ref{prop:oracle_nuisance_pvalue}.

To simplify statements of results, we make the following regularity assumptions.

\begin{assu}[Nuisance parameters]
\label{assu:nuisance_parameters}
The nuisance parameter space $\nuisancespace$ is a compact metric space with metric $d(\cdot,\cdot)$. 
\end{assu}

\begin{assu}[Regular p-value functions]
  \label{assu:regular_pvalues}
The functions $\cb{\mathcal{V} \to [0,1],\, \nu \mapsto \Pvalfunoracle[K](t^K,u^K,\nu)}$ indexed by $K \in \mathbb N$, $t^{K} \in \RR, u^{K} \in \mathcal{U}^{K}$ are uniformly equicontinuous.
\end{assu}

The following theorem records a formal statement regarding asymptotic calibration in the frequentist frame as $K \to \infty$.

\begin{theo}[Calibration as $K\to \infty$ in the frequentist frame]
\label{theo:K_to_infty}
Let $\nuisancetrue_i \in \mathcal{V}$, $i=1,\dotsc,n$ be fixed. Suppose that Assumptions~\ref{assu:test_statistics},~\ref{assu:nuisance_parameters} and~\ref{assu:regular_pvalues} hold, and that for all $i \in \Hnull$ we have that:
\begin{itemize}
\item[$(*)$] For any $\delta >0$, it holds that 
$\EE[\nuisancevectortrue]{ \Pi(\nu_i\,:\, d(\nuisance_i, \nuisance_i^{\star}) > \delta \mid U_1,\ldots,U_n)} \to 0\, \text{ as }\, K \to \infty.$    
\end{itemize}
Then, as $K \to \infty$ ($n$ fixed), we have that:\;\;
$\displaystyle\limsup_{K \to \infty} \max_{i \in \Hnull} \sup_{\alpha \in [0,1]}\abs{\PP[\nuisancevectortrue, \primaryvectortrue]{ \cpvalue_i \leq \alpha} - \alpha} =0.$
\end{theo}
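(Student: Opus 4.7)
The plan is to reduce the question of calibration for $\cpvalue_i$ to the known calibration of the parameter-oracle p-value $\Pvalfunoracleshort_i := \Pvalfunoracle[K](T_i, U_i, \nuisancetrue_i)$ established in Proposition~\ref{prop:oracle_nuisance_pvalue}. The key bridge is Theorem~\ref{theo:oracle_representation}(a), which gives us the representation
\begin{equation*}
\cpvalue_i = \EE[\Pi]{\Pvalfunoracle[K](T_i, U_i, \nuisance_i) \,\mid\, U_1, \dotsc, U_n}.
\end{equation*}
Writing $\Pvalfunoracleshort_i = \EE[\Pi]{\Pvalfunoracle[K](T_i, U_i, \nuisancetrue_i) \mid U_1,\dotsc,U_n}$ trivially, the difference $\cpvalue_i - \Pvalfunoracleshort_i$ is a conditional expectation of $\Pvalfunoracle[K](T_i, U_i, \nuisance_i) - \Pvalfunoracle[K](T_i, U_i, \nuisancetrue_i)$ with respect to the posterior of $\nuisance_i$ given $U_1,\dotsc,U_n$. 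My target is to show $\cpvalue_i - \Pvalfunoracleshort_i \to 0$ in probability as $K \to \infty$, and then upgrade this to uniform calibration using the fact that $\Pvalfunoracleshort_i \sim \mathrm{Unif}[0,1]$.

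The first main step is to bound $\EE[\nuisancevectortrue,\primaryvectortrue]{|\cpvalue_i - \Pvalfunoracleshort_i|}$ using uniform equicontinuity. Fix $\varepsilon > 0$; by Assumption~\ref{assu:regular_pvalues} there exists $\delta > 0$ such that $d(\nuisance, \nuisancetrue_i) \leq \delta$ implies $|\Pvalfunoracle[K](t,u,\nuisance) - \Pvalfunoracle[K](t,u,\nuisancetrue_i)| \leq \varepsilon$ for every $K, t, u$. Splitting the inner integrand over $\cb{d(\nuisance_i, \nuisancetrue_i) \leq \delta}$ and its complement, and using that p-values lie in $[0,1]$ on the bad set, gives
\begin{equation*}
|\cpvalue_i - \Pvalfunoracleshort_i| \leq \varepsilon + \Pi\bigl(d(\nuisance_i, \nuisancetrue_i) > \delta \,\big|\, U_1,\dotsc,U_n\bigr).
\end{equation*}
Taking $\EE[\nuisancevectortrue]{\cdot}$ and applying assumption $(*)$ lets the second term vanish as $K \to \infty$; since $\varepsilon$ was arbitrary, $\EE[\nuisancevectortrue]{|\cpvalue_i - \Pvalfunoracleshort_i|} \to 0$, and in particular $\cpvalue_i - \Pvalfunoracleshort_i \to 0$ in $\nuisancevectortrue$-probability.

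The second main step converts this convergence into uniform calibration. For any $\varepsilon > 0$ and $\alpha \in [0,1]$, the standard sandwich
\begin{equation*}
\PP[\nuisancevectortrue,\primaryvectortrue]{\Pvalfunoracleshort_i \leq \alpha - \varepsilon} - \PP[\nuisancevectortrue,\primaryvectortrue]{|\cpvalue_i - \Pvalfunoracleshort_i| > \varepsilon} \leq \PP[\nuisancevectortrue,\primaryvectortrue]{\cpvalue_i \leq \alpha} \leq \PP[\nuisancevectortrue,\primaryvectortrue]{\Pvalfunoracleshort_i \leq \alpha + \varepsilon} + \PP[\nuisancevectortrue,\primaryvectortrue]{|\cpvalue_i - \Pvalfunoracleshort_i| > \varepsilon}
\end{equation*}
holds, and since $\Pvalfunoracleshort_i$ is exactly uniform on $[0,1]$ by Proposition~\ref{prop:oracle_nuisance_pvalue}, the outer terms equal $\alpha \pm \varepsilon$ (truncated to $[0,1]$). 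Supremizing over $\alpha$ gives $\sup_{\alpha}|\PP[\nuisancevectortrue,\primaryvectortrue]{\cpvalue_i \leq \alpha} - \alpha| \leq \varepsilon + \PP[\nuisancevectortrue,\primaryvectortrue]{|\cpvalue_i - \Pvalfunoracleshort_i| > \varepsilon}$; sending $K \to \infty$ and then $\varepsilon \to 0$ kills both terms. Finally, since $\Hnull \subseteq \cb{1,\dotsc,n}$ is finite and $n$ is fixed, the maximum over $i \in \Hnull$ of a finite collection of vanishing quantities also vanishes, yielding the claim.

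The main obstacle is really the first step: getting a useful bound on $|\cpvalue_i - \Pvalfunoracleshort_i|$ that is valid uniformly in $K$ and integrable against the posterior. Assumption~\ref{assu:regular_pvalues} (uniform equicontinuity in $\nuisance$ over all $K, t, u$) is doing the heavy lifting here, essentially replacing a Bernstein-von Mises argument at $\hat{\nuisance}_i$ by a one-step comparison directly at $\nuisancetrue_i$, which is made possible by the posterior concentration in $(*)$. Compactness of $\nuisancespace$ (Assumption~\ref{assu:nuisance_parameters}) ensures the equicontinuity is uniform on the whole space so that the split into a $\delta$-ball and its complement is clean.
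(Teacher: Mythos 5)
Your proposal is correct and follows essentially the same route as the paper's proof: you use Theorem~\ref{theo:oracle_representation}(a) to compare $\cpvalue_i$ with the oracle p-value $\Pvalfunoracle[K](T_i,U_i,\nuisancetrue_i)$, bound the $L^1$ gap by $\varepsilon$ plus the posterior mass outside a $\delta$-ball via Assumption~\ref{assu:regular_pvalues} and condition $(*)$, and then transfer exact uniformity of the oracle p-value to $\cpvalue_i$. The only cosmetic difference is that the paper packages the final step as Lemma~\ref{lemm:l1_to_uniformity} (an indicator-comparison bound optimized over the slack $\delta$), whereas you use the equivalent in-probability sandwich plus Markov's inequality.
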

Assumption $(*)$ posits the key requirement for the result: for each $i \in \Hnull$, the posterior for $\nuisance_i$ concentrates around $\nuisancetrue_i$ (i.e., we have posterior consistency for $\nuisancetrue_i$). 
Under the above assumptions, our proposed partially Bayes p-values are calibrated asymptotically in $K$. Note that the above statements are purely frequentist. The result above is not surprising. For instance, the conditional predictive p-values of~\citet{bayarri2000values} (that do not share information across problems) are asymptotically calibrated as shown in~\citet{robins2000asymptotic}.

In Proposition~\ref{prop:K_to_infty_empirical_bayes} of Supplement~\ref{sec:further_theoretical_results} we show that 
the calibration result of Theorem~\ref{theo:K_to_infty} extends to the empirical Bayes frame.

\subsection{Asymptotic regime II: Large number of statistical tasks}

Our second asymptotic regime is less standard and clarifies the benefits of large scale inference.
Here, we keep the amount of information for each statistical task fixed, say at $K_0$ (which we omit from the notation) and consider asymptotics with a growing  number of statistical tasks, i.e., $n \to \infty$. In this regime, there are qualitative differences between the frequentist and empirical Bayes frames, and we start by discussing calibration in the empirical Bayes frame.

\subsubsection{Empirical Bayes frame}
In the asymptotic regime $n \to \infty$, we may expect the partially Bayes p-values to be calibrated in the empirical Bayes frame by the following reasoning. Suppose that the nuisance parameter distribution $G$ is estimated consistently based on $U_1,\dotsc,U_n$ in the sense that the posterior is weakly consistent at $\Gstar$ as $n \to \infty$. Then we would get that,
\begin{equation}
\EE[\Pi]{\Pvalfunoracle(t, U_i, G) \,\mid \, U_1,\dotsc, U_n} \approx \Pvalfunoracle(t, U_i, \Gstar),
\label{eq:approximation_n_to_infty}
\end{equation}
and so by Theorem~\ref{theo:oracle_representation}b), we would have that $\cpvalue_i \approx \Pvalfunoracle(t, U_i, \Gstar)$. The latter is calibrated by Proposition~\ref{prop:oracle_G_pvalue}, and so $\cpvalue_i$ should also be asymptotically calibrated in the empirical Bayes frame. For the formal statement, we introduce one additional technical assumption.
\begin{assu}[Nuisance parameter statistics]
  \label{assu:nuisance_statistics}
The collection of distributions given by $\cb{ \mathbb P[U_i \in \cdot \mid \nuisance_i] : \nuisance_i \in \nuisancespace}$ is tight, that is, for all $\varepsilon > 0$, there exists a compact set $C \subset \mathcal{U}$ such that $\PP[\nuisance_i]{U_i \notin C} < \varepsilon$ for all $\nuisance_i \in \nuisancespace$. All distributions in the same collection are absolutely continuous with respect to a measure $\lambda$ on $\mathcal{U}$ and their density is denoted by $p(u \mid \nu)$. For any compact set $C \subset \mathcal{U}$, we have that $\lambda(C) < \infty$ and the functions $\cb{\nu \mapsto p(u \mid \nu) : u \in C}$ are uniformly bounded and equicontinuous.
\end{assu}

To state our theorem, we also recall the definition of the bounded Lipschitz metric $\Dbl$ on the space of all probability measures $\mathcal{P}(\mathcal{V})$ supported on the metric space $\mathcal{V}$. For $G, G' \in \mathcal{P}(\mathcal{V})$, we define
\begin{equation*}
  \label{eq:BL_metric}
\Dbl(G, G') := \sup \cb{ \abs{\int \psi(\nu) \, G(\dd\nu) - \int \psi(\nu) \, G'(\dd\nu)}\;:\; \Norm{\psi(\cdot)}_{\infty} \leq 1,\; \psi(\cdot) \mbox{~is 1-Lipschitz} }.
\end{equation*}
Our main result in this regime is the following theorem.

\begin{theo}[Calibration as $n \to \infty$ in the empirical Bayes frame]
\label{theo:calibration_n_to_infty_empirical_bayes}
  Suppose that $K$ remains fixed and that $n \to \infty$. Let $\Gstar$ in~\eqref{eq:hierarchy2} be fixed, \smash{$\nuisance_i \simiid \Gstar$}, suppose Assumptions~\ref{assu:test_statistics}, ~\ref{assu:nuisance_parameters},~\ref{assu:regular_pvalues} and~\ref{assu:nuisance_statistics} hold and further assume:
  \begin{itemize}
  \item [$(*)$] $\EE[\Gstar]{\mathfrak{D}_{\mathrm{BL}}(G_{\Pi}[U_1,\ldots,U_n],\, \Gstar)} \to 0 \,\text{ as }\, n \to \infty.$
  \end{itemize}
  Then as $n \to \infty$ (with $K$ fixed), it holds that:
  $$\limsup_{n \to \infty} \sup_{\alpha \in [0,1]}  \max_{i \in \Hnull} \abs{\PP[\Gstar, \primaryvectortrue]{ \cpvalue_i \leq \alpha} -\alpha} =0.$$
\end{theo}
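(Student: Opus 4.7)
The plan is to combine representation (b') of Theorem~\ref{theo:oracle_representation} with the exact calibration property from Proposition~\ref{prop:oracle_G_pvalue}. Let $G_n^{(-i)} := G_{\Pi}[U_1,\dotsc,U_{i-1},U_{i+1},\dotsc,U_n]$ be the leave-one-out posterior mean, so that (b') gives $\cpvalue_i = \Pvalfunoracle(T_i, U_i, G_n^{(-i)})$, while Proposition~\ref{prop:oracle_G_pvalue} ensures $\Pvalfunoracle(T_i, U_i, \Gstar) \sim \mathrm{Unif}[0,1]$ for $i \in \Hnull$ under $\Gstar$. The strategy is to control
\[
\Delta_{n,i} \;:=\; \sup_{t \in \RR}\, \bigl|\,\Pvalfunoracle(t, U_i, G_n^{(-i)}) - \Pvalfunoracle(t, U_i, \Gstar)\,\bigr|,
\]
show that $\Delta_{n,i} \to 0$ in probability with a bound uniform in $i \in \Hnull$, and then deduce asymptotic uniformity of $\cpvalue_i$ via a sandwich against the exactly uniform $\Pvalfunoracle(T_i, U_i, \Gstar)$.

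The first step upgrades assumption $(*)$ to the leave-one-out version: by exchangeability of $U_1,\dotsc,U_n$ under $\Gstar$, $\EE[\Gstar]{\Dbl(G_n^{(-i)},\Gstar)} = \EE[\Gstar]{\Dbl(G_{\Pi}[U_1,\dotsc,U_{n-1}],\Gstar)} \to 0$, and this bound is free of $i$. Since $\mathcal{V}$ is compact (Assumption~\ref{assu:nuisance_parameters}), $\Dbl$ metrizes weak convergence on $\mathcal{P}(\mathcal{V})$, so $G_n^{(-i)}$ converges weakly to $\Gstar$ in probability. The second step transfers this weak convergence into uniform-in-$t$ convergence of the oracle p-value functional using the Bayes-rule identity
\[
\Pvalfunoracle(t, u, G) \;=\; \frac{\int \Pvalfunoracle(t, u, \nu)\,p(u \mid \nu)\,G(\dd \nu)}{\int p(u \mid \nu)\,G(\dd \nu)}.
\]
For $u$ in a compact subset $C \subset \mathcal{U}$ supplied by the tightness in Assumption~\ref{assu:nuisance_statistics}, the families $\cb{\nu \mapsto \Pvalfunoracle(t, u, \nu)\,p(u \mid \nu)\,:\, t \in \RR,\, u \in C}$ and $\cb{\nu \mapsto p(u \mid \nu)\,:\, u \in C}$ are uniformly bounded and uniformly equicontinuous on the compact space $\mathcal{V}$ (by Assumptions~\ref{assu:regular_pvalues} and~\ref{assu:nuisance_statistics}). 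By Arzel\`a-Ascoli each family is totally bounded in sup norm, and weak convergence of $G_n^{(-i)}$ then implies uniform convergence over $t$ of both numerator and denominator. The denominator limit $\int p(u \mid \nu)\,\Gstar(\dd \nu)$ is the marginal density of $U_i$ at $u$ under $\Gstar$ and is positive on a set of full $U_i$-probability; a further truncation to $u$ values on which it is bounded away from zero (again using the tightness in Assumption~\ref{assu:nuisance_statistics}) yields $\Delta_{n,i} \to 0$ in probability, uniformly in $i$.

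To close the argument, fix $\epsilon > 0$. On the event $\cb{\Delta_{n,i} \leq \epsilon}$, the elementary inclusions $\cb{\Pvalfunoracle(T_i, U_i, \Gstar) \leq \alpha - \epsilon} \subseteq \cb{\cpvalue_i \leq \alpha} \subseteq \cb{\Pvalfunoracle(T_i, U_i, \Gstar) \leq \alpha + \epsilon}$ hold pointwise. Taking expectations and using Proposition~\ref{prop:oracle_G_pvalue} gives $\abs{\PP[\Gstar,\primaryvectortrue]{\cpvalue_i \leq \alpha} - \alpha} \leq \epsilon + \PP[\Gstar]{\Delta_{n,i} > \epsilon}$, where the right-hand side is free of $\alpha$ and, by the previous step, has $\limsup$ equal to $\epsilon$ uniformly in $i \in \Hnull$; letting $\epsilon \to 0$ yields the conclusion.

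The main obstacle is the second step: upgrading weak convergence of the random measure $G_n^{(-i)}$ to sup-norm-in-$t$ convergence of the ratio $\Pvalfunoracle(\cdot, U_i, G_n^{(-i)})$. This requires simultaneously (i) an Arzel\`a-Ascoli argument to handle the continuum in $t$, (ii) the tightness-based truncation to a compact set of $u$-values from Assumption~\ref{assu:nuisance_statistics}, and (iii) an additional truncation ensuring the denominator $\int p(u \mid \nu)\,\Gstar(\dd \nu)$ is bounded below, so that the \emph{ratio} (not merely the numerator and denominator separately) converges uniformly.
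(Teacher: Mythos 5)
Your proposal is correct and follows essentially the same route as the paper's proof: representation (b') of Theorem~\ref{theo:oracle_representation}, comparison to the exactly calibrated $\Pvalfunoracle(T_i,U_i,\Gstar)$ of Proposition~\ref{prop:oracle_G_pvalue}, the Bayes-ratio identity, Arzel\`a--Ascoli plus the tightness truncation from Assumption~\ref{assu:nuisance_statistics}, exchangeability to get the leave-one-out version of $(*)$, and a final $\epsilon$-sandwich (which the paper packages instead as an $L_1$ bound fed into Lemma~\ref{lemm:l1_to_uniformity}). The one place you diverge is the denominator: you introduce a second truncation to $\cb{u : f(u;\Gstar)\geq\kappa}$, which is valid because its complement within $C$ has $\Gstar$-probability at most $\kappa\lambda(C)$, whereas the paper's Lemma~\ref{lem:pvalue_ratios_to_diffs} keeps the true marginal $f(\cdot;\Gstar)$ in the denominator of both error terms and then dispenses with any lower bound via the identity $\EE[\Gstar]{\ind(U_i\in C)/f(U_i;\Gstar)}\leq\lambda(C)$.
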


Condition $(*)$ of the theorem requires that the posterior mean of $G$ defined in~\eqref{eq:posterior_G} is consistent for $\Gstar$. By~\citet[Theorem 6.8]{ghosal2017fundamentals}, this condition is implied by consistency of the posterior distribution of $G$ at $\Gstar$, that is, if for all $\delta>0$, $\EE[\Gstar]{\Pi(G\,:\, \Dbl(G, G_{\star})>\delta \mid U_1,\ldots,U_n)} \to 0$ as $n \to \infty$, then $(*)$ holds. 
We note that both $(*)$ and its sufficient condition are deconvolution results for $\Gstar$, since we only observe indirect measurements of $\nuisance_i$ through $U_i$. Such deconvolution results are derived in special cases, e.g., by~\citet{nguyen2013convergence, scricciolo2018bayes, su2020nonparametric, rousseau2021wasserstein}.
In verifying whether consistency as above holds for $\Gstar$, it may be easier to use results on posterior consistency for the marginal distribution of $U_i$ (of which we have direct measurements). We provide a simple, generic result of this type below that requires consistency for the marginal distribution in the stronger total variation metric as in e.g.,~\citet{ghosal2001entropies}.

For any distribution $G$ supported on $\nuisancespace$, define the $\dd   \lambda$-marginal density of $U_i$ as
\begin{equation}
f(u; G) := \int_{\nuisancespace} p(u \mid \nu) \, G(\dd\nu).
\label{eq:marginal_density}
\end{equation}
The total variation distance of two distributions on $\mathcal{U}$ with $\dd\lambda$-densities $f_1$, $f_2$ is defined as
$$
\TV(f_1,f_2) := \frac{1}{2} \int_{\mathcal{U}} \abs{f_1(u) - f_2(u)} \, \lambda(\dd u).
$$
\begin{prop} 
\label{prop:TV_to_BL}
Suppose that for any $\delta>0$ it holds that $$\EE[\Gstar]{\Pi(G\,:\, \TV(f(\cdot; G)  , f(\cdot; G_{\star}))>\delta \mid U_1,\ldots,U_n)} \to 0\, \text{ as }\, n \to \infty.$$
Suppose moreover that identifiability holds, i.e., that $f(\cdot;G) = f(\cdot,G')$ for $G,G'$ distributions supported on $\nuisancespace$ implies that $G=G'$.
Then for any $\delta>0$ it also holds that
$$\EE[\Gstar]{\Pi(G\,:\, \Dbl(G  , G_{\star})>\delta \mid U_1,\ldots,U_n)} \to 0\, \text{ as }\, n \to \infty,$$
and thus condition $(*)$ of Theorem~\ref{theo:K_to_infty} holds.
\end{prop}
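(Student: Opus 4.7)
The plan is to reduce the posterior-consistency statement to a deterministic structural comparison of the two metrics on $\mathcal{P}(\nuisancespace)$. Specifically, the key claim is that for every $\delta > 0$ there exists $\eta = \eta(\delta) > 0$ such that
\[
\Dbl(G, \Gstar) > \delta \;\Longrightarrow\; \TV(f(\cdot;G), f(\cdot;\Gstar)) \geq \eta.
\]
Granting this, the inclusion of events $\{G : \Dbl(G, \Gstar) > \delta\} \subseteq \{G : \TV(f(\cdot;G), f(\cdot;\Gstar)) \geq \eta(\delta)\}$ yields the pointwise bound $\Pi(G : \Dbl(G, \Gstar) > \delta \mid U_1,\ldots,U_n) \leq \Pi(G : \TV(f(\cdot;G), f(\cdot;\Gstar)) \geq \eta(\delta) \mid U_1,\ldots,U_n)$ almost surely, so taking expectations under $\Gstar$ and invoking the TV posterior consistency hypothesis with tolerance $\eta(\delta)/2$ gives the desired conclusion.

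To prove the structural claim, I would argue by contradiction. Suppose it fails for some $\delta>0$: there is a sequence $G_n \in \mathcal{P}(\nuisancespace)$ with $\Dbl(G_n, \Gstar) > \delta$ but $\TV(f(\cdot; G_n), f(\cdot; \Gstar)) \to 0$. Since $\nuisancespace$ is compact by Assumption~\ref{assu:nuisance_parameters}, Prokhorov's theorem makes $\mathcal{P}(\nuisancespace)$ weakly compact, so along a subsequence $G_{n_k}$ converges weakly to some $G^{\ast\ast}\in \mathcal{P}(\nuisancespace)$. Because $\Dbl$ metrizes weak convergence on $\mathcal{P}(\nuisancespace)$, we have $\Dbl(G^{\ast\ast}, \Gstar) \geq \delta$, so $G^{\ast\ast} \neq \Gstar$. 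Next, I would identify $G^{\ast\ast}$ via the marginal densities. By Assumption~\ref{assu:nuisance_statistics}, for each $u$ in any compact set $C \subset \mathcal{U}$ the map $\nu \mapsto p(u \mid \nu)$ is bounded and continuous, so weak convergence of $G_{n_k}$ gives the pointwise convergence $f(u; G_{n_k}) \to f(u; G^{\ast\ast})$ for every $u \in C$. On the other hand, TV convergence of $f(\cdot; G_{n_k})$ to $f(\cdot; \Gstar)$ implies $L^1(\lambda)$ convergence, and hence $\lambda$-a.e.\ convergence along a further subsequence. Matching the two limits yields $f(\cdot; G^{\ast\ast}) = f(\cdot; \Gstar)$ $\lambda$-a.e., and identifiability gives $G^{\ast\ast} = \Gstar$, contradicting $G^{\ast\ast} \neq \Gstar$.

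The main obstacle is cleanly combining the two modes of convergence of $f(\cdot; G_{n_k})$: pointwise on compact subsets (from weak convergence of $G_{n_k}$ via Assumption~\ref{assu:nuisance_statistics}) and $\lambda$-a.e.\ on a further subsequence (from TV convergence to $f(\cdot; \Gstar)$), so that they can be matched on a set of full $\lambda$-measure in order to invoke identifiability. The tightness part of Assumption~\ref{assu:nuisance_statistics} plays a supporting role by ensuring that both $f(\cdot; G)$ and $f(\cdot; \Gstar)$ place essentially all their mass on compact sets, so that exhausting $\mathcal{U}$ by compacts loses no mass; the remainder is a standard Prokhorov compactness-plus-identifiability argument.
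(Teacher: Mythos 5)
Your proposal is correct and follows essentially the same route as the paper: both reduce the statement to a uniform separation claim (that $\Dbl(G,\Gstar) > \delta$ forces $\TV(f(\cdot;G),f(\cdot;\Gstar))$ to be bounded below by some $\eta(\delta)>0$) and then conclude by event inclusion and the assumed TV posterior consistency. The only cosmetic difference is that the paper establishes the separation directly, by proving weak-to-TV continuity of $G \mapsto f(\cdot;G)$ and minimizing the continuous map $G \mapsto \TV(f(\cdot;G),f(\cdot;\Gstar))$ over the compact set $\{G : \Dbl(G,\Gstar)\geq\delta\}$, whereas you prove it by contradiction via Prokhorov and subsequence extraction---the same compactness, tightness, and identifiability ingredients in a different order.
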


To recap, Theorem~\ref{theo:calibration_n_to_infty_empirical_bayes} shows that the partially Bayes p-values provide calibrated inference  under mild assumptions. Herein it is important to emphasize that the notion of consistency we require for $\Gstar$ is only with respect to weak convergence. By contrast, analogous results for e.g., credible intervals, would require stronger notions of consistency, e.g., in total variation distance.\footnote{This statement is not in contradiction with the sufficient condition in Proposition~\ref{prop:TV_to_BL}. Therein, the assumption is on the marginal distribution of $U_i$ and not on $G$.} As is known from the classical results of~\citet{kiefer1956consistency} (in the context of nonparametric maximum likelihood), weak consistency in deconvolution problems is in general possible under mild assumptions---this is not true for the total variation distance.

Our partially Bayes p-values provide a pragmatic construction that reaps benefits of Bayesian nonparametrics without requiring joint modeling of both nuisance and primary parameters which may require substantially more involved and careful modeling as in, e.g,
\citet{dahl2009spiked, denti2021two}. Moreover, coverage guarantees for Bayes credible intervals in large scale inference are difficult to come by, even in problems without nuisance parameters~\citep{vanderpas2017uncertainty}, and may require inflating the width of the intervals.

\subsubsection{Frequentist frame: compound p-values}

Our final calibration result pertains to a purely frequentist analysis of the regime in which $n\to \infty$ and the information per task remains fixed. In this case there is no data-generating distribution $\Gstar$, so we can no longer argue that $G_{\Pi}[U_1,\ldots,U_n] \approx \Gstar$ (since $\Gstar$ is not defined). Nevertheless, according to results in compound decision theory~\citep{robbins1951asymptotically, zhang2003compound}, we may expect that $G_{\Pi}[U_1,\ldots,U_n]  \approx G(\nuisancevectortrue)$, where
\begin{equation}
G(\nuisancevectortrue) := \frac{1}{n} \sum_{i=1}^n \delta_{\nuisancetrue_i}
\label{eq:empirical_distribution_of_nuisance}
\end{equation}
is the empirical distribution of the nuisance parameters $\nuisancetrue_1,\dotsc,\nuisancetrue_n$. In this case, in analogy to~\eqref{eq:approximation_n_to_infty}, it would follow that:
\begin{equation}
\EE[\Pi]{\Pvalfunoracle(t, U_i, G) \,\mid \, U_1,\dotsc, U_n} \approx \Pvalfunoracle(t, U_i, G(\nuisancevectortrue)).
\label{eq:approximation_n_to_infty_compound}
\end{equation}
Thus, to understand the asymptotic calibration of $\cpvalue_i$ in the frequentist frame, we first examine the properties of $\Pvalfunoracle(T_i, U_i, G(\nuisancevectortrue))$. While these oracle quantities are not standard p-values (i.e., not uniformly distributed under the null for a single hypothesis), the following result—--a frequentist counterpart to Proposition~\ref{prop:oracle_G_pvalue}---shows they are compound p-values as defined in~\citet{ignatiadis2025asymptotic}.

\begin{prop}[Compound p-values]
  \label{prop:compound_pvalues}
Suppose that $(T_i, U_i)$, $i=1,\ldots,n$, are generated as in the first level of the hierarchical model in~\eqref{eq:hierarchy1} with primary parameters $\primaryvectortrue$ and nuisance parameters $\nuisancevectortrue$. Then, $\Pvalfunoracle(T_i, U_i, G(\nuisancevectortrue)),\;i=1,\ldots,n,$ are compound p-values, that is,
$$
\sup_{\alpha \in [0,1]} \p{ \frac{1}{n}\sum_{i \in \Hnull} \PP[\nuisancevectortrue,\primaryvectortrue]{\Pvalfunoracle(T_i, U_i, G(\nuisancevectortrue)) \leq \alpha} - \alpha}_+ \leq 0,
$$
where $(x)_+ := \max(x,0)$.
\end{prop}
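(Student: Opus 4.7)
The plan is to leverage Proposition~\ref{prop:oracle_G_pvalue} by constructing an auxiliary mixture measure in which the empirical distribution $G(\nuisancevectortrue)$ plays the role of a genuine data-generating distribution $\Gstar$. Concretely, introduce a probability measure $\mathbb{Q}$ under which an index $I$ is drawn uniformly from $\cb{1, \dotsc, n}$, and, conditional on $I=i$, we set $\nu := \nuisancetrue_i$ and draw $(T, U)$ from the first-level model in~\eqref{eq:hierarchy1} with primary parameter fixed at $\primary_0$. Marginalizing over $I$, the nuisance parameter $\nu$ is distributed as $G(\nuisancevectortrue)$, so under $\mathbb{Q}$ the pair $(T,U)$ arises from the two-level hierarchy~\eqref{eq:hierarchy1}--\eqref{eq:hierarchy2} with data-generating distribution $\Gstar := G(\nuisancevectortrue)$ and primary parameter $\primary_0$. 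The super-uniformity content of Proposition~\ref{prop:oracle_G_pvalue} (which, even without Assumption~\ref{assu:test_statistics}, holds as a one-sided inequality by the standard probability integral transform applied to a tail area) then yields
\[
\mathbb{Q}\bigl[\Pvalfunoracle(T, U, G(\nuisancevectortrue)) \leq \alpha\bigr] \;\leq\; \alpha \quad \text{for all } \alpha \in [0,1].
\]

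Next, I would unpack the $\mathbb{Q}$-probability by conditioning on $I$ to obtain
\[
\mathbb{Q}\bigl[\Pvalfunoracle(T, U, G(\nuisancevectortrue)) \leq \alpha\bigr] \;=\; \frac{1}{n}\sum_{i=1}^n \PP[\nuisancetrue_i, \primary_0]{\Pvalfunoracle(T_i, U_i, G(\nuisancevectortrue)) \leq \alpha}.
\]
For each $i \in \Hnull$ we have $\primarytrue_i = \primary_0$, so the $i$-th summand on the right-hand side coincides with $\PP[\nuisancevectortrue, \primaryvectortrue]{\Pvalfunoracle(T_i, U_i, G(\nuisancevectortrue)) \leq \alpha}$. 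Every summand is non-negative, so restricting the sum to indices $i \in \Hnull$ can only decrease it, and combining this with the bound above gives
\[
\frac{1}{n}\sum_{i \in \Hnull} \PP[\nuisancevectortrue, \primaryvectortrue]{\Pvalfunoracle(T_i, U_i, G(\nuisancevectortrue)) \leq \alpha} \;\leq\; \alpha,
\]
from which the claim follows upon taking positive parts and the supremum over $\alpha \in [0,1]$.

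The conceptual step worth flagging as the main ``obstacle'' is recognizing that $G(\nuisancevectortrue)$, although constructed purely from the fixed frequentist nuisance parameters, can be recast as a legitimate prior on $\nuisancespace$ via the mixture $\mathbb{Q}$; this is the essence of the compound-decision trick. Once that move is made, the argument reduces to Proposition~\ref{prop:oracle_G_pvalue} together with the observation that dropping non-negative summands preserves an upper bound. No continuity assumption on $T_i$ is needed, since only the super-uniformity direction of Proposition~\ref{prop:oracle_G_pvalue} is invoked.
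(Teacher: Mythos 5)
Your proof is correct and supplies exactly the argument the paper omits (the paper only remarks that the proof is analogous to Theorem 21 of \citet{ignatiadis2024empirical}, which uses the same compound-decision mixture device). Recasting $G(\nuisancevectortrue)$ as the marginal law of $\nuisance$ under the uniform-index mixture $\mathbb{Q}$, invoking the super-uniformity direction of Proposition~\ref{prop:oracle_G_pvalue}, and then discarding the non-negative non-null summands is precisely the intended route, and your observation that no continuity assumption is needed for the one-sided bound is a correct and worthwhile refinement.
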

The proof is analogous to the proof of Theorem 21 in~\citet{ignatiadis2024empirical} and is omitted.
The compound p-value property in Proposition~\ref{prop:compound_pvalues} provides a useful frequentist guarantee, connecting to the analysis in Section~\ref{subsec:industrialist}. In the original study, \citet{palmieri2015genomewide}  report as significant all genes with a t-test p-value less than or equal to $0.001$.~\citet{klaus2018end} observe that, while the fixed-threshold procedure of~\citeauthor{palmieri2015genomewide} is not a genuine multiple testing correction, it provides a bound on the expected number of false discoveries ($\alpha \cdot n \approx 16$). Proposition~\ref{prop:compound_pvalues} shows that the oracle quantities $\Pvalfunoracle(T_i, U_i, G(\nuisancevectortrue))$ (although not genuine p-values) provide the same guarantee on the expected number of false discoveries. This is because they satisfy the compound p-value property--- also introduced earlier under the name average significance control by~\citet{armstrong2022false}---meaning their average null distribution (normalized by $n$) is stochastically larger than uniform, without requiring uniformity for individual p-values.

We also briefly remark that when compound p-values are used in conjunction with the procedure of~\citet{benjamini1995controlling}, then false discovery rate (FDR) is asymptotically controlled under certain regimes~\citep{armstrong2022false, ignatiadis2025asymptotic} and if the compound p-values are jointly independent, then it is also controlled in finite samples~\citep{barber2025false} with a mild inflation of the target FDR level.

The following theorem shows that indeed the partially Bayes p-values are asymptotically compound p-values in the frequentist frame.

\begin{theo}[Calibration as $n \to \infty$ in the frequentist frame]
\label{theo:calibration_n_to_infty_frequentist}
  Suppose that $K$ remains fixed and that $n \to \infty$. Suppose Assumptions~\ref{assu:test_statistics}, ~\ref{assu:nuisance_parameters},~\ref{assu:regular_pvalues} and~\ref{assu:nuisance_statistics} hold. Write $\nuisancevectortrue_{n,-i} = (\nuisancetrue_1,\ldots,\nuisancetrue_{i-1},\nuisancetrue_{i+1},\ldots,\nuisancetrue_n)$ for the nuisance parameters excluding the $i$-th one and assume
    \begin{itemize}
  \item [$(*)$] $\displaystyle\max_{i=1,\ldots,n }\EE[\nuisancevectortrue_{n,-i}]{\mathfrak{D}_{\mathrm{BL}}(G_{\Pi}[U_1,\ldots,U_{i-1},U_{i+1},\ldots U_n],\, G(\nuisancevectortrue_{n,-i}))} \to 0 \,\text{ as }\, n \to \infty.$
  \end{itemize}
  Then as $n \to \infty$ (with $K$ fixed), $\cpvalue_1,\ldots,\cpvalue_n$ are asymptotic compound p-values, i.e., 
  $$\limsup_{n \to \infty} \sup_{\alpha \in [0,1]}  \p{\frac{1}{n} \sum_{i \in \Hnull} \PP[\nuisancevectortrue, \primaryvectortrue]{ \cpvalue_i \leq \alpha} -\alpha}_+ \leq 0.$$
\end{theo}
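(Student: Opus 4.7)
The plan is to show that, on average over $i$, $\cpvalue_i$ is close to the oracle compound p-value $\Pvalfunoracle(T_i, U_i, G(\nuisancevectortrue))$, and then transfer the compound calibration from Proposition~\ref{prop:compound_pvalues}. By the third representation in Theorem~\ref{theo:oracle_representation}(b'), $\cpvalue_i = \Pvalfunoracle(T_i, U_i, G_{\Pi}^{-i})$, where $G_{\Pi}^{-i} := G_{\Pi}[U_1,\ldots,U_{i-1},U_{i+1},\ldots,U_n]$. Writing $\overline{G}_i := G(\nuisancevectortrue_{n,-i})$ and $\overline{G} := G(\nuisancevectortrue)$, the triangle inequality yields
\begin{equation*}
|\cpvalue_i - \Pvalfunoracle(T_i, U_i, \overline{G})| \leq |\Pvalfunoracle(T_i, U_i, G_{\Pi}^{-i}) - \Pvalfunoracle(T_i, U_i, \overline{G}_i)| + |\Pvalfunoracle(T_i, U_i, \overline{G}_i) - \Pvalfunoracle(T_i, U_i, \overline{G})|.
\end{equation*}
Condition $(*)$ controls the first term in expectation through $\Dbl$-closeness of $G_{\Pi}^{-i}$ and $\overline{G}_i$, while the elementary bound $\Dbl(\overline{G}_i, \overline{G}) \leq 2/n$ controls the second---both provided that $G \mapsto \Pvalfunoracle(t,u,G)$ is continuous in the bounded Lipschitz metric, uniformly enough in $(t,u)$.

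\textbf{Continuity of $\Pvalfunoracle$ in $G$.} I would write $\Pvalfunoracle(t,u,G) = N(t,u,G)/D(u,G)$ with $N(t,u,G) := \int \Pvalfunoracle(t,u,\nu) p(u \mid \nu) \,dG(\nu)$ and $D(u,G) := \int p(u \mid \nu) \,dG(\nu)$. By Assumptions~\ref{assu:regular_pvalues} and~\ref{assu:nuisance_statistics}, combined with compactness of $\nuisancespace$, both integrands are bounded and uniformly equicontinuous in $\nu$ as $u$ ranges over any compact $C \subset \mathcal{U}$. Hence $G \mapsto N(t,u,G)$ and $G \mapsto D(u,G)$ are continuous on the space of probability measures on $\nuisancespace$ equipped with $\Dbl$, with a common modulus of continuity independent of $(t,u) \in \RR \times C$. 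Wherever $D(u,G)$ is bounded away from $0$, the ratio inherits this continuity.

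\textbf{Main obstacle and wrap-up.} The hardest step is handling the possibility that $D(U_i, \overline{G})$ is very small---for instance when $U_i$ lies in the joint tail of all mixture components $p(\cdot \mid \nuisancetrue_j)$, $j \neq i$. I would finesse this using $\Pvalfunoracle \in [0,1]$: on the event $\{D(U_i, \overline{G}_i) \geq \epsilon\} \cap \{U_i \in C\}$ apply the continuity bound, and on the complement use the trivial bound $1$. The tightness and boundedness parts of Assumption~\ref{assu:nuisance_statistics} let one choose $C$ large and $\epsilon$ small (in that order, with $\epsilon$ decaying slowly with $n$) so that this bad event has average-in-$i$ probability $o(1)$. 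Combining gives
\begin{equation*}
\frac{1}{n}\sum_{i=1}^n \EE[\nuisancevectortrue]{|\cpvalue_i - \Pvalfunoracle(T_i, U_i, \overline{G})|} \longrightarrow 0.
\end{equation*}
Finally I would convert this $L^1$ bound to the desired tail-probability bound via an anti-concentration step: for any $\eta > 0$,
\begin{equation*}
\tfrac{1}{n}\sum_{i \in \Hnull} \PP[\nuisancevectortrue,\primaryvectortrue]{\cpvalue_i \leq \alpha} \leq \tfrac{1}{n}\sum_{i \in \Hnull} \PP[\nuisancevectortrue,\primaryvectortrue]{\Pvalfunoracle(T_i, U_i, \overline{G}) \leq \alpha + \eta} + \tfrac{1}{n\eta}\sum_{i=1}^n \EE[\nuisancevectortrue]{|\cpvalue_i - \Pvalfunoracle(T_i, U_i, \overline{G})|}.
\end{equation*}
The first sum on the right is bounded by $\alpha + \eta$ via Proposition~\ref{prop:compound_pvalues}, and the second is $o(1)$ by Markov. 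Sending $n \to \infty$ and then $\eta \to 0$ yields the theorem.
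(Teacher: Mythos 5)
Your proposal is correct and follows essentially the same route as the paper's proof: represent $\cpvalue_i$ as $\Pvalfunoracle(T_i,U_i,G_{\Pi}[U_1,\ldots,U_{i-1},U_{i+1},\ldots,U_n])$ via Theorem~\ref{theo:oracle_representation}(b'), bound the average-over-$i$ $L^1$ distance to the oracle compound p-values $\Pvalfunoracle(T_i,U_i,G(\nuisancevectortrue))$ using $\Dbl$-continuity of $G\mapsto\Pvalfunoracle(t,u,G)$ on a compact set $C$ together with condition $(*)$ and the $O(1/n)$ leave-one-out correction, and then transfer the compound property from Proposition~\ref{prop:compound_pvalues} through the indicator-to-$L^1$ inequality. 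The only difference is cosmetic and concerns the small denominator $f(U_i;G(\nuisancevectortrue))$: you truncate at level $\epsilon$ and bound the average-in-$i$ probability of the bad event, whereas the paper integrates $1/f$ against the averaged law of the $U_i$ to obtain the bound $\lambda(C)$ directly---both steps rest on the same identity $\frac{1}{n}\sum_{i=1}^n p(u\mid\nuisancetrue_i)=f(u;G(\nuisancevectortrue))$, which is the key point the paper itself flags as the main departure from the empirical Bayes case.
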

Assumption $(*)$ of the theorem is analogous to the corresponding assumption of Theorem~\ref{theo:calibration_n_to_infty_empirical_bayes}, except stated in a leave-one-out fashion and targeting the empirical distribution of nuisance parameters in~\eqref{eq:empirical_distribution_of_nuisance} rather than $\Gstar$ (which is not defined in the frequentist frame). High-level conditions under which $(*)$ holds are provided in Theorem 1 of~\citet{datta1991consistency}.

\section{Normal means with unknown and varying variance}
\label{sec:normal_means}

We now turn to concrete applications of our proposed partially Bayes p-values. Our first application pertains to the following normal means problem. For the $i$-th unit, we observe 
\begin{equation}
\label{eq:normal_means}
\dataset_i = \cb{Z_{i1}, \dotsc, Z_{iK}},\;\;\; Z_{ij} \simiid \mathrm{N}(\mu_i, \sigma_i^2).
\end{equation}
Here, the primary parameter is the mean, $\primary_i = \mu_i$, and we want to test $H_i: \mu_i=0$. The nuisance parameter is the variance, $\nuisance_i = \sigma_i^2$. By sufficiency, we may collapse $\dataset_i$ into the sample average and sample variance. That is, letting $\bar{Z}_i := \sum_{j=1}^{K}Z_{ij}/K$. we may take 
\begin{equation}
\label{eq:normal_means_summary_stats}
T_i := \sqrt{K}\bar{Z}_i \sim \mathrm{N}(\sqrt{K}\mu_i, \sigma_i^2),\;\;\; U_i \equiv S_i^2:= \frac{1}{K-1} \sum_{j=1}^{K}\p{Z_{ij} - \bar{Z}_i}^2 \sim \sigma_i^2 \frac{ \chi^2_{K-1}}{K-1}.
\end{equation}
\begin{rema}[Standard t-test]
\label{rema:standard_t_test}
Suppose that instead of the summarization in~\eqref{eq:normal_means_summary_stats} we had summarized $\dataset_i$ as $T_i$ and \smash{$\widetilde{U}_i := \sum_{j=1}^{K} Z_{ij}^2/K$}. It follows by~\citet{bayarri2000values} that
\smash{$\cpvalue_i = 2F_{t,K-1}(-\abs{T_i}/\widetilde{U}_i^{1/2}),$}
where $F_{t,K-1}$ is the distribution function of the t-distribution with $K-1$ degrees of freedom.
In words, the partially Bayes p-values are identical to the p-values from the standard t-test. Thus $\cpvalue_i$ does not depend on the prior distribution $\Pi$ in~\eqref{eq:hierarchy3} and no sharing of information occurs across units $i$. Instead, we propose to apply our framework using $U_i=S_i^2$ in~\eqref{eq:normal_means_summary_stats}, which is ancillary for the primary parameter $\mu_i$.
\end{rema}

The oracle p-value function~\eqref{eq:oracle_nuisance_pvalue} in this setting is $\Pvalfunoracle(t, u, \nuisance) = 2\Phi(-|t|/\sigma)$, where $\nuisance=\sigma^2$ and $\Phi$ is standard normal distribution function. The oracle p-value function in~\eqref{eq:oracle_G_pvalue} will in general depend on the distribution $G$. In the special case $G=\mathrm{inv}\chi^2(\nu_0, \sigma_0^2)$, we have that,
$$
\Pvalfunoracle(t, u, \mathrm{inv}\chi^2(\nu_0, \sigma_0^2)) = 2F_{t, K-1 + \nu_0}\p{ -|t| \Big/ \sqrt{\frac{(K-1)u + \nu_0 \sigma_0^2}{ (K-1)+\nu_0}}},
$$
where $F_{t, K-1 + \nu_0}$ is the cumulative distribution function of the t-distribution with $K-1+\nu_0$ degrees of freedom. The interpretation of the above is that by imposing an informative prior on the nuisance parameters (the variances), we are able to increase the degrees of freedom of the t-test (from $K-1$ to  $K-1+\nu_0$), shrink estimates of the variances towards a common value and increase power compared to the ``standard'' t-test of Remark~\ref{rema:standard_t_test}.
The p-values of the well-known and widely used limma moderated t-test of~\citet{smyth2004linear} are precisely given by $\Pvalfunoracle(t, u, \mathrm{inv}\chi^2(\hat{\nu}_0, \hat{\sigma}_0^2))$ where $\hat{\nu}_0, \hat{\sigma}_0^2$ are empirical Bayes estimates of the hyperparameters $\nu_0, \sigma_0^2$. \citet{ignatiadis2024empirical} study the empirical partially Bayes p-values \smash{$\Pvalfunoracle(T_i, U_i, \widehat{G})$} where \smash{$\widehat{G}$} is a nonparametric maximum likelihood estimate of $G$.

In this work, instead we propose to compute partially Bayes p-values for this problem by placing a nonparametric prior $\Pi$ on the nuisance parameter distribution $G$, 
\begin{equation}
  \label{eq:DP}
\sigma_i^2 \simiid G,\;\;\; G \sim \mathrm{DP}(c, G_0),
\end{equation}
where $\mathrm{DP}(c, G_0)$ is the Dirichlet Process prior~\citep{ferguson1973bayesian} with concentration parameter $c>0$ and a base measure $G_0$. This modeling choice allows for both clustering and heterogeneity of the variances across the different units.

In our implementation, we set $c$ to have hyperprior, $c \sim \text{Gamma}(0.001, 100)$ (shape-scale parameterization) and we set the base distribution $G_0 = \mathrm{inv}\chi^2(\hat{\nu}_0, \hat{\sigma}_0^2)$ where we choose $(\hat{\nu}_0, \hat{\sigma}_0^2)$ such that the 1st and 99th percentiles of $G_0$ match \smash{$\min_i\{ S_i^2\}$} and \smash{$\max_i\{S_i^2\}$}, respectively. This choice yields a diffuse base measure and allows for conjugate updates (see below).

\begin{rema}[Parametric partially Bayes Limma]
As mentioned above, limma implements an empirical partially Bayes procedure with the estimated prior $\mathrm{inv}\chi^2(\hat{\nu}_0, \hat{\sigma}_0^2)$. The procedure we described in this section modifies limma in two ways: using a nonparametric specification for the prior on the variances and replacing the empirical Bayes estimate with hierarchical Bayes. An intermediate approach is to keep the parametric specification of limma, but to replace the empirical Bayes estimate with a hierarchical Bayes approach.\footnote{An alternative intermediate approach is to use a nonparametric specification for the prior along with empirical Bayes. This is the approach pursued by~\citet{ignatiadis2024empirical}.
} We could form such parametric partially Bayes p-values by placing a hyperprior on $\nu_0, \sigma_0^2$ in the inverse-chi-squared prior.
\end{rema}

\begin{rema}[Two-sample problem]
Consider the two sample problem with unequal variances. For the $i$-th unit we observe $\dataset_i = \{Z_{i1}, \dotsc, Z_{iK}, Y_{i1}, \dotsc, Y_{iL}\}$ where
\smash{$Z_{ij} \simiid \mathrm{N}(\mu_{i1}, \sigma_{i1}^2)$} and \smash{$Y_{ij} \simiid \mathrm{N}(\mu_{i2}, \sigma_{i2}^2)$}. We seek to test $H_i: \primary_i=0$ where $\primary_i = \mu_{i1}-\mu_{i2}$ and the nuisance parameters are the variances, i.e., $\nuisance_i=(\sigma_{iA}^2,\sigma_{iB}^2)$. Herein we can use $T_i=\bar{Z}_i - \bar{Y}_i$ as the test statistic and $U_i=(S_{iZ}^2, S_{iY}^2)$ as the nuisance statistic where $S_{iZ}^2$ and $S_{iY}^2$ are the sample variances of the $Z_{ij}$'s and $Y_{ij}$'s, respectively. In this context,~\citet{ling2025empirical} propose an empirical partially Bayes procedure. Instead, by imposing a nonparametric prior on the joint distribution of $(\sigma_{i1}^2, \sigma_{i2}^2)$, we can compute partially Bayes p-values.
\end{rema}

\subsection{Computation}
\label{subsec:computation_normal_means}
Posterior inference in the model given by the Dirichlet Process prior in~\eqref{eq:normal_means_summary_stats}
and the distribution of $U_i$ in~\eqref{eq:DP} is standard, and so we only provide a brief summary here in so far as it pertains to the computation of the partially Bayes p-values. 

Our implementation follows Neal's Algorithm 2~\citep{neal2000markov} for sampling from Dirichlet process mixture models along with the concentration parameter updates of~\citet{escobar1995bayesian}. This is a Gibbs sampler that maintains the following state variables at MCMC iteration $b$:
\begin{itemize}[noitemsep,leftmargin=*]
\item cluster assignments $c_1^{(b)}, \dotsc, c_n^{(b)}$ where $c_i^{(b)} \in \{1,\dotsc,K^{(b)}\}$ and $K^{(b)}$ is the number of occupied clusters at iteration $b$;
\item cluster variance parameters $\sigma_1^{2(b)}, \dotsc, \sigma_{K^{(b)}}^{2(b)}$;
\item concentration parameter $c^{(b)}$ of the Dirichlet process.
\end{itemize}
The updates above are simplified by the conjugacy of the base distribution $G_0$ to the likelihood of $U_i$. We defer more details to Supplement~\ref{subsec:mcmc_normal_means}. After $B$ full iterations (post burn-in), we approximate the partially Bayes p-values as:
\begin{equation}
\label{eq:pval_normal_means}
\cpvalue_i \approx \frac{1}{B} \sum_{b=1}^B \Pvalfunoracle\p{T_i, U_i, \sigma_{c_i^{(b)}}^{2(b)}} = 
\frac{2}{B} \sum_{b=1}^B \Phi\left(-\abs{T_i}\,\Big /\,\sigma_{c_i^{(b)}}^{(b)}\right),
\end{equation}

\section{Location problems with unknown noise distribution}
\label{sec:unknown_shape}

The normality assumption in~\eqref{eq:normal_means} is strong. The rationale for positing it, is that in common application in genomics, $K$ may be very small, and so, a nonparametric test of the location being equal to $0$ would be nearly powerless. (In fact, as explained in Section~\ref{sec:normal_means}, even the standard t-test may not be powerful enough for common settings in genomics.)

In this section we ask the following question: can we dispense with the parametric normality assumption in~\eqref{eq:normal_means}, while still deriving a powerful procedure?
Our starting point is the following model for the $i$-th observed dataset
\begin{equation}
\label{eq:location_unknown_shape}
\dataset_i = \cb{Z_{i1}, \dotsc, Z_{iK}},\;\;\; Z_{ij}  \simiid  W_i(\cdot - \mu_i),
\end{equation}
where $W_i$ is an unknown noise distribution with location centered at $0$. Here, for simplicity, we assume that the centering is defined by requiring $\int t  W_i(\dd t) =0$, so that the location parameter is the mean of $Z_{ij}$ (i.e., $\EE[W_i,\mu_i]{Z_{ij}} = \mu_i$), however, other definitions of centering (e.g., median) can be handled similarly.

The primary parameter is the location, $\primary_i = \mu_i$, and the nuisance parameter is the shape of the noise distribution, $\nuisance_i = W_i(\cdot)$. We propose to summarize the data as
\begin{equation}
\label{eq:summary_general_loc}
T_i = \bar{Z}_i, \;\;\;\; U_i = (Z_{i1}- \bar{Z}_i, \dotsc, Z_{iK} - \bar{Z}_i).
\end{equation}
The nuisance parameter statistic $U_i$ is ancillary for the location parameter $\theta_i$ and it is known as the configuration statistic of~\citet{fisher1934twoa}. According to Fisher, inference in location models (with known noise distribution, that is, with $W_i$ in~\eqref{eq:location_unknown_shape} known), needs to proceed \emph{conditional} on $U_i$. It is also well-known~\citep{pitman1939estimation, fraser2004ancillaries} how to conduct such conditional inference. Let $w_i$ denote the Lebesgue density of $W_i$, then
\begin{equation}
  \Pvalfunoracle(T_i, U_i, W_i) = \int_{\abs{t} \geq \abs{T_i}} p(t \mid U_i, \primary_i=0) \dd t,\; \text{ where }\, p(t \mid U_i, \primary_i=0) \propto \prod_{j=1}^K w_i(t + U_{ij}).
\label{eq:config_pvalue}
\end{equation}
Above, $p(t \mid U_i, \primary_i=0)$ is the conditional density of $T_i$ given $U_i$ under the null ($\primary_i=0$). Note that $ \Pvalfunoracle(T_i, U_i, W_i)$ in~\eqref{eq:config_pvalue} can be computed by applying 1-dimensional quadrature twice; once to compute the normalizing constant for $p(t \mid U_i, \primary_i=0)$, and then to compute the tail area.
Yet, such approaches are not used in practice because the requirement that $W_i$ is known is too strong (indeed, even with normal data as in~\eqref{eq:normal_means}, one would need to also know the variance $\sigma_i^2$). \citet{severini1994nonparametric} and~\citet{marden2000comment} propose heuristic approaches to compute such conditional p-values in the absence of knowledge of $W_i$. Here we argue that Fisher's envisioned conditioning can be carried out using partially Bayes p-values.

To carry out our principle, we need to put a Bayesian nonparametric prior on the noise distribution $W_i$. We propose:
\begin{equation}
\label{eq:polya_dp}
\begin{aligned}
&W_i(\cdot) = W(\cdot/\tau_i),\;\;\sigma_i^2=\tau_i^2 \textstyle\int u^2 W(\dd u),\;\; \\
& W \sim \mathrm{SymmPT}(\mathcal{A}, G_0^W, J),\;\;\sigma_i^2 \simiid  G^{\Sigma},\; G^{\Sigma} \sim \mathrm{DP}(c, G_0^{\Sigma}),
\end{aligned}
\end{equation}
with $W$ and $(\sigma_i^2)$ independent. In words, we assume that the shape of all $W_i$ is identical and equal to $W$, where $W$ is a draw from a symmetrized Pólya tree (PT)~\citep{lavine1992aspects}, a flexible model for symmetric noise distributions that we describe further below.\footnote{If we change our primary parameter to the median rather than the mean, then we can dispense with the symmetry assumption on $W$ by using a standard (not symmetrized) PT with median $0$~\citep{walker1999bayesian}.
} We introduce further heterogeneity across units by assuming that the variance of each is drawn from a Dirichlet Process, similar to~\eqref{eq:DP} of Section~\ref{sec:normal_means}.

The $\mathrm{SymmPT}(\mathcal{A}, G_0^W, J)$ prior is defined as follows (also see Supplement~\ref{sec:polya_trees_details}). First, we draw  \smash{$\widetilde{W} \sim \mathrm{PT}(\mathcal{A}_0, G_0^W, J)$}, Then we set $W$ to be the symmetrized version of \smash{$\widetilde{W}$}, that is, \smash{$W(A) = \{\widetilde{W}(A) + \widetilde{W}(-A)\}/2$} for all measurable sets $A$. The symmetrization ensures that $W$ is symmetric around $0$.
To sample \smash{$\widetilde{W}$} from the {P\'olya} Tree prior $\mathrm{PT}(\mathcal{A}_0, G_0^W, J)$ (truncated up to level $J$) with base distribution $G_0^W$ and Beta parameters $\mathcal{A}_0$ (consisting of numbers $\alpha(j,\ell)>0$ with $\ell=1,\dotsc,2^j$, $j=1,\dotsc,J$) one proceeds as follows. For all $j$ and odd $\ell$ one draws independently $\beta(j,\ell) \sim  \mathrm{Beta}(\alpha(j,\ell), \alpha(j,\ell+1))$ and for even $\ell$ sets $\beta(j,\ell)=1-\beta(j,\ell-1)$. Then, \smash{$\widetilde{W}$} is defined as the distribution with density $\widetilde{w}$:
\begin{equation}
\label{eq:polya_sample_draw}
\widetilde{w}(x) = 2^J g_0^W(x) \prod_{j=1}^{J} \beta(j, k_j(x)),\;\;\;k_j(x) = \min\cb{ 2^j,  \lfloor 2^jG_0(x) +1 \rfloor},
\end{equation}
where $g_0^W$ is the density of the base distribution $G_0^W$. 

We set hyperparameters for the Dirichlet Process as in Section~\ref{sec:normal_means}. For the Pólya tree, we set $J=8$, $G_0^W = |\mathring{t}_8|$, where $\mathring{t}_8$ is the t-distribution with $8$ degrees of freedom standardized to have unit variance and $|\mathring{t}_8|$ is its folded version, and we set $\alpha(j, \ell) = 20j^2$ for $\ell=1,\dotsc,2^j-2$, $\alpha(j,\ell)=0.1$ for $\ell=2^j-1,2^j$. This choice is such that the outermost splits of the Pólya tree can remain as data-adaptive as possible.


\subsection{Computation}
\label{subsec:computation_unknown_shape}
We compute the partially Bayes p-values using an MCMC algorithm that jointly samples from the Dirichlet process posterior for the scale parameters and the Pólya tree posterior for the noise distribution. Our implementation extends Neal's Algorithm 8~\citep{neal2000markov} with multiple Metropolis-Hastings (MH) steps within each Gibbs update. 

We use a data augmentation step, in which we supplement $U_i$ with $\bar{Z}_i^{(b)}$, a new realization of the location estimate $\bar{Z}_i$ under the null hypothesis $\mu_i=0$. With this augmentation, we can reconstruct null datasets \smash{$\dataset_i^{(b)} = \{U_{i1} + \bar{Z}_i^{(b)}, \dotsc, U_{iK} + \bar{Z}_i^{(b)}\}$}. With a full dataset in hand, it is straightforward to carry out conjugate updates for the Pólya tree. We note that related augmentation strategies for sampling conditional on insufficient statistics are developed in~\citet{luciano2024insufficienta}.

Our sampler maintains the following state variables at iteration $b$:
\begin{itemize}[noitemsep,leftmargin=*]
\item cluster assignments $c_1^{(b)}, \dotsc, c_n^{(b)}$ where $c_i^{(b)} \in \{1,\dotsc,K^{(b)}\}$ and $K^{(b)}$ is the number of occupied clusters at iteration $b$;
\item cluster variance parameters $\sigma_1^{2(b)}, \dotsc, \sigma_{K^{(b)}}^{2(b)}$;
\item concentration parameter $c^{(b)}$ of the Dirichlet process;
\item P\'olya tree realization $W^{(b)}$ from $\mathrm{SymmPT}(\mathcal{A}, G_0^W, J)$;
\item null imputed test statistics $\bar{Z}_1^{(b)}, \dotsc, \bar{Z}_n^{(b)}$.
\end{itemize}
We provide details in Supplement~\ref{subsec:mcmc_polya}. Briefly, the first three bullets are analogous to state variables in Section~\ref{subsec:computation_normal_means}. Conceptually the main difference is that we can no longer rely on conjugate updates (as in Neal's Algorithm 2) and instead use MH-within-Gibbs as in Neal's Algorithm 8. In the fourth bullet we keep track of the symmetrized P\'olya tree realization $W^{(b)}$. 
The fifth bullet is the data augmentation step described above. After collecting $B$ post-burnin MCMC samples, we approximate the partially Bayes p-values using the imputed null test statistics:
\begin{equation}
\label{eq:pval_location}
\cpvalue_i \approx \frac{1 + \sum_{b=1}^B \mathbf{1}(|\bar{Z}_i^{(b)}| \geq |T_i|)}{1+B}.
\end{equation}

\section{Practical extension: Partially Bayes summarization}
\label{sec:partially_bayes_summarization}

For visualization and interpretation, it is useful to provide rejection regions for hypothesis $i$ of the form $\abs{T_i} \geq t_{\alpha}(V_i)$, where $V_i$ is a one-dimensional summary of the nuisance parameter statistic $U_i$ and $\alpha$ is the significance level. 
The rejection rule we have put forth so far in this paper is of the form $\cpvalue_i \leq \alpha$ and so it involves not only $T_i$ and $V_i$, but instead $T_i$ and $(U_1, \ldots, U_n)$.
We propose the following heuristic procedure:
\begin{enumerate}[leftmargin=*,noitemsep]
\item For each unit $i$, approximate the $\alpha$-level rejection threshold $t_i(U_1,\dotsc,U_n)$ by computing the $(1-\alpha)$-quantile of $\abs{T_i'}$ under the null, conditional on all $U_j$. This uses MCMC samples from computing $\cpvalue_i$.
\item Fit a nonparametric regression $t_i(U_1,\dotsc,U_n) \sim V_i$ for $i=1,\dotsc,n$, yielding $\hat{t}_{\alpha}(\cdot)$.
\item Approximate the decision boundary as $\abs{T_i} \geq \hat{t}_{\alpha}(V_i)$.
\end{enumerate}
We first present our rationale in the context of the normal means problem in Section~\ref{sec:normal_means} in which we summarize our data with nuisance parameter statistic $U_i = S_i^2$ in~\eqref{eq:normal_means_summary_stats}. Here we set $V_i=U_i$ (that is, $U_i$ is already one-dimensional and does not need to be summarized further).  In the empirical Bayes frame, in so far as $\cpvalue_i \approx \Pvalfunoracle(T_i, U_i, \Gstar),$ as in~\eqref{eq:approximation_n_to_infty}, our approach approximates the oracle decision boundary consisting of pairs $(t,u)$ such that $\Pvalfunoracle(t,u,\Gstar) = \alpha$. 

We next discuss our summarization procedure in the setting of Section~\ref{sec:unknown_shape}, where $U_i$ is the configuration statistic. By choosing $V_i = S_i^2$ (that is, the sample variance of the $U_{ij}$, or equivalently of the $Z_{ij}$), we aim to construct a decision boundary in the interpretable $(T_i, S_i^2)$ space, revealing how the test adapts to heteroscedasticity while learning the noise distribution nonparametrically. In the empirical Bayes frame, our summarization seeks to approximate the decision based on $\PP[\Gstar]{\abs{T'_i} \geq t \mid S_i^2=s^2}$. Notice that conditioning on $S_i^2$ comes with a loss of information compared to conditioning on $U_i$, however the upshot is interpretability (and potentially robustness to model misspecification as in e.g.,~\citet{doksum1990consistent, lewis2021bayesian, luciano2024insufficienta}---we do not pursue this angle here). Our approach may also be interpreted as approximating the decision boundary based on the following ``insufficient'' partially Bayes p-values (compare to~\eqref{eq:conditional_pred_pvalue}):
\begin{equation*}
  \begin{aligned}
  &\ipbpvalue_i := \ipbvaluefun_i(T_i, S_i^2, (U_j, j \neq i);\, \Pi),\;\;\\
  &\ipbvaluefun_i(t, s^2, (u_j, j \neq i);\, \Pi) := \Pi(|T_i'| \geq \abs{t} \,\mid \, S_i^2 = s^2,\, U_j = u_j,\, j \neq i),
  \end{aligned}
\end{equation*}
In words, we use $U_j, j \neq i$ to learn the noise distribution, but we only condition on $S_i^2$ (which provides information on the scale) for the unit for which we compute the p-value.

We conclude this section by noting that in the setting of Section~\ref{sec:unknown_shape}, our construction may also be interpreted as a new principled implementation of the conditional t suite of tests of~\citet{amaratunga2009conditional}. These authors also  construct rejection regions of the form $\abs{T_i} \geq t_{\alpha}(S_i^2)$, using a bootstrap approach (instead of posterior sampling) to learn the noise distribution.

\section{Simulation study}
\label{sec:numerical_study}

In this section we present a simulation study to evaluate the performance of our proposed partially Bayes p-values in the setting of Section~\ref{sec:unknown_shape} with unknown noise distribution shape.
We simulate according to model~\eqref{eq:location_unknown_shape} with $K=12$ and $n=10,000$. The $i$-th distribution $W_i$ is the Subbotin distribution with shape parameter $\xi>0$, scale parameter $b_i>0$ and location $\theta_i$ with density:
$$
w_i(z) := \frac{\xi}{2b_i\Gamma(1/\xi)}\exp\p{-\abs{(z-\theta_i)/b_i}^{\xi}},\;\;\; z \in \RR.
$$
Within a single simulation, the shape parameter $\xi$ is the same for all $i=1,\dotsc,n$, but we vary it across simulations as $\xi \in \{1,1.5,2,2.5,3\}$. The case $\xi=2$ corresponds to the normal distribution, i.e., to the setting of Section~\ref{sec:normal_means}, while $\xi=1$ corresponds to the Laplace distribution. As $\xi$ increases, the noise distribution $W_i$ becomes more light-tailed. Note that the variance of $W_i$ is equal to $\sigma_i^2 := \int u^2 W_i(\dd u)=b_i^2 \Gamma(3/\xi)/\Gamma(1/\xi)$. For comparability across different values of $\xi$, we specify $\sigma_i^2$ instead of $b_i$. We consider two settings: in the first setting, $\sigma_i^2=1$ for all $i$, and in the second setting the variances are heterogeneous and generated via \smash{$\sigma_i^2 \simiid \mathrm{Unif}[0.5,2]$}. Finally, in each simulation, we set $\mu_i=0$ for $90\%$ of the units (the nulls) and for the remaining $10\%$ we set $\theta_i=2.5\sigma_i / \sqrt{2}$ (the alternatives). Each simulation setting is repeated $100$ times and metrics are computed by averaging across the repetitions.

We compare four methods of constructing p-values:
\begin{enumerate}[noitemsep,leftmargin=*]
\item \textbf{t-test}: The standard t-test p-value, computed as $P_i = 2\bar{t}_{K-1}(\abs{T_i}/\sqrt{U_i})$, where $\bar{t}_{K-1}$ is the survival function of the t-distribution with $K-1$ degrees of freedom.
\item \textbf{Oracle}: The oracle p-value $\Pvalfunoracle(T_i, U_i, W_i)$ computed as in~\eqref{eq:config_pvalue}. We note that this p-value is not available in practice as it requires knowledge of both the noise distribution and its variance $\sigma_i^2$.
\item \textbf{Normal PB}: The partially Bayes p-values of Section~\ref{sec:normal_means} that assume normality of the noise distribution.
\item \textbf{P\'olya PB}: The partially Bayes p-values of Section~\ref{sec:unknown_shape} that use the Pólya tree prior for the noise distribution.
\end{enumerate}
We use these p-values in two ways, computing different metrics in each case:
\begin{enumerate}[leftmargin=*,noitemsep]
\item \textbf{Fixed-thresholding}: We reject all p-values $\leq 0.01$. We then report Monte Carlo estimates of $\PP{P_i \leq 0.01 \mid \theta_i=\theta_0}$ (which should be $0.01$ for uniform p-values) and the power of the fixed thresholding procedure, that is, the expected proportion of alternatives discovered: $\mathrm{Power}[P_i \leq 0.01] := \EE{\sum_{i=1}^n \ind(\theta_i \neq \theta_0,\, P_i \leq 0.01)/ \sum_{i=1}^n \ind(\theta_i \neq \theta_0)}$.
\item \textbf{BH}: We reject all p-values $\leq \hat{t}_{\text{BH}}$, where $\hat{t}_{\text{BH}}$ is the Benjamini-Hochberg threshold for controlling the false discovery rate (FDR) at level $0.1$. We report the FDR and power.
\end{enumerate}

\begin{figure}
  \centering
  \begin{tabular}{l}
  \footnotesize{a) Homoscedastic setting with $\sigma_i^2=1$ for all $i$}\\ 
  \includegraphics[width=\linewidth]{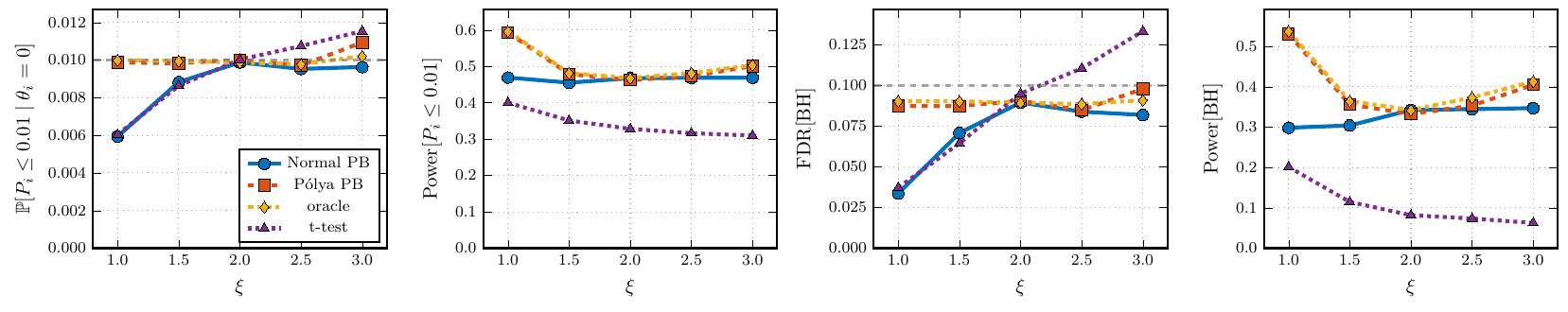} \\
  \footnotesize{b) Heteroscedastic setting with $\sigma_i^2 \sim \mathrm{Unif}[0.5, 2]$}\\
  \includegraphics[width=\linewidth]{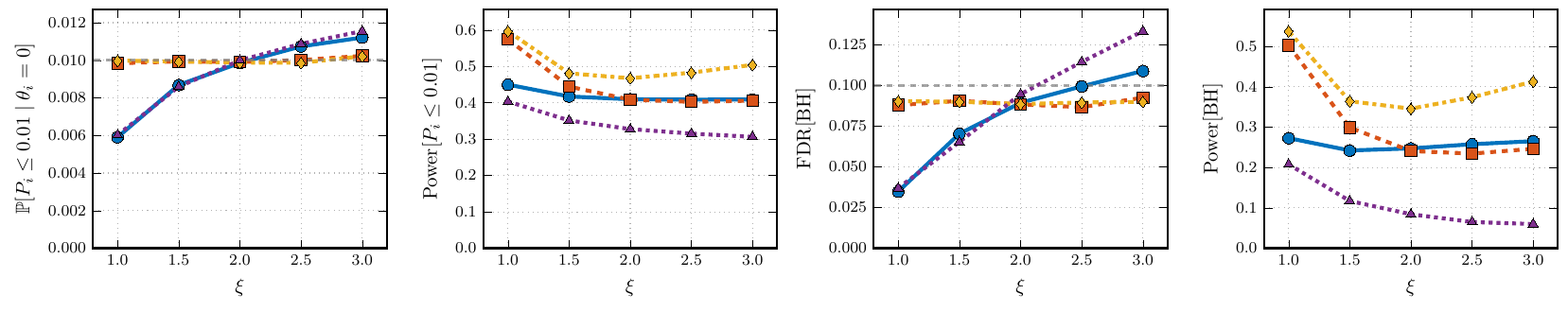}
  \end{tabular}
  \caption{Simulation results comparing the t-test, oracle p-values, and partially Bayes p-values under normality (Normal PB, ours) and with unknown distribution shape (P\'olya PB, ours) with a) homoscedasticity and b) heteroscedasticity. The x-axis shows the shape parameter $\xi$ of the Subbotin distribution. The columns show power and false discovery rate (FDR) for two rejection rules: fixed thresholding at 0.01 and the Benjamini-Hochberg procedure at 10\% FDR.}
  \label{fig:simulation_results}
\end{figure}

Results are shown in Fig.~\ref{fig:simulation_results}. We first discuss the homoscedastic case (panel a). Herein we see that the t-test does not provide type-I error control for $\xi \geq 2.5$ (that is, for more light-tailed noise distributions), while the other methods control type-I error with respect to both rejection rules and metrics. Normal PB is more powerful than the t-test for all $\xi$ and approaches the power of the oracle and P\'olya PB for $\xi = 2$ (in which case normality holds). Notably, P\'olya PB is nearly indistinguishable from the oracle for all $\xi$, demonstrating that it is possible to learn the shape of the noise distribution from the data and use it to construct powerful p-values. We next discuss the heteroscedastic case (panel b). The main differences here are as follows: first, Normal PB also loses type-I error for $\xi \geq 2.5$. Second, in this setting, there is a gap between the power of the oracle and P\'olya PB, since the oracle knows the true variances $\sigma_i^2$ exactly while P\'olya PB must account for uncertainty in the $\sigma_i^2$. (In the homoscedastic case, P\'olya PB is able to automatically learn that all the variances are identical, and so it can also learn all individual variances exactly.)
Still, P\'olya PB is powerful and outperforms other data-driven baselines.

\begin{figure}
\centering 
\includegraphics[width=\linewidth]{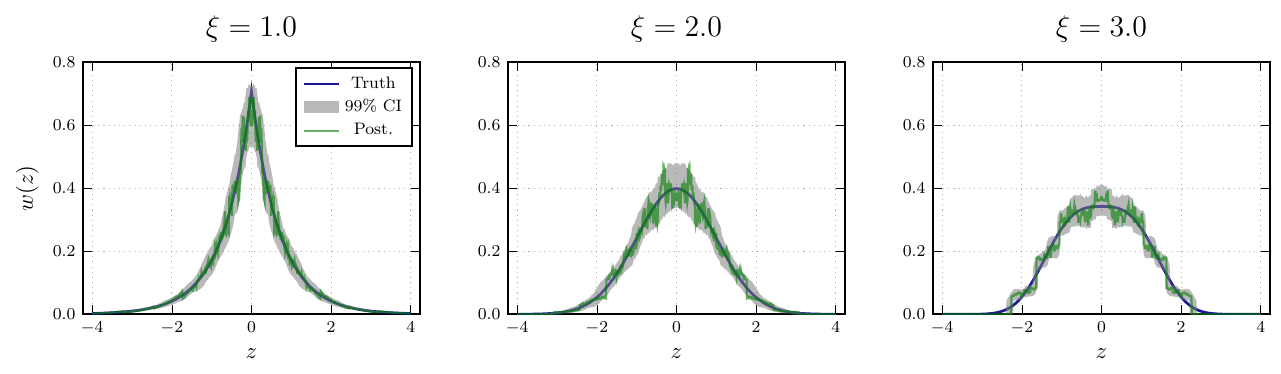}
\caption{Posterior samples of the noise distribution $W$ from the P\'olya tree prior,  normalized to variance $1$, for different values of the shape parameter $\xi$ of the Subbotin distribution in the heteroscedastic simulation. For each $\xi$ we plot the true density, 99\% pointwise credible intervals (CI) for the density (from a single simulation), as well as a the last posterior sample of the density from the MCMC algorithm.}
\label{fig:posterior_polya_trees}
\end{figure}

In Fig.~\ref{fig:posterior_polya_trees} we visualize the posterior of the centered noise distribution $W$ from the Pólya tree prior for different values of $\xi$ in the heteroscedastic setting. Specifically, for a single simulation repetition, we plot 99\% pointwise credible intervals of the true density, and we also plot the last posterior sample. We see that the Pólya tree is able to learn the shape of the noise distribution well, which explains the good performance of P\'olya PB in Fig.~\ref{fig:simulation_results}.

\section{Case studies}
\label{sec:real_data_application}

We illustrate our proposed methods on two real-world datasets.

\subsection{Revisiting differential gene expression in Crohn's disease}

We revisit the study of~\citet{palmieri2015genomewide} analyzed in Section
\ref{subsec:industrialist} on differential gene expression in Crohn's disease. We briefly recall that we considered three methods for computing p-values for each gene: the standard t-test, the partially Bayes p-values under normality (Normal PB), and the partially Bayes p-values with an unknown noise distribution shape (Pólya PB). Fig.~\ref{fig:palmieri} shows qq-plots comparing the quantiles of the three types of p-values against the uniform quantiles, as well as the number of p-values $\leq 0.001$ for each method. Both PB methods more than double the number of discoveries compared to the t-test. The two PB methods are similar, with Pólya PB yielding slightly more discoveries.

\begin{figure}
  \centering
  \begin{tabular}{ll}
  a)  & b) \\ 
  \includegraphics[width=0.63\linewidth]{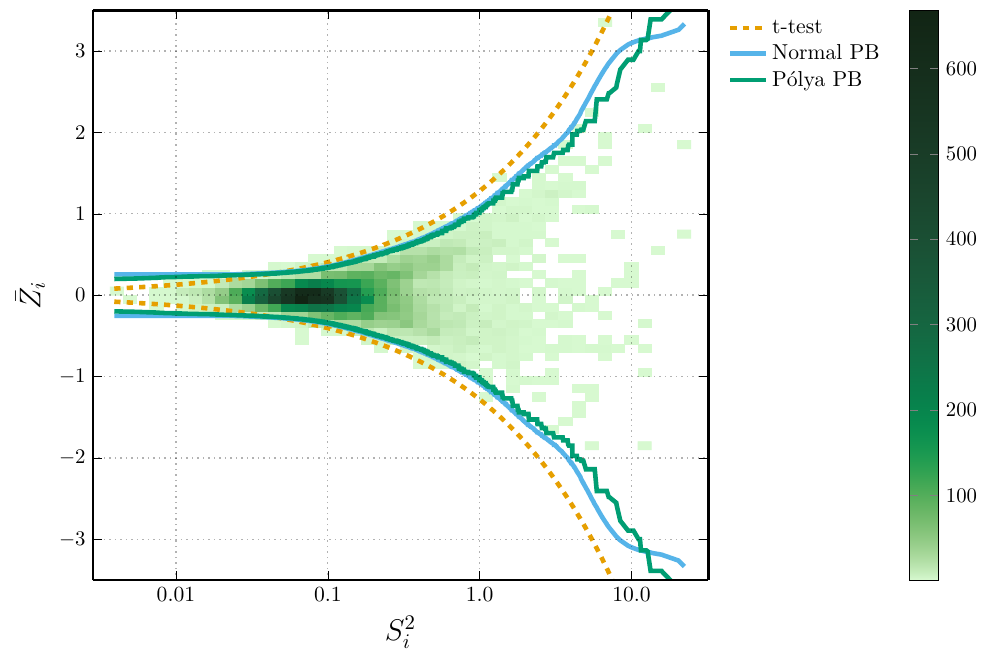}  & \includegraphics[width=0.33\linewidth]{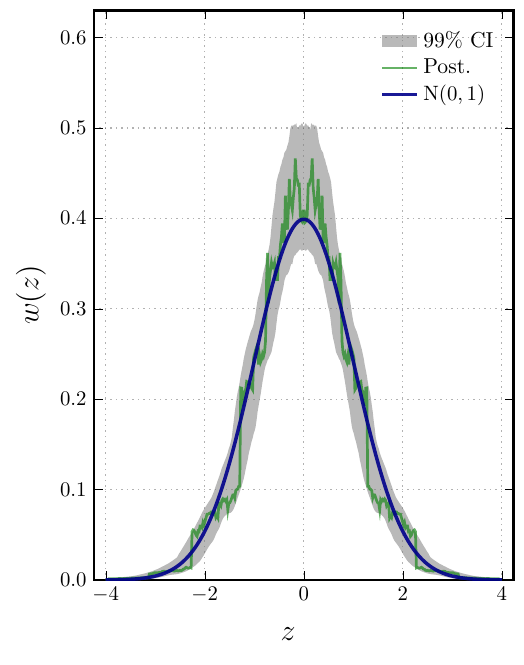} 
  \end{tabular}
  \caption{Continued reanalysis of the study of~\citet{palmieri2015genomewide}. a) Two-dimensional histogram of summary statistics $(\bar{Z}_i, S_i^2)$. The number of genes in each bin is indicated by the color. The curves indicate the decision boundary for rejecting at level $\alpha=0.001$ for the three methods. For the partially Bayes method, we use the summarization technique of Section~\ref{sec:partially_bayes_summarization}.
  b)
  Posterior samples of the noise distribution $W$ from the P\'olya tree prior, normalized to variance $1$. We show 99\% pointwise credible intervals (CI) for the density, with the last posterior sample from the MCMC algorithm and the standard normal density $\mathrm{N}(0,1)$ overlaid for reference.} 
  \label{fig:palmieri2}
\end{figure}

We supplement the p-value computation with two further visualizations. First, we apply the visualization strategy described in Section~\ref{sec:partially_bayes_summarization} to summarize the rejection regions of the three methods. In Fig.~\ref{fig:palmieri2}a), we plot the rejection regions at level $\alpha=0.001$ in the $(\bar{Z}_i, S_i^2)$ space for each method.  Compared to the standard t-test, the partially Bayes methods are more liberal for large values of $S_i^2$ and more conservative for small values of $S_i^2$. The reason is that both partially Bayes methods effectively borrow information across genes to learn the distribution of nuisance parameters and then appropriately shrink extreme values of $S_i^2$ toward more typical values. The Pólya PB rejection region is slightly larger that that of Normal PB. Second, in Fig.~\ref{fig:palmieri2}b), we visualize the posterior of the noise distribution $W$ using Pólya PB. The standard normal density is overlaid for reference and fully lies within the 99\% pointwise credible intervals. This suggests that the normality assumption is not contradicted for this dataset, although Pólya PB is still able to learn potential deviations from normality, e.g., typical posterior samples are more peaked than the normal density near the origin.

\subsection{Differences in differences with noisy control data}

\citet{gelman2021slamming} present a framework for paired experimental designs where systematic biases might influence outcomes. Specifically, they analyze data from an investigation into the health effects of low-frequency magnetic fields conducted in the 1980s by the U.S. Environmental Protection Agency~\citep{blackman1988influence}.
In each of $n=38$ experiments, chickens had their brains divided, with one half exposed to magnetic fields at a specific frequency (1--510 Hz) and the other serving as control. $Y_{i1}$ measured the average difference in calcium efflux between brain halves; a sham experiment with no magnetic field yielded $Y_{i0}$.
The model for the data of the $i$-th experiment is as follows:
\begin{equation}
Y_{i1} \sim \mathrm{N}(\mu_i+b_i,\, \sigma_{i1}^2),\;\;\; Y_{i0}  \sim \mathrm{N}(b_i,\, \sigma_{i0}^2),\;\;\; i=1,\dotsc,n.
\end{equation}
Above, $Y_{i1}$ and $Y_{i0}$ are the independent summarized outcomes from active and sham treatments respectively, $\mu_i$ represents the average treatment effect (the primary parameter) and $b_i$ denotes the systematic bias (the nuisance parameter). The variances $\sigma_{i1}^2$ and $\sigma_{i0}^2$ are assumed to be known. This model underlines the assumption that the bias $b_i$ affects both active and sham treatments in the same way and enables the identification of the actual treatment effect from the observed data since $\Delta Y_i = Y_{i1} - Y_{i0} \sim \mathrm{N}(\mu_i,\, \sigma_{i1}^2 + \sigma_{i0}^2)$, which no longer depends on $b_i$. P-values adjusting for the bias may then be computed as \smash{$P_i^{\text{sham}} = 2\Phi(-|\Delta Y_i| / \sigma_i)$} with \smash{$\sigma_i = (\sigma_{i1}^2 + \sigma_{i0}^2)^{1/2}$}.

\citet{gelman2021slamming} point out that, if $b_i=0$, then the adjustment above is needlessly conservative. It effectively doubles the variance of the test statistic from $\sigma_{i1}^2$ to $\sigma_{i1}^2 + \sigma_{i0}^2 \approx 2\sigma_{i1}^2$. One could instead compute p-values \smash{$P_i^{\mathrm{exp}} := 2\Phi(-\abs{Y_{i1}}/\sigma_{i1})$} ignoring sham treatments. Since $b_i$ is unknown, \citet{gelman2021slamming} propose a hierarchical Bayesian specification for $\mu_i, b_i$ that automatically determines how much to adjust for $b_i$. Inferences are then summarized through posterior means and credible intervals for $\mu_i$.

\begin{figure}
  \centering
  \begin{tabular}{ll}
    a) &
    b) \\
    \includegraphics[width=0.49\linewidth]{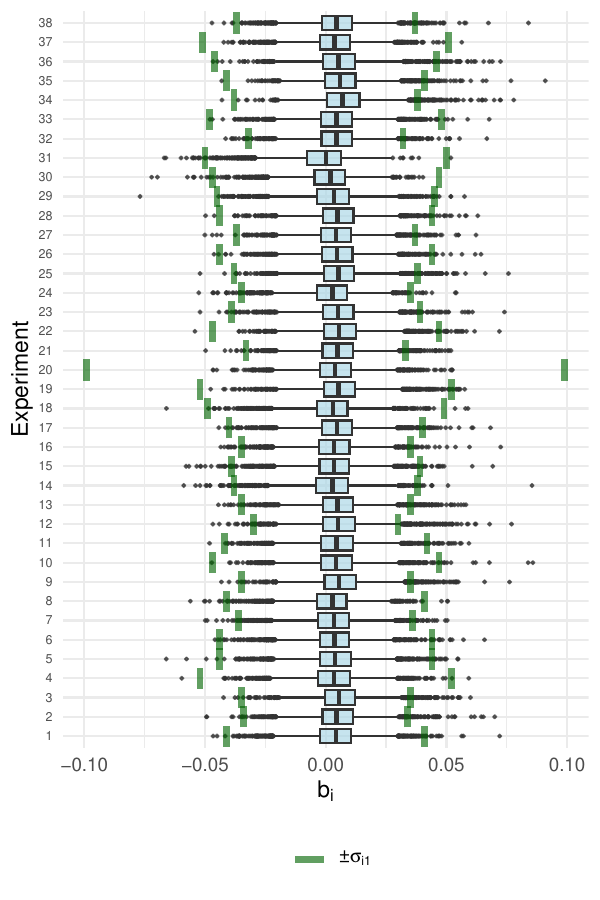} &
    \includegraphics[width=0.49\linewidth]{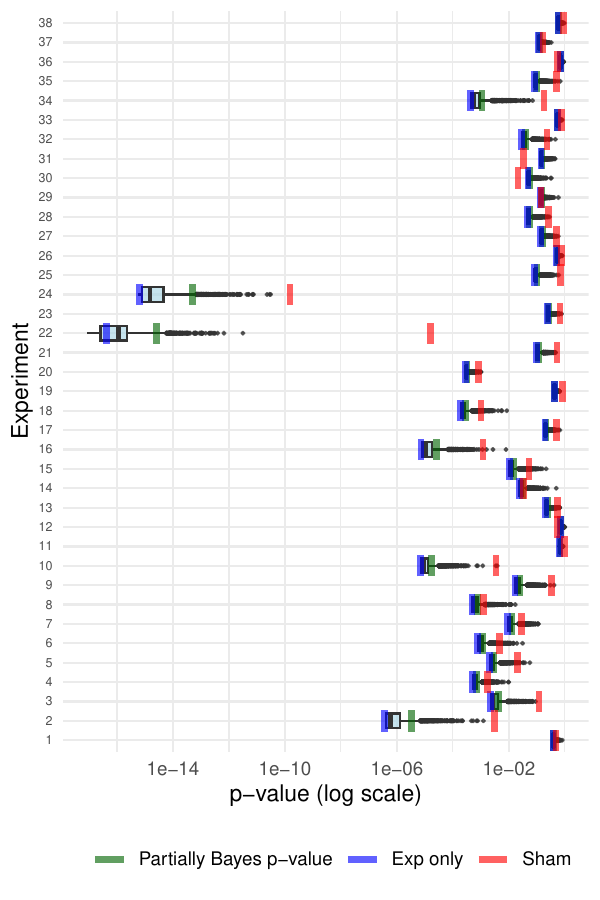} 
  \end{tabular}
  \caption{Illustration of partially Bayes methodology in the study of~\citet{blackman1988influence} with $n=38$ experiments. a) Boxplots of posterior draws for the nuisance parameters $b_i$ (systematic biases) for each experiment, with the known standard deviation $\sigma_{i1}$ of the treatment outcome shown for scale. b) Comparison of p-values: the sham-corrected p-values ($P_i^{\text{sham}}$), the experiment only p-values assuming no bias ($P_i^{\text{exp}}$), and our proposed partially Bayes p-values ($\cpvalue_i$). The boxplots for $\Pvalfunoracle(Y_{i1}, Y_{i2}, b_i)$ show the distribution of oracle p-values over the posterior draws of $b_i$, illustrating the uncertainty about the p-value due to the unknown bias.}
  \label{fig:boxplots_comparison}
\end{figure}

Here we explain how our framework with $T_i = Y_{i1}$, $U_i=Y_{i0}$, $\primary_i= \mu_i$, and $\nuisance_i=b_i$ can be used to form partially Bayes p-values. 
To apply our approach, given the small number of experiments, we pursue a parametric specification with improper uniform priors for the hyperparameters. Our hierarchical formulation for the nuisance parameters (\eqref{eq:hierarchy2} and~\eqref{eq:hierarchy3}) reads as
\smash{$b_i \cond \eta,\tau \, \sim \,  \mathrm{N}(\eta, \tau^2),\;
p(\eta, \tau) \propto 1$} and we fit this model using Stan~\citep{JSSv076i01}. \citet{gelman2021slamming} consider a specification in which no pooling occurs for $\mu_i$ but only for $b_i$. Our specification is very similar, but we treat $\mu_i$ in a frequentist fashion rather than placing an uninformative prior on it. The interpretation for $b_i$ is identical.

We show the results in Fig.~\ref{fig:boxplots_comparison}. Panel a) shows boxplots of 4,000 posterior draws for each nuisance parameter $b_i$ as well as $\sigma_{i1}$. As already noted by~\citet{gelman2021slamming}, most $b_i$ are small when compared to $\sigma_{i1}$, that is, the potential source of bias via $b_i$ is small when compared to the inherent noise in $Y_{i1}$. Panel b) shows boxplots of $\Pvalfunoracle(Y_{i1}, Y_{i2}, b_i)$, computed with the same $4,000$ posterior draws of $b_i$. The plot also shows the Monte-Carlo approximation of the partially Bayes p-values \smash{$\cpvalue_i$} (averaged over the $4,000$ posterior draws of $b_i$) as well as the sham-corrected p-values \smash{$P_i^{\text{sham}}$} and the p-values \smash{$P_i^{\text{exp}}$} that assume that $b_i=0$. We see that the p-values \smash{$P_i^{\text{sham}}$} are substantially larger than \smash{$P_i^{\text{exp}}$}, as expected. The partially Bayes p-values \smash{$\cpvalue_i$} are in between the two, and closer to \smash{$P_i^{\text{exp}}$} in most cases. This is because, as shown in panel a), most $b_i$ are small relative to $\sigma_{i1}$. Thus, the parametric partially Bayes p-values \smash{$\cpvalue_i$} automatically adapt to the data and avoid being overly conservative.

\section{Conclusion}

Partially Bayes p-values are a practical and general approach for large-scale inference that handles nuisance parameters by sharing information across units via Bayesian hierarchical modeling. Our proposal provides a principled Bayesian framework for the type of hybrid frequentist-(empirical) Bayesian p-values that practitioners already commonly use in high-throughput biology. We view it as a flexible compromise between Bayesian and frequentist philosophies: leveraging Bayesian nonparametrics to learn nuisance parameter distributions while providing approximate calibration guarantees that, in the spirit of~\citet{rubin1984bayesianly}, tie our  modeling to real-world frequency calculations.

\paragraph{Code availability.} In writing our initial prototype, we followed the code in the Particles.jl package~\citep{kleinschmidt2024particles}. All numerical results in this paper are fully third party reproducible using code available on Github (\url{https://github.com/nignatiadis/partially-bayes-pvalues-paper}).

\paragraph{Acknowledgements.} We thank Surya Tokdar for helpful discussions. This work was completed in part with resources provided by the University of Chicago’s Research Computing Center.
The authors gratefully acknowledge support from the U.S. National Science Foundation (DMS-2443410 for NI and DMS-2152999 for LM).

\bibliographystyle{abbrvnat}
\bibliography{conditional_predictive}

\appendix

\setcounter{equation}{0}
\setcounter{figure}{0}
\setcounter{table}{0}
\setcounter{prop}{0}

\renewcommand{\theequation}{S\arabic{equation}}
\renewcommand{\thefigure}{S\arabic{figure}}
\renewcommand{\thetable}{S\arabic{table}}
\renewcommand{\theprop}{S\arabic{prop}}
\renewcommand{\thetheo}{S\arabic{theo}}
\renewcommand{\thelemm}{S\arabic{lemm}}

\section{Further theoretical results}
\label{sec:further_theoretical_results}

\begin{prop}[Calibration as $K\to \infty$ in the empirical Bayes frame]
\label{prop:K_to_infty_empirical_bayes}
Let $\nuisancetrue_i \sim \Gstar$. Suppose that Assumptions~\ref{assu:test_statistics} and \ref{assu:regular_pvalues} hold, and that for all $i \in \Hnull$ we have that:
\begin{itemize}
\item[$(*')$] For any $\delta >0$, 
$\,\EE[\Gstar]{\EE[\nuisancevectortrue]{ \Pi(\nu_i\,:\, d(\nuisance_i, \nuisance_i^{\star}) > \delta \mid U_1,\ldots,U_n)}} \to 0\, \text{ as }\, K \to \infty.$    
\end{itemize}
Then, as $K \to \infty$ (with $n$ fixed),
$$ \limsup_{K \to \infty} \max_{i \in \Hnull} \sup_{\alpha \in [0,1]}\abs{\PP[\Gstar, \primaryvectortrue]{ \cpvalue_i \leq \alpha} - \alpha} =0.$$
\end{prop}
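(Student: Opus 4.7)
The plan is to mirror the argument behind Theorem~\ref{theo:K_to_infty} but to carry out the integration against $\Gstar$ alongside the other expectations, using that $(*')$ is precisely $(*)$ averaged over $\nuisancetrue_i \sim \Gstar$. Throughout, fix $i \in \Hnull$. By Theorem~\ref{theo:oracle_representation}(a), $\cpvalue_i = \EE[\Pi]{\Pvalfunoracle[K](T_i, U_i, \nu_i) \mid U_1,\ldots,U_n}$. Assumption~\ref{assu:regular_pvalues} (uniform equicontinuity in $\nu$ of the $\Pvalfunoracle[K](t,u,\cdot)$ maps) gives that, for any $\varepsilon>0$, there is $\delta=\delta(\varepsilon)>0$ with
\begin{equation*}
\bigl|\cpvalue_i - \Pvalfunoracle[K](T_i, U_i, \nuisancetrue_i)\bigr|
\;\le\; \varepsilon \;+\; \Pi\bigl(\nu_i\,:\, d(\nu_i, \nuisancetrue_i) > \delta \mid U_1,\ldots,U_n\bigr),
\end{equation*}
uniformly in $K$ and in the data, since p-values are bounded by $1$.

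Next, I take expectation under $\PP[\Gstar, \primaryvectortrue]{\cdot}$. Condition $(*')$ is exactly the statement that the expected posterior tail mass on the RHS tends to $0$, so $\EE[\Gstar,\primaryvectortrue]{|\cpvalue_i - \Pvalfunoracle[K](T_i, U_i, \nuisancetrue_i)|}\to \varepsilon$, and since $\varepsilon$ was arbitrary this $L^1$-difference tends to $0$. Meanwhile, Proposition~\ref{prop:oracle_nuisance_pvalue} applied conditionally on $\nuisancetrue_i$ followed by iterated expectation yields that $\Pvalfunoracle[K](T_i, U_i, \nuisancetrue_i)$ is \emph{exactly} $\mathrm{Unif}[0,1]$ under $\PP[\Gstar, \primaryvectortrue]{\cdot}$ (using that $\primarytrue_i=\primary_0$ and Assumption~\ref{assu:test_statistics}).

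The passage from $L^1$-closeness to CDF-closeness is a standard sandwich: for any $\eta>0$,
\begin{equation*}
\PP{\Pvalfunoracle[K] \le \alpha-\eta} - \PP{|\cpvalue_i-\Pvalfunoracle[K]|>\eta}
\;\le\; \PP{\cpvalue_i \le \alpha}
\;\le\; \PP{\Pvalfunoracle[K] \le \alpha+\eta} + \PP{|\cpvalue_i-\Pvalfunoracle[K]|>\eta},
\end{equation*}
where the lower endpoint is interpreted as $0$ when $\alpha<\eta$. The outer CDFs evaluate to $\alpha\pm\eta$ by the preceding paragraph, and the error term is $o(1)$ by Markov's inequality, so $|\PP[\Gstar,\primaryvectortrue]{\cpvalue_i\le\alpha}-\alpha|\le \eta+o(1)$. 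Sending $\eta\to 0$ gives pointwise convergence in $\alpha$, and since the limiting CDF is continuous on $[0,1]$, monotonicity of both CDFs upgrades this to uniform convergence in $\alpha\in[0,1]$ (a Polya-type argument). Taking a maximum over the finitely many $i\in\Hnull$ (since $n$ is fixed) concludes.

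The main obstacle I anticipate is purely bookkeeping: keeping the $\varepsilon$ (from equicontinuity) and $\eta$ (from the sandwich) decoupled from the $K\to\infty$ limit, and ensuring that the passage from pointwise to uniform-in-$\alpha$ convergence is justified cleanly. There is no genuinely new deconvolution-type difficulty beyond what is already encoded in $(*')$, and the argument mirrors the frequentist-frame proof of Theorem~\ref{theo:K_to_infty} with $G^\star$-averaging inserted one level up.
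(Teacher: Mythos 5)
Your proof is correct and follows essentially the same route as the paper, which simply notes that the argument is the proof of Theorem~\ref{theo:K_to_infty} with an additional outer expectation over $\Gstar$: the representation via Theorem~\ref{theo:oracle_representation}(a), the equicontinuity bound $\varepsilon + \Pi(d(\nu_i,\nuisancetrue_i)>\delta \mid U_1,\ldots,U_n)$, condition $(*')$ to get $L^1$-closeness to the exactly uniform oracle p-value, and a transfer from $L^1$-closeness to uniform CDF-closeness. The only cosmetic difference is that you package the last step with Markov's inequality and a P\'olya-type argument, whereas the paper uses its Lemma~\ref{lemm:l1_to_uniformity} (a deterministic indicator inequality with $\delta=\sqrt{\eta_m}$), which already yields the bound uniformly in $\alpha$; both are equivalent in substance.
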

Note that the condition $(*')$ of this proposition is entirely analogous to condition $(*)$ in Theorem~\ref{theo:K_to_infty}, the only difference being that the expectation is now also taken with respect to the nuisance parameter distribution $\Gstar$. The proof is analogous to the proof of Theorem~\ref{theo:K_to_infty} in Supplement~\ref{subsec:proof_K_to_infty}.

\section{Proofs}
\subsection{Auxiliary lemmata}

\begin{lemm}
  \label{lemm:deterministic_indicator_to_l1}
Let $P_i, P_i^* \in [0,1]$. For any $\delta \in (0, 1)$ and $\alpha \in [0, 1]$, it holds that:
$$ \ind(P_i  \leq \alpha) - \ind(P_i^* \leq \alpha+\delta) \leq \frac{1}{\delta}\abs{P_i - P_i^*}.$$
Similarly, 
$$ \ind(P_i  \leq \alpha) - \ind(P_i^* \leq \alpha-\delta) \geq -\frac{1}{\delta}\abs{P_i - P_i^*}.$$
\end{lemm}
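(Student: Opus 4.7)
The plan is a straightforward deterministic case analysis: prove the first inequality by splitting on the values of the two indicators, and then deduce the second from the first by a symmetric relabeling. There is really no technical obstacle here; the lemma is a bookkeeping device that converts a pointwise difference of indicator functions into an $L^1$-type bound, to be fed later into integration against the various probability measures that appear in the calibration theorems.

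For the first inequality, I would split on whether $\ind(P_i \leq \alpha) = 0$ or $1$. If it is $0$, then the LHS is nonpositive while the RHS $\tfrac{1}{\delta}|P_i - P_i^*|$ is nonnegative, so the bound is immediate. If $\ind(P_i \leq \alpha) = 1$ and $\ind(P_i^* \leq \alpha+\delta) = 1$, the LHS equals $0$ and we are again done. The only substantive case is $P_i \leq \alpha$ and $P_i^* > \alpha + \delta$, in which $|P_i - P_i^*| = P_i^* - P_i > \delta$, so $\tfrac{1}{\delta}|P_i - P_i^*| > 1$, which dominates the LHS (equal to $1$).

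For the second inequality, I would rewrite it as $\ind(P_i^* \leq \alpha-\delta) - \ind(P_i \leq \alpha) \leq \tfrac{1}{\delta}|P_i - P_i^*|$. Swapping the roles $(P_i, P_i^*) \mapsto (P_i^*, P_i)$ and replacing the level $\alpha$ in the first inequality by $\alpha' := \alpha - \delta$ (so that $\alpha' + \delta = \alpha$) turns the first inequality into precisely this statement. Thus the second bound follows from the first without additional work. The only thing to be mindful of is that the relabeling $\alpha \mapsto \alpha - \delta$ may push the threshold outside $[0,1]$, but the case analysis above never used that $\alpha \in [0,1]$ in an essential way, only that $\alpha$ and $\alpha \pm \delta$ are real numbers, so the argument goes through verbatim.
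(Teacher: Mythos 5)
Your proof is correct and follows essentially the same route as the paper: identify the single nontrivial case $P_i \le \alpha$, $P_i^* > \alpha+\delta$, where $|P_i - P_i^*| > \delta$ forces the right-hand side above $1$. The only cosmetic difference is that you obtain the second inequality by relabeling $(P_i,P_i^*,\alpha)\mapsto(P_i^*,P_i,\alpha-\delta)$ rather than by a direct symmetric case check, and your observation that the argument never uses $\alpha\in[0,1]$ makes that relabeling legitimate.
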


\begin{proof}
For the first inequality, we note that the left-hand side is positive only if $P_i \leq \alpha$ and $P_i^* > \alpha+\delta$. In this case, the inequality holds because:
$$
1 \leq \frac{P_i^* - P_i}{\delta} = \frac{\abs{P_i - P_i^*}}{\delta}.
$$
For the second inequality, the left-hand side is negative only if $P_i > \alpha$ and $P_i^* \leq \alpha-\delta$. In this case, the inequality holds because:
$$
-1 \geq -\frac{P_i - P_i^*}{\delta} = -\frac{\abs{P_i - P_i^*}}{\delta}.
$$
\end{proof}

\begin{lemm}
\label{lemm:l1_to_uniformity}
Consider pairs of $[0,1]$-valued random variables $(P_1^K, Q_i^K), \ldots, (P_n^K, Q_n^K)$ where both $K$ and $n$ are positive integers. Moreover write $K=K(m)$ and $n=n(m)$ for some $m \in \NN$ that indexes asymptotics in which $K$ or $n$ may grow. Let $\mathcal{H}_0 \equiv \mathcal{H}_0^m \subset \cb{1,\ldots,n}$ be a subset of indices.
We have the following results.
\begin{enumerate}[leftmargin=*,label=(\alph*)]
\item Suppose that $Q_i^K$ is uniformly distributed on $[0,1]$ for all $i \in \mathcal{H}_0$, that is, $\PP{Q_i^K \leq \alpha} = \alpha$ for all $\alpha \in [0,1]$. Suppose also that
$$
\max_{i \in \mathcal{H}_0} \cb{\EE{\abs{P_i^K - Q_i^K}}} \to 0 \, \text{ as }\, m \to \infty.
$$
Then, 
it follows that:
$$
\max_{i \in \mathcal{H}_0} \sup_{\alpha \in [0,1]}\abs{\PP{P_i^K \leq \alpha} - \alpha} \to 0\, \text{ as }\, m \to \infty.
$$
\item Suppose that $Q_1^K,\ldots,Q_n^K$ are compound p-values, that is, suppose that
$$
\frac{1}{n}\sum_{i \in \mathcal{H}_0} \PP{Q_i^K \leq \alpha} \leq \alpha\, \text{ for all }\, \alpha \in [0,1],
$$
and also suppose that
$$
\frac{1}{n}\sum_{i \in \mathcal{H}_0} \EE{\abs{P_i^K - Q_i^K}} \to 0 \, \text{ as }\, m \to \infty.
$$
Then it follows that $P_1^K,\ldots,P_n^K$ are asymptotically compound p-values, i.e.,
$$\limsup_{m \to \infty} \sup_{\alpha \in [0,1]}\bigg(\frac{1}{n}\sum_{i \in \mathcal{H}_0} \PP{P_i^K \leq \alpha} - \alpha\bigg)_+ =\,0,$$
where $a_+ = \max\{0,a\}$ for $a \in \RR$.
\end{enumerate}
\end{lemm}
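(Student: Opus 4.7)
I would reduce both parts to a standard ``approximation-by-shifted-CDF'' argument: start from the deterministic inequalities in Lemma~\ref{lemm:deterministic_indicator_to_l1}, take expectations to lift them to probabilities, substitute the assumed uniformity (part a) or compound-p-value bound (part b) for $Q_i^K$, and then let the slack $\delta$ shrink at the right rate in the asymptotics. Concretely, applying the two pointwise inequalities of Lemma~\ref{lemm:deterministic_indicator_to_l1} to $(P_i^K, Q_i^K)$ and taking expectations will yield, for every $\alpha \in [0,1]$, every $\delta \in (0,1)$, and every $i$, the paired bound
\begin{equation*}
\PP{Q_i^K \leq \alpha-\delta} - \frac{\EE{\abs{P_i^K - Q_i^K}}}{\delta} \;\leq\; \PP{P_i^K \leq \alpha} \;\leq\; \PP{Q_i^K \leq \alpha+\delta} + \frac{\EE{\abs{P_i^K - Q_i^K}}}{\delta}.
\end{equation*}

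For part (a), I would plug in the exact uniformity of $Q_i^K$, which gives $\PP{Q_i^K \leq \alpha+\delta} \leq \alpha+\delta$ and $\PP{Q_i^K \leq \alpha-\delta} \geq \alpha-\delta$ (the lower inequality being trivial when $\alpha < \delta$). The paired bound then collapses to $\abs{\PP{P_i^K \leq \alpha} - \alpha} \leq \delta + \EE{\abs{P_i^K - Q_i^K}}/\delta$, uniformly in $\alpha$. Taking the max over $i \in \mathcal{H}_0$ and choosing $\delta = \delta_m := \varepsilon_m^{1/2}$ with $\varepsilon_m := \max_{i \in \mathcal{H}_0} \EE{\abs{P_i^K - Q_i^K}}$ makes the right-hand side equal $2\delta_m$, which tends to $0$ by hypothesis.

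For part (b), I would use only the upper half of the paired bound. Summing it over $i \in \mathcal{H}_0$, dividing by $n$, and invoking the compound-p-value hypothesis for $Q_i^K$ at level $\alpha+\delta$:
\begin{equation*}
\frac{1}{n}\sum_{i \in \mathcal{H}_0} \PP{P_i^K \leq \alpha} \;\leq\; \frac{1}{n}\sum_{i \in \mathcal{H}_0} \PP{Q_i^K \leq \alpha+\delta} + \frac{1}{n\delta}\sum_{i \in \mathcal{H}_0} \EE{\abs{P_i^K - Q_i^K}} \;\leq\; (\alpha+\delta) + \bar{\varepsilon}_m/\delta,
\end{equation*}
where $\bar{\varepsilon}_m := \frac{1}{n}\sum_{i \in \mathcal{H}_0} \EE{\abs{P_i^K - Q_i^K}} \to 0$. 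Taking positive parts, sup over $\alpha \in [0,1]$, and then choosing $\delta = \delta_m := \bar{\varepsilon}_m^{1/2}$ sends the right-hand side to $0$, which is the asymptotic compound p-value property.

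I do not anticipate a substantive obstacle. The only subtle point is the rate at which $\delta$ is taken to zero: it must vanish (to kill the additive $\delta$ coming from the CDF shift) while vanishing strictly slower than the $L^1$ error (to kill the $\EE{\abs{P_i^K - Q_i^K}}/\delta$ term), and the balanced choice $\delta_m \asymp \varepsilon_m^{1/2}$ (resp.\ $\bar{\varepsilon}_m^{1/2}$) resolves this trade-off.
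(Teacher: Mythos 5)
Your proposal is correct and follows essentially the same route as the paper's proof: lift the pointwise inequalities of Lemma~\ref{lemm:deterministic_indicator_to_l1} to probabilities by taking expectations, substitute the uniformity (resp.\ compound) bound for $Q_i^K$ at the shifted level $\alpha\pm\delta$, and balance the trade-off with $\delta=\sqrt{\eta_m}$. No substantive difference from the paper's argument.
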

\begin{proof}
We first prove part (a). Let $\varepsilon > 0$ be arbitrary. By Lemma~\ref{lemm:deterministic_indicator_to_l1}, for any $\delta \in (0,1)$, $\alpha \in [0,1]$, and $i \in \mathcal{H}_0$:
$$\ind(P_i^K \leq \alpha) \leq \ind(Q_i^K \leq \alpha + \delta) + \frac{1}{\delta}\abs{P_i^K - Q_i^K}$$
and
$$\ind(P_i^K \leq \alpha) \geq \ind(Q_i^K \leq \alpha - \delta) - \frac{1}{\delta}\abs{P_i^K - Q_i^K}.$$
Taking expectations for a fixed $i \in \mathcal{H}_0$:
$$\PP{P_i^K \leq \alpha} \leq \PP{Q_i^K \leq \alpha + \delta} + \frac{1}{\delta}\EE{\abs{P_i^K - Q_i^K}}$$
and
$$\PP{P_i^K \leq \alpha} \geq \PP{Q_i^K \leq \alpha - \delta} - \frac{1}{\delta}\EE{\abs{P_i^K - Q_i^K}}.$$
Since $Q_i^K \sim \text{Unif}[0,1]$, we have $\PP{Q_i^K \leq x} = x$ for $x \in [0,1]$. This implies $\PP{Q_i^K \leq \alpha + \delta} \leq \alpha + \delta$ and $\PP{Q_i^K \leq \alpha - \delta} \geq \alpha - \delta$. Thus:
$$\alpha - \delta - \frac{1}{\delta}\EE{\abs{P_i^K - Q_i^K}} \leq \PP{P_i^K \leq \alpha} \leq \alpha + \delta + \frac{1}{\delta}\EE{\abs{P_i^K - Q_i^K}}.$$
Therefore, for each $i \in \mathcal{H}_0$ and any $\alpha \in [0,1]$:
$$\abs{\PP{P_i^K \leq \alpha} - \alpha} \leq \delta + \frac{1}{\delta}\EE{\abs{P_i^K - Q_i^K}}.$$
Let $\eta_m := \max_{i \in \mathcal{H}_0} \EE{\abs{P_i^K - Q_i^K}}$. By assumption, $\eta_m \to 0$ as $m \to \infty$.
The above inequality implies:
$$\max_{i \in \mathcal{H}_0} \sup_{\alpha \in [0,1]}\abs{\PP{P_i^K \leq \alpha} - \alpha} \leq \delta + \frac{1}{\delta}\eta_m.$$
Choose $\delta = \sqrt{\eta_m}$. Then:
$$\max_{i \in \mathcal{H}_0} \sup_{\alpha \in [0,1]}\abs{\PP{P_i^K \leq \alpha} - \alpha} \leq \sqrt{\eta_m} + \frac{1}{\sqrt{\eta_m}}\eta_m = 2\sqrt{\eta_m}.$$
Since $\eta_m \to 0$ as $m \to \infty$, the result for part (a) follows.

Now we prove part (b). From the first inequality derived in the proof of part (a), we have for any $i \in \mathcal{H}_0$:
$$\PP{P_i^K \leq \alpha} \leq \PP{Q_i^K \leq \alpha + \delta} + \frac{1}{\delta}\EE{\abs{P_i^K - Q_i^K}}.$$
Summing over $i \in \mathcal{H}_0$ and dividing by $n$:
$$\frac{1}{n}\sum_{i \in \mathcal{H}_0} \PP{P_i^K \leq \alpha} \leq \frac{1}{n}\sum_{i \in \mathcal{H}_0} \PP{Q_i^K \leq \alpha + \delta} + \frac{1}{n}\sum_{i \in \mathcal{H}_0} \frac{1}{\delta}\EE{\abs{P_i^K - Q_i^K}}.$$
By the assumption that $Q_i^K$ are compound p-values, we have $\frac{1}{n}\sum_{i \in \mathcal{H}_0} \PP{Q_i^K \leq \alpha + \delta} \leq \alpha + \delta$.
Let $\eta_m := \frac{1}{n}\sum_{i \in \mathcal{H}_0} \EE{\abs{P_i^K - Q_i^K}}$. By assumption, $\eta_m \to 0$ as $m \to \infty$. Then:
$$\frac{1}{n}\sum_{i \in \mathcal{H}_0} \PP{P_i^K \leq \alpha} \leq \alpha + \delta + \frac{1}{\delta}\eta_m.$$
This implies:
$$\frac{1}{n}\sum_{i \in \mathcal{H}_0} \PP{P_i^K \leq \alpha} - \alpha \leq \delta + \frac{1}{\delta}\eta_m.$$
Since the right-hand side is always positive, taking the positive part of the left-hand side and then the supremum over $\alpha \in [0,1]$ gives:
$$\sup_{\alpha \in [0,1]}\left(\frac{1}{n}\sum_{i \in \mathcal{H}_0} \PP{P_i^K \leq \alpha} - \alpha\right)_+ \leq \delta + \frac{1}{\delta}\eta_m.$$
Choosing $\delta = \sqrt{\eta_m}$ yields the bound $2\sqrt{\eta_m}$. Since $\eta_m \to 0$ as $m \to \infty$, we conclude the proof of part (b).
\end{proof}

\begin{lemm}
\label{lem:pvalue_ratios_to_diffs}
For any distributions $G,H$ supported on $\nuisancespace$ and any $u$ such that $f(u;G)>0$, it holds that:
$$\abs{\Pvalfunoracle(t, u, G) - \Pvalfunoracle(t, u, H)} \leq 2\abs{\frac{N(t,u;G) - N(t,u;H)}{f(u;G)}} \,+\,2\abs{\frac{f(u;G) - f(u;H)}{f(u;G)}},$$
where $f(u;G)$ is defined in~\eqref{eq:marginal_density} and $N(t,u;G)$ is defined as follows:
\begin{equation}
N(t,u;G) := \int \Pvalfunoracle(t,u,\nu) p(u \mid \nu) \, G(\dd\nu).
\label{eq:N_function}
\end{equation}
\end{lemm}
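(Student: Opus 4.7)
The plan is to exploit the ratio representation $\Pvalfunoracle(t,u,G) = N(t,u;G)/f(u;G)$ and then bound the difference of two such ratios via a standard add-and-subtract trick on the numerator.

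First, I would establish the ratio representation. The iterated expectation identity $\Pvalfunoracle(t,u,G) = \EE[G]{\Pvalfunoracle(t,u,\nu_i) \mid U_i = u}$ is noted in Section~\ref{subsec:representations} immediately after Proposition~\ref{prop:oracle_G_pvalue}. Combining this with Bayes' rule---the posterior distribution of $\nu_i$ given $U_i=u$ under the prior $\nu_i \sim G$ has density $p(u\mid \nu)/f(u;G)$ with respect to $G(\dd\nu)$---yields
\begin{equation*}
\Pvalfunoracle(t,u,G) \;=\; \frac{N(t,u;G)}{f(u;G)}
\end{equation*}
whenever $f(u;G)>0$, and analogously for $H$ whenever $f(u;H)>0$.

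Second, assuming $f(u;H)>0$, I would use the algebraic identity
\begin{equation*}
\frac{N(t,u;G)}{f(u;G)} - \frac{N(t,u;H)}{f(u;H)}
\;=\; \frac{N(t,u;G) - N(t,u;H)}{f(u;G)} \,+\, \Pvalfunoracle(t,u,H) \cdot \frac{f(u;H) - f(u;G)}{f(u;G)},
\end{equation*}
obtained by adding and subtracting $N(t,u;H)/f(u;G)$ and recognizing $N(t,u;H)/f(u;H) = \Pvalfunoracle(t,u,H)$. The triangle inequality together with the uniform bound $0 \leq \Pvalfunoracle(t,u,H) \leq 1$ immediately yields the claimed inequality---and in fact with coefficient $1$ rather than $2$ on each of the two terms on the right, so the coefficient $2$ in the lemma is slack.

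Finally, I would dispatch the edge case $f(u;H)=0$ separately: here $N(t,u;H)=0$ as well, since the integrand in~\eqref{eq:N_function} is bounded pointwise by $p(u\mid\nu)$ (using $\Pvalfunoracle(t,u,\nu)\leq 1$), and $\abs{f(u;G)-f(u;H)}/f(u;G) = 1$, so the right-hand side of the claimed bound is at least $2$, which trivially dominates the left-hand side (a difference of two quantities in $[0,1]$). There is no genuine obstacle here: the lemma is a purely algebraic consequence of the ratio representation together with the uniform bound $\Pvalfunoracle \in [0,1]$, and the coefficient $2$ in the statement appears to be chosen precisely so that the $f(u;H)=0$ case is absorbed into a single uniform bound without separate bookkeeping.
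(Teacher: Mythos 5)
Your proof is correct, and it is a genuine (if modest) variant of the paper's argument. Both proofs start from the same ratio representation $\Pvalfunoracle(t,u,G)=N(t,u;G)/f(u;G)$, but you then use the asymmetric two-term decomposition obtained by adding and subtracting $N(t,u;H)/f(u;G)$, which after the bound $\abs{N(t,u;H)/f(u;H)}=\Pvalfunoracle(t,u,H)\le 1$ gives the inequality with constant $1$ on each term; the paper instead passes through the midpoint measure $M=(G+H)/2$ and a three-term triangle inequality, which yields the symmetric intermediate bound $\bigl(2\abs{N(t,u;G)-N(t,u;H)}+2\abs{f(u;G)-f(u;H)}\bigr)/\bigl(f(u;G)+f(u;H)\bigr)$ and hence the factor $2$ after dropping $f(u;H)\ge 0$ from the denominator. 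Your route is shorter and shows the stated constant is slack by a factor of $2$; the paper's route buys a bound that is symmetric in $G$ and $H$ before the final relaxation, though that symmetry is not used downstream. You also explicitly dispatch the degenerate case $f(u;H)=0$ (where $N(t,u;H)=0$ and the right-hand side is at least $2$), which the paper's proof silently glosses over since its displayed chain involves $N(t,u;H)/f(u;H)$; this is a small point in your favor. One cosmetic remark: your edge-case argument invokes the constant $2$ from the statement, whereas your main argument proves the sharper constant $1$; if you wanted to claim the constant-$1$ bound uniformly you would need to say how $\Pvalfunoracle(t,u,H)$ is even defined when $f(u;H)=0$, so it is cleanest to state the sharper bound only for $f(u;H)>0$, as you implicitly do.
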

\begin{proof}
We first note the following equality:
$$ 
\Pvalfunoracle(t, u, G) = \frac{\int \Pvalfunoracle(t,u,\nu) p(u \mid \nu) \, G(\dd\nu)}{\int p(u \mid \nu) \, G(\dd\nu)} = \frac{N(t,u;G)}{f(u;G)},
$$
as long as $f(u;G) > 0$. The same holds for $H$.

Now, let $N_G := N(t,u;G)$, $f_G := f(u;G)$, $N_H := N(t,u;H)$, and $f_H := f(u;H)$. We want to bound $\abs{N_G/f_G - N_H/f_H}$. Let $M:= (G+H)/2$. By linearity of integration, $N(t,u;M) = (N_G+N_H)/2$. Let's call this $N_M$. Similarly, let $f_M := f(u;M) = (f_G + f_H)/2$. 
We use the triangle inequality by adding and subtracting intermediate terms:
\begin{align*}
\abs{\frac{N_G}{f_G} - \frac{N_H}{f_H}} &= \abs{ \frac{N_G}{f_G} - \frac{N_G}{f_M} + \frac{N_G}{f_M} - \frac{N_H}{f_M} + \frac{N_H}{f_M} - \frac{N_H}{f_H} } \\
&\leq \abs{N_G} \abs{\frac{1}{f_G} - \frac{1}{f_M}} + \frac{\abs{N_G - N_H}}{\abs{f_M}} + \abs{N_H} \abs{\frac{1}{f_M} - \frac{1}{f_H}} \\
&= \frac{\abs{N_G}}{\abs{f_G}} \frac{\abs{f_M-f_G}}{\abs{f_M}} + \frac{\abs{N_G - N_H}}{\abs{f_M}} + \frac{\abs{N_H}}{\abs{f_H}} \frac{\abs{f_H-f_M}}{\abs{f_M}}.
\end{align*}
By definition, $\Pvalfunoracle(t,u,G),\Pvalfunoracle(t,u,H) \in [0,1]$, so $\abs{N_G/f_G} \leq 1$ and $\abs{N_H/f_H} \leq 1$.
The inequality becomes:
\begin{align*}
\abs{\frac{N_G}{f_G} - \frac{N_H}{f_H}} &\leq \frac{\abs{f_M-f_G}}{\abs{f_M}} + \frac{\abs{N_G - N_H}}{\abs{f_M}} + \frac{\abs{f_H-f_M}}{\abs{f_M}}.
\end{align*}
Using $f_M-f_G = (f_H-f_G)/2$ and $f_H-f_M = (f_H-f_G)/2$, we get:
\begin{align*}
\abs{\frac{N_G}{f_G} - \frac{N_H}{f_H}} &\leq \frac{2\abs{f(u;G)-f(u;H)} + 2\abs{N(t,u;G) - N(t,u;H)}}{f(u;G)+f(u;H)}.
\end{align*}
Since $f(u;H) \geq 0$, we have the the desired result:
$$
\abs{\Pvalfunoracle(t, u, G) - \Pvalfunoracle(t, u, H)} \leq 2\frac{\abs{N(t,u;G) - N(t,u;H)}}{f(u;G)} + 2\frac{\abs{f(u;G) - f(u;H)}}{f(u;G)}.
$$
\end{proof}

\subsection{Proof of Theorem~\ref{theo:K_to_infty}}
\label{subsec:proof_K_to_infty}
\begin{proof}
Fix $i \in \Hnull$. Let $\cpvalue_i = \cpvaluefun_i(T_i^K, (U_1^K, \ldots, U_n^K); \Pi)$ be the partially Bayes p-value and let $P_i^{\star,K} = \Pvalfunoracle[K](T_i^K, U_i^K, \nuisancetrue_i)$ be the oracle p-value with access to the true nuisance parameter. By Proposition~\ref{prop:oracle_nuisance_pvalue}, $P_i^{\star,K} \sim \mathrm{Unif}[0,1]$.

By Theorem~\ref{theo:oracle_representation}(a), we have
\begin{align*}
\cpvalue_i &= \EE[\Pi]{\Pvalfunoracle[K](T_i^K, U_i^K, \nu_i) \mid U_1^K, \ldots, U_n^K}.
\end{align*}
By Jensen's inequality, we have:
\begin{align*}
\EE[\nuisancevectortrue]{\abs{\cpvalue_i - P_i^{\star,K}}} &= \EE[\nuisancevectortrue]{\abs{\EE[\Pi]{\Pvalfunoracle[K](T_i^K, U_i^K, \nu_i) \mid U_1^K, \ldots, U_n^K} - \Pvalfunoracle[K](T_i^K, U_i^K, \nuisancetrue_i)}} \\
&\leq \EE[\nuisancevectortrue]{\EE[\Pi]{\abs{\Pvalfunoracle[K](T_i^K, U_i^K, \nu_i) - \Pvalfunoracle[K](T_i^K, U_i^K, \nuisancetrue_i)} \mid U_1^K, \ldots, U_n^K}}.
\end{align*}
For any $\varepsilon > 0$, by Assumption~\ref{assu:regular_pvalues}, there exists $\delta > 0$ such that if $d(\nu, \nuisancetrue_i) \leq \delta$, then $\abs{\Pvalfunoracle[K](T_i^K, U_i^K, \nu) - \Pvalfunoracle[K](T_i^K, U_i^K, \nuisancetrue_i)} < \varepsilon$. Therefore:
\begin{align*}
\EE[\Pi]{\abs{\Pvalfunoracle[K](T_i^K, U_i^K, \nu_i) - \Pvalfunoracle[K](T_i^K, U_i^K, \nuisancetrue_i)} \mid U_1^K, \ldots, U_n^K} \\
\leq \varepsilon + \Pi(d(\nu_i, \nuisancetrue_i) > \delta \mid U_1^K, \ldots, U_n^K).
\end{align*}
Taking expectations with respect to $\nuisancevectortrue$ and using assumption $(*)$:
\begin{align*}
\EE[\nuisancevectortrue]{\abs{\cpvalue_i - P_i^{\star,K}}} &\leq \varepsilon + \EE[\nuisancevectortrue]{\Pi(d(\nu_i, \nuisancetrue_i) > \delta \mid U_1^K, \ldots, U_n^K)} \to \varepsilon\, \text{ as }\, K \to \infty.
\end{align*}
Since $\varepsilon$ was arbitrary, we have:
$$\max_{i \in \Hnull}\cb{\EE[\nuisancevectortrue]{\abs{\cpvalue_i - P_i^{\star,K}}}} \to 0\, \text{ as }\,K \to \infty.$$
The result follows by applying Lemma~\ref{lemm:l1_to_uniformity} with $P_i^K = \cpvalue_i$ and $Q_i^K = P_i^{\star,K}$.
\end{proof}

\subsection{Proof of Theorem~\ref{theo:calibration_n_to_infty_empirical_bayes}}
\label{subsec:proof_calibration_n_to_infty_empirical_bayes}

\begin{proof}
We will first show that under the above conditions:
\begin{equation}
\label{eq:l1_consistency_cpvalue}
\max_{i \in \Hnull}\EE[\Gstar]{ \abs{\cpvalue_i - \Pvalfunoracle(T_i, U_i, \Gstar)}} \to 0 \,\text{ as }\, n \to \infty.
\end{equation}
Restated, this is saying that,
$$
\max_{i \in \Hnull}\EE[\Gstar]{ \abs{\cpvaluefun_i(T_i, (U_1,\dotsc,U_n);\, \Pi)- \Pvalfunoracle(T_i, U_i, \Gstar)}} \to 0 \,\text{ as }\, n \to \infty.
$$
Fix $i \in \Hnull$. Let $\varepsilon > 0$ be arbitrary. By Assumption~\ref{assu:nuisance_statistics}, the collection of distributions $\{\mathbb P[U_i \in \cdot \mid \nuisance_i]: \nuisance_i \in \nuisancespace\}$ is tight. Therefore, there exists a compact set $C \subset \mathcal{U}$ such that
By tightness, for any $\varepsilon > 0$, there exists a compact set $C \subset \mathcal{U}$ such that
$\sup_{\nuisance_i \in \nuisancespace} \PP[\nuisance_i]{U_i \notin C} < \varepsilon$ and thus also such that $\PP[\Gstar]{U_i \notin C} < \varepsilon$.
We can then write
\begin{align*}
&\EE[\Gstar]{ \abs{\cpvaluefun_i(T_i, (U_1,\dotsc,U_n);\, \Pi)- \Pvalfunoracle(T_i, U_i, \Gstar)} }\\
&\;\;\;\; \leq \EE[\Gstar]{ \abs{\cpvaluefun_i(T_i, (U_1,\dotsc,U_n);\, \Pi)- \Pvalfunoracle(T_i, U_i, \Gstar)} \ind(U_i \in C) } \,+ \,\PP[\Gstar]{ \ind(U_i \notin C) } \\
&\;\;\;\;\leq \EE[\Gstar]{ \abs{\cpvaluefun_i(T_i, (U_1,\dotsc,U_n);\, \Pi)- \Pvalfunoracle(T_i, U_i, \Gstar)} \ind(U_i \in C}
 \,+\, \varepsilon,
\end{align*}
where we used that both p-values are in $[0,1]$ so their difference is at most $1$.

Let us use the hand notation $G_{\Pi}^{-i} := G_{\Pi}[U_1,\ldots,U_{i-1},U_{i+1}, \ldots,U_n]$. Then, by Theorem~\ref{theo:oracle_representation}b')
we have that $\cpvaluefun_i(t, (U_1,\dotsc,U_n);\, \Pi) = \Pvalfunoracle(t, U_i, G_{\Pi}^{-i})$. Using Lemma~\ref{lem:pvalue_ratios_to_diffs} and the notation therein, we have that:
$$
\begin{aligned}
&\abs{\Pvalfunoracle(t, U_i, G_{\Pi}^{-i})  - \Pvalfunoracle(t, U_i, \Gstar)} \\ 
&\;\;\;\;\;\leq  2\abs{\frac{N(t,U_i;G_{\Pi}^{-i}) - N(t,U_i;\Gstar)}{f(U_i;\Gstar)}} \,+\,2\abs{\frac{f(U_i;G_{\Pi}^{-i}) - f(U_i;\Gstar)}{f(U_i;\Gstar)}}.
\end{aligned}
$$
Let us bound $\sup_{u \in C}\abs{f(u;G_{\Pi}^{-i}) - f(u;\Gstar)}$. By Assumption~\ref{assu:nuisance_statistics}, the functions $\{\nu \mapsto p(u \mid \nu) : u \in C\}$ are uniformly bounded and uniformly equicontinuous on the compact set $\nuisancespace$. By the Arzelà-Ascoli theorem, this collection is relatively compact in the space of continuous functions on $\nuisancespace$ equipped with the supremum norm. Since any continuous function on a compact metric space can be approximated arbitrarily well by Lipschitz functions (by standard results in approximation theory), for any $\eta > 0$, there exists a finite collection of bounded Lipschitz functions $\{\psi_1, \ldots, \psi_M\}$ such that for any $u \in C$:
$$
\sup_{\nu \in \nuisancespace} \abs{p(u \mid \nu) - \psi_{j(u)}(\nu)} < \eta
$$
for some $j(u) \in \{1, \ldots, M\}$.
Now, for any $u \in C$:
\begin{align*}
\abs{f(u;G_{\Pi}^{-i}) - f(u;\Gstar)} &= \abs{\int p(u \mid \nu) [G_{\Pi}^{-i} - \Gstar](\dd\nu)} \\
&\leq \abs{\int \psi_{j(u)}(\nu) [G_{\Pi}^{-i} - \Gstar](\dd\nu)} + 2\eta \\
&\leq \Norm{\psi_{j(u)}}_{\text{Lip}} \Dbl(G_{\Pi}^{-i}, \Gstar) + 2\eta,
\end{align*}
where 
$$\Norm{\psi_j}_{\text{Lip}} := \Norm{\psi_j}_\infty + \sup_{\nu \neq \nu'} \frac{|\psi_j(\nu) - \psi_j(\nu')|}{d(\nu,\nu')}$$
denotes the Lipschitz norm. Taking the supremum over $u \in C$ and using the fact that there are only finitely many functions $\psi_j$:
$$\sup_{u \in C}\abs{f(u;G_{\Pi}^{-i}) - f(u;\Gstar)} \leq \max_{j=1,\ldots,M} \Norm{\psi_j}_{\text{Lip}} \Dbl(G_{\Pi}^{-i}, \Gstar) + 2\eta.$$
For $N(t,u;G) := \int \Pvalfunoracle(t,u,\nu) p(u \mid \nu) \, G(\dd\nu)$, similar to the analysis of $f(u;G)$, we need to control $\sup_{t \in \mathbb{R}, u \in C}\abs{N(t,u;G_{\Pi}^{-i}) - N(t,u;\Gstar)}$.
By Assumptions~\ref{assu:regular_pvalues} and~\ref{assu:nuisance_statistics}, the functions $\{\nu \mapsto \Pvalfunoracle(t,u,\nu) p(u \mid \nu) : t \in \mathbb{R}, u \in C\}$ are uniformly bounded and uniformly equicontinuous on $\nuisancespace$. By the Arzelà-Ascoli, for any $\eta > 0$, there exists a finite collection of bounded Lipschitz functions $\{\psi_{j}^* : j = 1, \ldots, L\}$ such that for any $(t,u) \in \mathbb{R} \times C$:
$$
\sup_{\nu \in \nuisancespace} \abs{\Pvalfunoracle(t,u,\nu) p(u \mid \nu) - \psi_{j(t,u)}^*(\nu)} < \eta
$$
for some $j(t,u) \in \{1, \ldots, L\}$.
Now, for any $(t,u) \in \mathbb{R} \times C$:
\begin{align*}
\abs{N(t,u;G_{\Pi}^{-i}) - N(t,u;\Gstar)} &= \abs{\int \Pvalfunoracle(t,u,\nu) p(u \mid \nu) [G_{\Pi}^{-i} - \Gstar](\dd\nu)} \\
&\leq \abs{\int \psi_{j(t,u)}^*(\nu) [G_{\Pi}^{-i} - \Gstar](\dd\nu)} + 2\eta \\
&\leq \Norm{\psi_{j(t,u)}^*}_{\text{Lip}} \Dbl(G_{\Pi}^{-i}, \Gstar) + 2\eta.
\end{align*}
Taking the supremum over $(t,u) \in \mathbb{R} \times C$ and using the finite collection:
$$\sup_{t \in \mathbb{R}, u \in C}\abs{N(t,u;G_{\Pi}^{-i}) - N(t,u;\Gstar)} \leq \max_{j=1,\ldots,L} \Norm{\psi_j^*}_{\text{Lip}} \Dbl(G_{\Pi}^{-i}, \Gstar) + 2\eta.$$
Now define $B := \max_{j=1,\ldots,L} \Norm{\psi_j^*}_{\text{Lip}} + \max_{j=1,\ldots,M} \Norm{\psi_j}_{\text{Lip}}$. Then we have:
\begin{equation}
\abs{\Pvalfunoracle(T_i, U_i, G_{\Pi}^{-i}) - \Pvalfunoracle(T_i, U_i, \Gstar)}\ind(U_i \in C) \leq \frac{2 B \Dbl(G_{\Pi}^{-i}, \Gstar)  + 4\eta}{f(U_i;\Gstar)}\ind(U_i \in C)
\label{eq:bound_on_diff_pvalues_on_C}
\end{equation}
with the convention $1/0 = \infty$. Next, note that,
$$
\EE[\Gstar]{\frac{1}{f(U_i;\Gstar)}\ind(U_i \in C)} = \int_C \frac{1}{f(u;\Gstar)}f(u;\Gstar) \ind(f(u;\Gstar) >0)\, \dd\lambda(u) \leq \lambda(C) < \infty.
$$
Also observe that $\Dbl(G_{\Pi}^{-i}, \Gstar)$ is a function of $U_1,\ldots,U_{i-1},U_{i+1},\ldots,U_n$ and thus independent of $U_i$. Therefore, combining all results so far, we have that:
\begin{equation}
\begin{aligned}
&\EE[\Gstar]{ \abs{\cpvaluefun_i(T_i, (U_1,\dotsc,U_n);\, \Pi)- \Pvalfunoracle(T_i, U_i, \Gstar)} } \\
&\;\;\;\;\;\leq 4B \EE[\Gstar]{\Dbl(G_{\Pi}^{-i}, \Gstar)} \lambda(C) + 4\eta \lambda(C) + \varepsilon.
\end{aligned}
\label{eq:main_bound_cpvalue_difference}
\end{equation}
Now notice that $\EE[\Gstar]{\Dbl(G_{\Pi}^{-i}, \Gstar)}=\EE[\Gstar]{\Dbl(G_{\Pi}[U_1,\dots,U_{n-1}], \Gstar)}$ by exchangeability. Thus, by $(*)$ of the theorem statement, we have that $\max_{i \in \Hnull}\EE[\Gstar]{\Dbl(G_{\Pi}^{-i}, \Gstar)} \to 0$ as $n \to \infty$. Hence, \eqref{eq:l1_consistency_cpvalue} follows by (i) taking $n \to \infty$, (ii) then taking $\eta \to 0$, and (iii) finally taking $\varepsilon \to 0$.

Finally, we apply Lemma~\ref{lemm:l1_to_uniformity} with $P_i^K = \cpvaluefun_i(T_i, (U_1,\dotsc,U_n);\, \Pi)$ and $Q_i^K = \Pvalfunoracle(T_i, U_i, \Gstar)$ to conclude with the proof the theorem.

\end{proof}

\subsection{Proof of Proposition~\ref{prop:TV_to_BL}}

\begin{proof}
The proof proceeds in three steps. First, we establish that the map from the nuisance parameter distribution $G$ to the marginal density of the nuisance statistic $f(\cdot; G)$ is continuous. Second, we use this continuity and the identifiability assumption to show that if $G$ is far from $\Gstar$, then $f(\cdot; G)$ must also be far from $f(\cdot; \Gstar)$ in a uniform way. Finally, the assumed posterior consistency for the marginal density will imply posterior consistency for the nuisance parameter distribution.\\

\noindent \textbf{Step 1: Continuity.} Consider a sequence of distributions $G_m$ on $\nuisancespace$ indexed by $m \in \NN$. Let $G_m \to \Gstar$ in the bounded-Lipschitz metric $\Dbl$, which is equivalent to weak convergence since $\nuisancespace$ is a compact metric space by Assumption~\ref{assu:nuisance_parameters}. We want to show that $\TV(f(\cdot; G_m), f(\cdot; \Gstar)) \to 0$.
By Assumption~\ref{assu:nuisance_statistics}, the family of distributions of $U_i$ is tight. Thus, for any $\varepsilon > 0$, there exists a compact set $C \subset \mathcal{U}$ such that $\sup_{\nu \in \nuisancespace} \PP[\nu]{U_i \notin C} < \varepsilon$. This implies that $\int_{C^c} f(u; G) \dd\lambda(u) < \varepsilon$ for any $G$, including $G_m$ and $\Gstar$.
The total variation distance can be bounded as:
$$
\TV(f(\cdot; G_m), f(\cdot; \Gstar)) \leq \frac{1}{2}\int_C \abs{f(u; G_m) - f(u; \Gstar)} \dd\lambda(u) + \frac{\varepsilon}{2}.
$$
For any $u \in C$, the function $\nu \mapsto p(u \mid \nu)$ is bounded and continuous by Assumption~\ref{assu:nuisance_statistics}. Since $G_m \to \Gstar$ weakly, we have $f(u; G_m) = \int p(u \mid \nu) G_m(\dd\nu) \to \int p(u \mid \nu) \Gstar(\dd\nu) = f(u; \Gstar)$ for each $u \in C$.
Furthermore, the functions $\nu \mapsto p(u \mid \nu)$ for $u \in C$ are uniformly bounded by some constant $M$. Thus, $\abs{f(u; G_m) - f(u; \Gstar)} \leq 2M$ for all $u \in C$. Since $\lambda(C) < \infty$, we can apply the dominated convergence theorem to conclude that $\int_C \abs{f(u; G_m) - f(u; \Gstar)} \dd\lambda(u) \to 0$. As $\varepsilon$ was arbitrary, this establishes the continuity of the map $G \mapsto f(\cdot; G)$ from the weak topology to the total variation topology.\\

\noindent \textbf{Step 2: Separation.} Let $A_\delta := \{G \in \mathcal{P}(\nuisancespace) : \Dbl(G, \Gstar) \ge \delta\}$, where $\mathcal{P}(\nuisancespace)$ is the space of probability measures supported on $\nuisancespace$.
Since $\nuisancespace$ is compact, $\mathcal{P}(\nuisancespace)$ is compact under the weak topology. The set $A_\delta$ is a closed subset of a compact space, hence it is compact. Now consider the map
$$
T: \mathcal{P}(\nuisancespace) \to [0,1],\;\; G \mapsto \TV(f(\cdot; G), f(\cdot; \Gstar)).
$$
The map $T$ is continuous from $\mathcal{P}(\nuisancespace)$ with the weak topology to $[0,1]$, because it is a composition of the continuous map $G \mapsto f(\cdot; G)$ (from Step 1) and the continuous map $f \mapsto \TV(f, f(\cdot; \Gstar))$. 
Restricting $T$ to $A_{\delta}$, we find that there exists $G' \in A_{\delta}$ such that $\inf_{G \in A_\delta} T(G) = T(G')$. Since $G' \neq G$, by identifiability, we thus must have that $T(G') >0$. That is, there exists an $\varepsilon_\delta > 0$ such that $\inf_{G \in A_\delta} T(G) = \varepsilon_\delta > 0$.\\ 

\noindent \textbf{Step 3: Posterior consistency.} The result from Step 2 implies the inclusion of events:
$$
\{G : \Dbl(G, \Gstar) \ge \delta\} \subseteq \{G : \TV(f(\cdot; G), f(\cdot; \Gstar)) \ge \varepsilon_\delta\}.
$$
Taking posterior probabilities conditional on $U_1, \ldots, U_n$ on both sides, we get:
$$
\Pi(G : \Dbl(G, \Gstar) \ge \delta \mid U_1, \ldots, U_n) \le \Pi(G : \TV(f(\cdot; G), f(\cdot; \Gstar)) \ge \varepsilon_\delta \mid U_1, \ldots, U_n).
$$
Taking expectation with respect to the data-generating distribution $\Gstar$, the right-hand side converges to 0 as $n \to \infty$ by the proposition's assumption. This proves that the posterior for $G$ is consistent in the bounded-Lipschitz metric. Finally, posterior consistency in the $\Dbl$ metric implies that the posterior mean $G_{\Pi}[U_1,\ldots,U_n]$ converges to $\Gstar$ in $\Dbl$ in expectation, as shown in~\citet[Theorem 6.8]{ghosal2017fundamentals}. This verifies condition $(*)$ of Theorem~\ref{theo:calibration_n_to_infty_empirical_bayes}.
\end{proof}

\section{Proof of Theorem~\ref{theo:calibration_n_to_infty_frequentist}}
\begin{proof}
The proof follows a similar structure to that of Theorem~\ref{theo:calibration_n_to_infty_empirical_bayes}, but adapted to the frequentist frame. We first show that the partially Bayes p-values $\cpvalue_i$ are close in $L_1$ to certain oracle compound p-values. Then, we use Lemma~\ref{lemm:l1_to_uniformity}(b) to establish the result.

Fix $i \in \Hnull$. Let $\nuisancevectortrue = (\nuisancetrue_1, \ldots, \nuisancetrue_n)$ be the fixed vector of true nuisance parameters. Let $G(\nuisancevectortrue) := \frac{1}{n}\sum_{j=1}^n \delta_{\nuisancetrue_j}$ be the empirical distribution of the nuisance parameters. We define the oracle compound p-value as $P_i^C := \Pvalfunoracle(T_i, U_i, G(\nuisancevectortrue))$. By Proposition~\ref{prop:compound_pvalues}, $P_1^C,\ldots,P_n^C$ are compound p-values.

Our first goal is to show that:
\begin{equation}
\label{eq:l1_consistency_compound}
\frac{1}{n}\sum_{i=1}^n\EE[\nuisancevectortrue]{\abs{\cpvalue_i - P_i^C}} \to 0 \quad \text{as } n \to \infty.
\end{equation}
By Theorem~\ref{theo:oracle_representation}(b'), the partially Bayes p-value is $\cpvalue_i = \Pvalfunoracle(T_i, U_i, G_{\Pi}^{-i})$, where we use the shorthand $G_{\Pi}^{-i} := G_{\Pi}[U_1,\ldots,U_{i-1},U_{i+1}, \ldots,U_n]$.
By Assumption~\ref{assu:nuisance_statistics}, for any $\varepsilon > 0$, there exists a compact set $C \subset \mathcal{U}$ such that $\sup_{\nu \in \nuisancespace} \PP[\nu]{U_i \notin C} < \varepsilon$.

Arguing as in the proof of Theorem~\ref{theo:calibration_n_to_infty_empirical_bayes} in Supplement~\ref{subsec:proof_calibration_n_to_infty_empirical_bayes}, we have that for any $\eta > 0$, we can prove the following analogous bound to \eqref{eq:bound_on_diff_pvalues_on_C} (with $B>0$ also defined analogously):
$$
\abs{\Pvalfunoracle(T_i, U_i, G_{\Pi}^{-i}) - \Pvalfunoracle(T_i, U_i, G(\nuisancevectortrue))}\ind(U_i \in C) \leq \frac{2 B \Dbl(G_{\Pi}^{-i}, G(\nuisancevectortrue) )  + 4\eta}{f(U_i;G(\nuisancevectortrue) )}\ind(U_i \in C).
$$
The expectation is over $(U_1, \ldots, U_n)$ with $\nuisancetrue_1, \ldots, \nuisancetrue_n$ fixed. Noting that $U_i$ is independent of $\{U_j\}_{j \neq i}$, we find for $i \in \Hnull$:
\begin{align*}
&\EE[\nuisancevectortrue]{\frac{2B\Dbl(G_{\Pi}^{-i}, G(\nuisancevectortrue)) + 4\eta}{f(U_i; G(\nuisancevectortrue))}\ind(U_i \in C)} \\
&\;\;\;\;\;\;\;= \, \cb{2B\EE[\nuisancevectortrue_{n,-i}]{\Dbl(G_{\Pi}^{-i}, G(\nuisancevectortrue)) }+ 4 \eta} \cdot \EE[\nuisancetrue_i]{\frac{1}{f(U_i; G(\nuisancevectortrue))}\ind(U_i \in C)}.
\end{align*}
Let $G(\nuisancevectortrue_{n,-i}) := \frac{1}{n-1}\sum_{j \neq i} \delta_{\nuisancetrue_j}$. By the triangle inequality:
$$
\Dbl(G_{\Pi}^{-i}, G(\nuisancevectortrue)) \leq \Dbl(G_{\Pi}^{-i}, G(\nuisancevectortrue_{n,-i})) + \Dbl(G(\nuisancevectortrue_{n,-i}), G(\nuisancevectortrue)).
$$
The second term is $\leq C'/n$ for any $n \geq 2$ and another constant $C'$ since $\mathcal{V}$ is a compact metric space (for instance, we can take $C'$ to be twice the diameter of $\mathcal{V}$).  Combined with assumption $(*)$ of the theorem we find that there exists $n_0$ such that for all $n \geq n_0(\eta)$:
$$ 
\max_{i=1,\ldots,n }\EE[\nuisancevectortrue_{n,-i}]{\Dbl(G_{\Pi}^{-i}, G(\nuisancevectortrue))} \leq \frac{\eta}{B}.
$$
For the other term we argue by averaging over $i=1,\ldots,n$:
$$
\begin{aligned}
\frac{1}{n} \sum_{i \in \Hnull}\EE[\nuisancetrue_i]{\frac{\ind(U_i \in C)}{f(U_i; G(\nuisancevectortrue))}} &\leq \frac{1}{n} \sum_{i=1}^n\EE[\nuisancetrue_i]{\frac{\ind(U_i \in C)}{f(U_i; G(\nuisancevectortrue))}} \\ 
&=  \frac{1}{n} \sum_{i=1}^n \int_C \frac{1}{f(u; G(\nuisancevectortrue))} p(u \mid \nuisancetrue_i) \dd\lambda(u) \\
&= \int_C \frac{1}{n} \frac{ \sum_{i=1}^n p(u \mid \nuisancetrue_i)}{f(u; G(\nuisancevectortrue))} \dd\lambda(u)\\
 &= \lambda(C) < \infty.
\end{aligned}
$$
Note that the above argument constitutes the main difference of this proof as compared to the proof of Theorem~\ref{theo:calibration_n_to_infty_empirical_bayes}. Now, returning to an argumentation line similar to that of Theorem~\ref{theo:calibration_n_to_infty_empirical_bayes}, we can establish the following bound which is analogous to \eqref{eq:main_bound_cpvalue_difference}:

\begin{equation*}
\begin{aligned}
&\frac{1}{n}\sum_{i \in \Hnull}\EE[\nuisancevectortrue]{ \abs{\cpvaluefun_i(T_i, (U_1,\dotsc,U_n);\, \Pi)- \Pvalfunoracle(T_i, U_i, G(\nuisancevectortrue))} } \leq 6\eta \lambda(C) + \varepsilon.
\end{aligned}
\end{equation*}
The above holds for all $n \geq n_0(\eta)$. Now taking $n \to \infty$, then $\eta \to 0$, and finally $\varepsilon \to 0$, we obtain \eqref{eq:l1_consistency_compound}.
We can finally apply Lemma~\ref{lemm:l1_to_uniformity}(b) with $P_i^K = \cpvalue_i$ and $Q_i^K = P_i^C$. Since $P_1^C,\ldots,P_n^C$ are compound p-values (by Proposition~\ref{prop:compound_pvalues}), the conclusion of the theorem follows.
\end{proof}

\section{Some remarks on P\'olya trees}
\label{sec:polya_trees_details}
\subsection{Conjugate updates for P\'olya trees}
\label{subsec:conjugate_updates_polya_trees}

We first briefly recall a well-known fact about the conjugacy of P\'olya trees. Suppose
$$Z_1,\dotsc,Z_m \simiid G,\;\; G \sim \mathrm{PT}(\mathcal{A}, G_0, J).$$
Then the posterior distribution of $G$ given $Z_1,\dotsc,Z_m$ is again a {P\'olya} Tree,
$$G  \mid Z_1,\dotsc,Z_m  \sim \mathrm{PT}(\mathcal{A}(Z_1,\dotsc,Z_m), G_0, J),$$
where we define $\mathcal{A}(Z_1,\dotsc,Z_m)$ entrywise as
$$\alpha(j,\ell; Z_1,\dotsc,Z_m) := \alpha(j, \ell) + \#\cb{i: k_j(Z_i) = \ell }.$$

\subsection{Symmetrized P\'olya trees}
\label{subsec:conjugate_updates_symmetrized_polya_trees}

Here we explain how to perform conjugate updates for the symmetrized P\'olya tree prior $\mathrm{SymmPT}(\mathcal{A}_0, G_0^W, J)$. Recall that we defined this prior through the following two-step generative process for drawing $W \sim \mathrm{SymmPT}(\mathcal{A}_0, G_0^W, J)$:
\begin{enumerate}[noitemsep,leftmargin=*]
\item Draw $\widetilde{W} \sim \mathrm{PT}(\mathcal{A}_0, G_0^W, J)$.
\item Set $W(A) = \{\widetilde{W}(A) + \widetilde{W}(-A)\}/2$ for all measurable sets $A$.
\end{enumerate}
For what follows, we will assume that $G_0^W$ is a distribution supported on $\RR_{\geq 0}$. Note that in this caser, \smash{$\widetilde{W}$} is also supported on $\RR_{\geq 0}$ almost surely. Consequently, there is a bijection between \smash{$\widetilde{W}$} and $W$. 

Suppose we have $m$ iid samples:
$$
Z_1,\dotsc,Z_m \simiid W,\;\; W \sim \mathrm{SymmPT}(\mathcal{A}_0, G_0^W, J).
$$
Now write $Z_i = |Z_i| \varepsilon_i$ where $\varepsilon_i \in \{-1,1\}$ is a Rademacher random variable indicating the sign of $Z_i$ (and is a coin flip when $Z_i=0$). Then we see that,
$$
|Z_1|,\dotsc,|Z_m| \simiid \widetilde{W},\;\; \widetilde{W} \sim \mathrm{PT}(\mathcal{A}_0, G_0^W, J),
$$
and moreover, $(\varepsilon_1,\dotsc,\varepsilon_m)$ are independent of $\widetilde{W}$ conditional on $(|Z_1|,\dotsc,|Z_m|)$.
The above imply that we can compute the posterior of $W$ through the following two steps:
\begin{enumerate}[noitemsep,leftmargin=*]
\item First compute the posterior of $\widetilde{W}$ given $|Z_1|,\dotsc,|Z_m|$ using the conjugate update for P\'olya trees recalled in Supplement~\ref{subsec:conjugate_updates_polya_trees}.
\item Then, the posterior of $W$ is obtained by symmetrizing the posterior of $\widetilde{W}$.
\end{enumerate}

\section{Further details on computation}

\subsection{MCMC for normal means with unknown and varying variance}
\label{subsec:mcmc_normal_means}

We first describe our approach for computing partially Bayes p-values in the setting of Section~\ref{sec:normal_means}. Therein, posterior computation is very standard and we can employ well-known MCMC algorithms for Dirichlet process mixture models. 
In our implementation, we use a conjugate inverse-gamma base distribution $G_0 = \text{InvScaledChiSq}(\nu_0, \sigma_0^2)$ in~\eqref{eq:DP}, and so we can use a Gibbs sampler based on Neal's Algorithm 2~\citep{neal2000markov}.

In Section~\ref{subsec:computation_normal_means} we already previewed the state variables at the $b$-th iteration of the MCMC algorithm:
\begin{itemize}[noitemsep,leftmargin=*]
\item cluster assignments $c_1^{(b)}, \dotsc, c_n^{(b)}$ where $c_i^{(b)} \in \{1,\dotsc,K^{(b)}\}$ and $K^{(b)}$ is the number of occupied clusters at iteration $b$;
\item cluster variance parameters $\sigma_1^{2(b)}, \dotsc, \sigma_{K^{(b)}}^{2(b)}$;
\item concentration parameter $c^{(b)}$ of the Dirichlet process.
\end{itemize}
For the initialization $(b=0)$, we set $K^{(0)} = 1$, $c_i^{(0)} = 1$ for all $i$, $\sigma_1^{2(0)}=1$.
At each iteration, we update these variables in the following order (we omit the superscript $(b)$ for notational clarity):
\begin{enumerate}[leftmargin=*, wide]
\item \textbf{Cluster assignments.} For each $i = 1,\dotsc,n$ we proceed as follows. Let $n_{k,-i}$ denote the current number of observations in cluster $k$ excluding observation $i$. We compute cluster assignment probabilities as follows.
\begin{itemize}
\item (Existing clusters) For all $k \in \{1,...,K\}$, let $ \pi_k := n_{k,-i} \cdot p(S_i^2 \mid \sigma_k^2)$. 
\item (New cluster) Let $\pi_{K+1} := c \cdot \int p(S_i^2 \mid \sigma^2) G_0(\dd \sigma^2)$.
\end{itemize}
Then, renormalize the probabilities $\pi_k$ to sum to $1$, and sample $c_i$ from the resulting categorical distribution.\footnote{Our notation here omits bookkeeping of removing empty clusters.} Note that the marginal likelihood integral can be computed analytically due to conjugacy.
\item \textbf{Cluster parameters.} Consider the $k$-th occupied cluster. This cluster has $n_k$ assigned observations $\{S_i^2: c_i = k\}$. Since the base distribution is of the form $G_0 = \mathrm{inv}\chi^2(\hat{\nu}_0, \hat{\sigma}_0^2)$, the posterior distribution of $\sigma_k^2$ is equal to
$$ \mathrm{inv}\chi^2\left(\hat{\nu}_0 + n_k(K-1),\; \frac{\hat{\nu}_0\hat{\sigma}_0^2 + (K-1)\sum_{i: c_i=k} S_i^2}{\hat{\nu}_0 + n_k(K-1)}\right).$$
We then sample $\sigma_k^2$ from this posterior distribution.
\item \textbf{Concentration parameter.} We update $c$ using the auxiliary variable method of~\citet{escobar1995bayesian}. Recall that the prior distribution of $c$ is set as $c \sim \text{Gamma}(a,b)$, where $a = 0.001$ and $b = 100$. The update proceeds as follows.
We first sample $\eta \sim \text{Beta}(c+1, n)$, where $c$ is the current value of the concentration parameter. 
Then let $k^*$ be the current number of occupied (that is, non-empty) clusters, $b^* = (1/b - \log \eta)^{-1}$, and $w^*= (a+k^*-1)/(a+k^*-1+n/b^*)$. Finally, we sample a new $c$ from the following two-component mixture:
$$c \sim w^* \cdot \text{Gamma}(a+k^*, b^*) + (1-w^*) \cdot \text{Gamma}(a+k^*-1, b^*).$$
\end{enumerate}
We always use $5,000$ burn-in iterations. For the real data applications, we run another $100,000$ iterations from which we collect samples, while in the simulation study we run $10,000$ iterations.

\subsection{MCMC for location problems with unknown shape and scale}
\label{subsec:mcmc_polya}
We first recall the state variables that we maintain at iteration $b$ (already discussed in Section~\ref{subsec:computation_unknown_shape}).
\begin{itemize}[noitemsep,leftmargin=*]
\item cluster assignments $c_1^{(b)}, \dotsc, c_n^{(b)}$ where $c_i^{(b)} \in \{1,\dotsc,K^{(b)}\}$ and $K^{(b)}$ is the number of occupied clusters at iteration $b$;
\item cluster variance parameters $\sigma_1^{2(b)}, \dotsc, \sigma_{K^{(b)}}^{2(b)}$;
\item concentration parameter $c^{(b)}$ of the Dirichlet process;
\item symmetrized P\'olya tree realization $W^{(b)}$ from $\mathrm{SymmPT}(\mathcal{A}, G_0^W, J)$;
\item null imputed test-statistics $\bar{Z}_1^{(b)}, \dotsc, \bar{Z}_n^{(b)}$ and datasets $\dataset_i^{(b)} = \{Z_{i1}^{(b)}, \dotsc, Z_{iK}^{(b)}\}$, where $Z_{ij}^{(b)} = U_{ij} +  \bar{Z}_i^{(b)}$. 
\end{itemize}
For initialization we proceed as follows. We first run the algorithm of Section~\ref{sec:normal_means}/Supplement~\ref{subsec:mcmc_normal_means} (that models the shape of the noise distribution as normal) until completion. We then set \smash{$K^{(0)}$} to the number of occupied clusters at the end of that run, \smash{$c_i^{(0)}$} to the cluster assignments at the end of that run, and \smash{$\sigma_k^{2(0)}$} to the cluster variances at the end of that run. Similarly, we set \smash{$c^{(0)}$} to the concentration parameter at the end of that run.
Finally, we set $\bar{Z}_i^{(0)} = 0$ for all $i$ and initialize $W^{(0)}$ from its prior distribution.

In what follows, it will be convenient to also define the standardized distribution $\mathring{W}{(b)}$ which is obtained by standardizing $W^{(b)}$ to have variance $1$, that is,
\begin{equation}
\mathring{W}^{(b)}(\cdot) = W^{(b)}\p{\,\cdot\, 
\Bigg / \sqrt{ \int u^2 W^{(b)}(\dd u)}}.
\label{eq:standardized_polya_tree}
\end{equation}
and we denote its density by $\mathring{w}^{(b)}$. This step can be computed efficiently, see Supplement~\ref{sec:variance_efficient_computation} below for details.

The algorithm we propose takes the form of Gibbs sampling with several nested Metropolis-Hastings (MH) steps.
At each iteration, we update our variables in the following order and as described below.
\begin{enumerate}[leftmargin=*, wide]
\item \textbf{Cluster assignment.} For each $i = 1,\dotsc,n$, we update $c_i^{(b)}$ using Neal's Algorithm 8~\citep{neal2000markov} with auxiliary variables. The reason we use Algorithm 8 instead of Algorithm 2 (as in Supplement~\ref{subsec:mcmc_normal_means}) is that here the base distribution $G_0^W$ is not conjugate to the likelihood induced by the realized P\'olya tree.
Neal's Algorithm 8 introduces auxiliary parameters to handle the non-conjugacy.

Now fix $i \in \{1,\dotsc,n\}$. We seek to update the assignment of observation $i$. 
Algorithm 8 maintains auxiliary parameters $\phi_1, \dotsc, \phi_m$ (with $m = 10$ in our implementation) that are refreshed at each iteration, however, we omit this from our notation. These parameters are generated as follows:
$$
\phi_2,\dotsc,\phi_{m} \simiid G_0^W.
$$
Meanwhile, $\phi_1$ is set as follows. If the current cluster assignment of observation $i$ only contains observation $i$ itself, then set $\phi_1 = \sigma_{c_i}^{2(b)}$. Otherwise, indendently sample $\phi_1 \sim G_0^W$. Next, compute assignment probabilities proportional to:
\begin{itemize}
\item For auxiliary parameter $j \in \{1,\dotsc,m\}$: $\;\;\frac{c^{(b)}}{m} \phi_j^{-K/2}\displaystyle\prod_{j=1}^K \mathring{w}^{(b)}(Z_{ij}^{(b)}/ \sqrt{\phi_j}).$
\item For existing cluster $k$: $\;\;n_{k,-i}^{(b)}  \p{\sigma_k^{(b)}}^{-K}\displaystyle\prod_{j=1}^K \mathring{w}^{(b)}(Z_{ij}^{(b)}/ \sigma_k^{(b)})$.
\end{itemize}
Above, \smash{$n_{k,-i}^{(b)}$} is the number of observations in cluster $k$ excluding observation $i$.
The new cluster assignment for the $i$-th observation is then sampled from the resulting categorical distribution.
If observation $i$ is assigned to auxiliary parameter $j$, then
a new cluster is created with parameter \smash{$\sigma_{\text{new}}^{2(b)} = \phi_j^{(b)}$}. 
\item \textbf{Cluster parameters.} Consider the $k$-th occupied cluster. We update $\sigma_k^{2(b)}$ conditional on the imputed datasets and the current P\'olya tree realization with Metropolis-Hastings. We generate our proposal as follows. First, we draw a sample from the posterior that would arise if the noise were normal, i.e., precisely as described in Step 2.\ of Supplement~\ref{subsec:mcmc_normal_means}. Then we further multiply this by an independent $\chi^2_5/5$ variate to add extra variability. 
We take $3$ MH steps for each \smash{$\sigma_k^{2(b)}$} update.

\item \textbf{Null-imputed test statistics.} For each configuration sample in cluster $k$, use Metropolis-Hastings to impute \smash{$\bar{Z}_i^{(b)}$} under the null hypothesis $\mu_i = 0$ given $U_i$, \smash{$\tilde{\sigma}_k^{2(b)}$}, and \smash{$W^{(b)}$}. We have that,
$$
p(\bar{z}_i \mid \mu_i=0, U_i, \tilde{\sigma}_k^{2(b)}, W^{(b)}, c_i^{(b)}=k) \propto \prod_{j=1}^K \mathring{w}^{(b)}\p{\frac{U_{ij} + \bar{z}_i}{\sigma_k^{(b)}}}.
$$
Meanwhile, we generate our proposal from the t distribution with 5 degrees of freedom scaled by \smash{$\sigma_k^{(b)}/\sqrt{K}$}.\footnote{The motivation is that, under normality of $W_i^{(b)}$, the conditional distribution of \smash{$\bar{Z}_i^{(b)}$} would be precisely \smash{$\mathrm{N}(0, \sigma_k^{2(b)}/K)$}. Our proposal distribution is a more spread out and heavy-tailed version of this, replacing the normal with a t distribution with 5 degrees of freedom.} We take $3$ MH steps for each \smash{$\bar{Z}_i^{(b)}$} update.

\item \textbf{Concentration parameter.} We can update $c^{(b)}$ using the auxiliary variable method of~\citet{escobar1995bayesian}, exactly as described in Supplement~\ref{subsec:mcmc_normal_means}.

\item \textbf{Symmetrized P\'olya tree.} Define $m=n \cdot K$ observations as follows: 
$$\tilde{Z}_{ij}^{(b)} = Z_{ij}^{(b)}\cdot \frac{ \sqrt{ \int u^2 W^{(b)}(\dd u)}}{\sigma_{c_i^{(b)}}^{(b)}},\;\;i=1,\dotsc,n,\;j=1,\dotsc,K.$$
Then we conduct the conjugate update of the symmetrized P\'olya tree given the data \smash{$\{\tilde{Z}_{ij}^{(b)}: i=1,\dotsc,n, j=1,\dotsc,K\}$} as described in Supplement~\ref{subsec:conjugate_updates_symmetrized_polya_trees}.
\end{enumerate}

After the initialization, we use further $2,000$ burn-in iterations. Afterwards, for the real data applications, we run $100,000$ iterations, while in the simulation study, we run $10,000$ iterations.

\subsubsection{Computation of variance for truncated symmetrized P\'olya trees}
\label{sec:variance_efficient_computation}

Let $W$ be a truncated (symmetrized) P\'olya tree with base distribution $G_0^W$. Throughout our MCMC algorithm, we need to compute the second moment of $W$ multiple times, see~\eqref{eq:standardized_polya_tree}. Thus it is important to do this efficiently.

The basic idea is as follows. Write $w(\cdot)$ for the density of $W$ and $g_0^W(\cdot)$ for the density of $G_0^W$. Also write $x_0,\ldots,x_{2^J+1}$ for the $0,2^{-J},\ldots, 1$-quantiles of $G_0^W$. In our applications, typically, $x_0=0$ and $x_{2^J+1} = \infty$ (e.g., this is the case for $G_0^W = |\mathring{t}_8|$.)

We can then write $w(\cdot)$ in the following form:
$$w(x) = \sum_{\ell=1}^{2^J+1} \ind\cb{x \in [x_{\ell-1}, x_\ell)} p_{\ell} g_0^W(x),$$
for some numbers $p_{\ell} \geq 0$ with $\sum_{\ell} p_{\ell} = 2^J$ that we can compute by traversing the binary splits of the P\'olya tree.
This representation implies that,
$$
\int x^2 W(\dd x) = \sum_{\ell=1}^{2^J+1} p_{\ell} \int_{x_{\ell-1}}^{x_\ell} x^2 g_0^W(x) dx.
$$
The upshot is that we can compute the latter integrals analytically for appropriate choices of $G_0^W$. 

As an example, let $f_{t,8}(\cdot)$ be the density of the t distribution with $8$ degrees of freedom. Then,
$$
\int_a^b x^2 f_{t,8}(x) \dd x =
U(b)- U(a),\;\;U(t) := \frac{2 t^3 (t^4 + 28 t^2 + 280)}{3 (t^2 + 8)^{7/2}},\;\; U(\pm \infty) := \pm 2/3.
$$
From the above, we can directly compute \smash{$\int_{x_{\ell-1}}^{x_\ell} x^2 g_0^W(x) dx$} when $G_0^w$ is the $|\mathring{t}_8|$ distribution (using the fact that \smash{$|\mathring{t}_8|$} is the folded density of the standardized $t_8$ distribution).

\end{document}